\newif\ifnormopen\normopenfalse
\patchcmd{\SetProgSty}{ArgSty}{ProgSty}{}{}
\newcommand{\defn}[1]{\textbf{\emph{#1}}}
\newcommand{\prob}[1]{\Pr\left[#1\right]}
\DeclareMathOperator*{\E}{\mathbb{E}}
\newcommand{\ceiling}[1]{\left\lceil #1 \right\rceil}
\newcommand{\floor}[1]{\left\lfloor #1 \right\rfloor}
\newcommand{\norm}[1]{\lVert{#1}\rVert}
\newcommand{\random}{\textnormal{\textsc{Random}}\xspace}
\newcommand{\cluster}{\textnormal{\textsc{Cluster}}\xspace}
\newcommand{\bins}{\textnormal{\textsc{Bins}}\xspace}
\newcommand{\binsstar}{\textnormal{\textsc{Bins$^*$}}\xspace}
\newcommand{\clusterstar}{\textnormal{\textsc{Cluster$^*$}}\xspace}
\newcommand{\calA}{\mathcal{A}}
\newcommand{\B}{\mathcal{B}}
\newcommand{\calD}{\mathcal{D}}
\newcommand{\calE}{\mathcal{E}}
\newcommand{\calC}{\mathcal{C}}
\newcommand{\calU}{\mathcal{U}}
\newcommand{\clus}{\lambda}
\newcommand\numberthis{\addtocounter{equation}{1}\tag{\theequation}}
\newtheorem{theorem}{Theorem}
\newtheorem{lemma}[theorem]{Lemma}
\newtheorem{corollary}[theorem]{Corollary}
\newtheorem{claim}{Claim}
\def\restateThm#1#2{
    \marginpar{%
        \vspace{\baselineskip}%
        \centering%
        \footnotesize%
        \color{gray}%
        Originally stated on~page~\pageref{#1}%
    }%
    #2*
}
\def\refrel#1#2#3{\stackrel{\text{#2 \ref{#1}}}{#3}}
\def\eqrefrel#1#2#3{\stackrel{\text{#2(\ref{#1})}}{#3}}
\def\reasonrel#1#2{\stackrel{\text{#1}}{#2}}
\def\e{\mathrm{e}}
\begin{document}

\title{Optimal Uncoordinated Unique IDs}
\date{}

\author{
Peter C. Dillinger\thanks{Meta Platforms, Inc. \href{mailto:peterd@meta.com}{\texttt{peterd@meta.com}}}
\and
Mart\'{i}n Farach-Colton \thanks{Rutgers University. \href{mailto:martin@farach-colton.com}{\texttt{martin@farach-colton.com}}}
\and
Guido Tagliavini \thanks{Rutgers University. \href{mailto:guido.tag@rutgers.edu}{\texttt{guido.tag@rutgers.edu}}}
\and
Stefan Walzer \thanks{University of Cologne. \href{mailto:walzer@cs.uni-koeln.de}{\texttt{walzer@cs.uni-koeln.de}}}
}

\maketitle
\begin{abstract}
    In the \defn{Uncoordinated Unique Identifiers Problem} (UUIDP) there are $n$ independent instances of an algorithm $\calA$ that generates IDs from a universe $\{1, \dots, m\}$, and there is an adversary that requests IDs from these instances. The goal is to design $\calA$ such that it minimizes the probability that the same ID is ever generated twice across all instances, that is, minimizes the \defn{collision probability}. Crucially, no communication between the instances of $\calA$ is possible. Solutions to the UUIDP are often used as mechanisms for surrogate key generation in distributed databases and key-value stores. In spite of its practical relevance, we know of no prior theoretical work on the UUIDP.
    
    In this paper we initiate the systematic study of the UUIDP. We analyze both existing and novel algorithms for this problem, and evaluate their collision probability using worst-case analysis and competitive analysis, against oblivious and adaptive adversaries. In particular, we present an algorithm that is optimal in the \emph{worst case} against \emph{oblivious} adversaries, an algorithm that is at most a logarithmic factor away from optimal in the \emph{worst case} against \emph{adaptive} adversaries, and an algorithm that is optimal in the \emph{competitive} sense against both \emph{oblivious} and \emph{adaptive} adversaries.
\end{abstract}
\section{Introduction}

Unique identifiers (IDs) for data are essential to efficient data processing, storage, and retrieval. When natural keys for data do not exist or are impractical, surrogate keys are assigned and used as primary keys. In a distributed context, database systems including Cassandra, Microsoft's Transact-SQL, MongoDB, MySQL, Postgres and RocksDB ~\cite{MongoDBObjectID,MySQLUUID,PostgreSQLUUID,CassandraUUID,TransactSQLUUID,Dong2017RocksDBSpaceAmplification,Cao2020RocksDBWorkloads} generate unique surrogate keys \emph{without coordination} between nodes or use of a central authority. 

In the case of RocksDB, IDs are assigned to data files (also known as \textit{SSTs}) and data blocks within each file, for caching purposes. Although RocksDB is a single-instance key-value store, its users (e.g., Microsoft Bing's web data platform~\cite{Bing}, MyRocks~\cite{MyRocks} and ZippyDB~\cite{ZippyDB}) typically run multiple instances of it and distribute these instances across multiple nodes. Because data are moved between instances (e.g., to balance the load across the nodes), RocksDB's IDs have to be collision-free over \emph{all} instances, even though these instances are unaware of one another.

There are several benefits to generating IDs without coordination. First, coordination mechanisms are complex, and thus have significant engineering and operational costs. Second, implementations of coordination are often based on features such as MAC addresses (which are subject to spoofing) and clocks (which can become skewed), rendering them notoriously brittle. Finally, coordination in RocksDB is undesirable from a software design point of view, as it would force RocksDB instances to take responsibility for network awareness and communication.

Generating unique IDs without coordination may seem like fighting a losing battle, since communication is necessary to guarantee ID uniqueness. What we can aim for, however, is to generate IDs randomly, such that the probability that the same ID is ever generated by different instances is near zero.

Formally, in the \defn{Uncoordinated Unique Identifiers Problem} (UUIDP) an \defn{ID-generation algorithm} $\calA$ plays a game against an adversary. The game proceeds in steps, and at each step the adversary requests an ID from one of $n$ independent \defn{instances} of $\calA$. When an instance of $\calA$ receives a request, it must output a value from the space $[m] := \{1, \dots, m\}$ of IDs. Importantly, the instances cannot communicate with each other, and they do not know $n$ or the number of IDs that have been requested from other instances. \nocite{D2} The algorithm loses the game if the adversary causes a \defn{collision}, that is, if some ID is generated twice. The algorithm wins if by the time the game ends (for instance, after a fixed number of steps has been reached) a collision has not occurred. The goal is to design $\calA$ such that the probability that $\calA$ wins, called the \defn{collision probability}, is as small as possible.

Beyond database systems, uncoordinated generation of unique IDs is used in settings such as identification of network connections~\cite{nmcli}, identification of disk partitions~\cite{blkid}, and object identification in video games~\cite{Minecraft}. Such is the ubiquity of this problem that there exists a standardized solution to it~\cite{Leach2005RFC,ITU2004Recommendation}, known as \textit{globally unique identifiers} (GUIDs), and most programming languages have library implementations of GUIDs~\cite{GoUUID,BoostUUID,LinuxUUID,PythonUUID,JavaUUID,CSharpUUID}.

GUIDs are generated using some combination of a randomly generated integer and identifying metadata (e.g., creation timestamp or MAC address), with the particular combination depending on the GUID variant. This, however, presumes reliable metadata, which is often times impossible to ensure in practice---an adversary may tamper with the metadata to cause a collision. For this reason, we do not model metadata in the UUIDP. The the random part of GUIDs is modeled as the following simple algorithm for the UUIDP, that we call \random: every time an ID is requested, sample an integer from $[m]$ without replacement.

When a total of $d$ IDs are requested, the collision probability of \random is $O(d^2/m)$. In practice, this means that \random should only be used when far less than $\sqrt{m}$ IDs are needed. Concretely, GUIDs are $128$-bit IDs, so they cannot safely handle more than $\sqrt{m} = 2^{64}$ IDs. Unfortunately, with consolidation of cloud computing services and some companies already operating at exabyte scales (i.e., more than $2^{60}$ bytes), we are not so far from a world where we will have enough data objects in a single pool to observe collisions in \random with $128$-bit IDs. Thus, \random is quickly becoming inadequate for large-scale deployments, unless longer IDs are used.

To guarantee collision-freedom of uncoordinated $128$-bit IDs in the long-term future, RocksDB recently started using a different algorithm~\cite{PeterPR1,PeterPR2}, that we call \cluster.\footnote{Notably, MongoDB also uses a variant of \cluster to identify records~\cite{MongoDBObjectID}.} This algorithm picks a random integer $x \in [m]$, and generates IDs sequentially, starting at $x$. Although, experimental data showed that the collision probability of \cluster is significantly lower than that of \random~\cite{PeterPR2}, it is not known exactly how much better \cluster is, or whether there exists an algorithm that is even better than \cluster.

To the best of our knowledge, no other algorithms for the UUIDP are known. In spite of the widespread use of uncoordinated ID-generation algorithms, there has been no theoretical work on this problem. The overly simple setup of the UUIDP may give the false impression that not much can be said about it. In this paper we show that there is more than meets the eye.

\subsection{Our results}

We study the UUIDP in four different settings: using worst-case analysis or competitive analysis; and against an adversary that is oblivious or adaptive.

In the \emph{worst-case} analysis of an algorithm $\calA$, we are interested in minimizing the maximum collision probability of $\calA$, over all adversaries that request at most $d$ IDs in total. (Notice that we must bound the number of requests to make the worst case analysis interesting; otherwise the adversary can drive the collision probability up to $1$.)
In the \emph{competitive} analysis of $\calA$, we are interested, roughly speaking, in minimizing the ratio between the collision probability of $\calA$ and the best possible collision probability. We define these notions more formally in \Cref{sec:model}.

An \emph{oblivious} adversary chooses how many IDs it will request to each one of the $n$ instances before the game begins. The more powerful \emph{adaptive} adversary decides from which instance to request an ID on a step-by-step basis, based on the IDs that were produced previously.

We prove the following results, which we will formally state in \cref{sec:overview}:
\begin{itemize}
    \item There is an algorithm, namely \cluster, that has \emph{worst-case} collision probability $O(nd/m)$ against \emph{oblivious} adversaries that request at most $d$ IDs.
    \item The \emph{worst-case} collision probability of every algorithm against \emph{oblivious} adversaries is $\Omega(nd/m)$.
    \item There is an algorithm that has \emph{worst-case} collision probability $O((nd/m)\log(1 + d/n))$ against \emph{adaptive} adversaries.
    \item There is an algorithm that has \emph{competitive ratio} $O(\log m)$ against \emph{adaptive} adversaries.
    \item The \emph{competitive ratio} of every algorithm against \emph{oblivious} adversaries is $\Omega(\log m)$.
\end{itemize}
\noindent
Naturally, lower bounds against oblivious adversaries also hold against adaptive adversaries, and upper bounds against adaptive adversaries also hold against oblivious adversaries.

Our theorems imply that \cluster is an effective uncoordinated ID-generation mechanism at scales that are orders of magnitude beyond \random's capacity. The $O(nd/m)$ collision probability of \cluster allows it to handle, for example, collections of $d \gg 2^{64}$ objects, using IDs of $128$ bits or even less. The matching collision probability lower bound establishes that no other algorithm can do significantly better.

In the adaptive setting, \cluster is far from optimal (and so is \random). Our new algorithms for the adaptive case are either optimal or almost optimal, and may be of interest to database researchers and practitioners working on systems with stringent security requirements.

\subsection{Related work}

A well-studied problem that deals with ID assignments is the \textit{renaming problem}~\cite{Attiya1990Renaming,Alistarh2011Renaming}. In the renaming problem, there is a distributed system with $n$ processors that communicate asynchronously, and can fail. Each processor starts out with a name (i.e., an ID) from an unbounded space. The goal is to design a distributed algorithm that maps the names into a namespace that is as small as possible, such that no processor is assigned the same name, and using as little communication as possible.

The renaming problem departs from the UUIDP in two crucial ways: communication is allowed, and machines only need to produce a single unique name. In the UUIDP, the case where exactly $1$ ID is requested to each instance is equivalent to a birthday problem~\cite{vonMises1964Selected, McKinney1966BirthdayProblem}.

The existing work on generating unique identifiers in an uncoordinated setting focuses on practical aspects of this problem, like compressing GUIDs~\cite{Lutteroth2008GUIDs}, designing identifiers with security guarantees~\cite{Schaffer2007UUIDs}, and designing identifiers for IoT platforms~\cite{Aftab2020IoTIDs} and mobile environments~\cite{Jesus2006UniqueIDs}. To the best of our knowledge, there is no theoretical work on a problem with similar characteristics to the UUIDP.

\subsection{Organization}

In \Cref{sec:model} we present the mathematical model of the UUIDP, as well as some preliminary definitions that we use in the rest of the paper. In \Cref{sec:overview} we give a full overview of our results and their motivation, including all the algorithms and main theorems. That section presents a complete picture of our work, only leaving out the proofs. The remainder of the paper contains the proofs.
\section{Model and Definitions}
\label{sec:model}

\def\A{\mathcal{A}}
\def\D{\mathcal{D}}
\def\Adv{\mathrm{Adv}}

We fix the size $m \in \mathbb{N}$ of the universe $[m] := \{1,…,m\}$ of IDs. An \defn{ID-generation algorithm} $\A$ generates the IDs from $[m]$ in some randomized order. Equivalently, we can view $\A$ as a distribution on the set of permutations of $[m]$.

Consider a setting with $n$ \defn{instances} of $\A$ that use independent randomness and that do not communicate. A vector $D = (d_1, \dots, d_n) \in [m]^n$, called the \defn{demand profile}, indicates that $d_i$ IDs are requested from the $i$th instance. These requests are made one by one, such that the instance does not know in advance how many IDs it will have to produce. We say a \defn{collision} occurs when the sets of IDs produced by the $n$ instances are \emph{not} pairwise disjoint. The probability that a collision occurs is denoted by $p_\A(D)$.

When $n \geq 2$, we have $p_{\A}(D) > 0$ for any $\A$ and $D$. (This follows from the fact that already for $D = (1, 1)$ we have $p_{\A}(D) > 0$; see \Cref{cor:uniform-lower-bound}.) In other words, collisions cannot be avoided with certainty for $n \geq 2$. Of course, if $n = 1$ we have $p_{\A}(D) = 0$; we call these \defn{trivial} demand profiles.

In our analysis we restrict our attention to natural sets of demand profiles where the number $n$ of instances, the total number $d := ‖D‖₁ = \sum_{i = 1}^n d_i$ of requests or the maximum number $h$ of requests per instance are restricted, such as
\begin{itemize}
    \item $\D_1(n, d) := \{D ∈[m]^n \mid ‖D‖₁ = d\}$,
    \item $\D_1(d) := \bigcup_{i ≥ 2} \D(i, d)$,
    \item $\D_{\infty}^{\leq}(n, h) := \bigcup_{2 \leq i \leq n} [h]^{i}$, and
    \item $\D_{\infty}^{\leq}(h) := \bigcup_{i ≥ 2} [h]^{i}$.
\end{itemize}
$\D_1(n, d)$ and $\D_1(d)$ contain demand profiles with $L^1$-norm \emph{equal to} $d$, whereas $\D_{\infty}^{\leq}(n, h)$ and $\D_{\infty}^{\leq}(h)$ contain demand profiles with $L^{\infty}$-norm \emph{at most} $h$. We will also use $\D_{\infty}^<(h)$ when the $L^{\infty}$-norm is strictly less than $h$.

\paragraph{Worst-case vs. competitive analysis.}
How should we evaluate the performance of an ID generation algorithm $\A$ on a set $\D$ of demand profiles? The simplest option is to consider the \defn{worst-case} collision probability, namely
\[
\max_{D ∈ \D} p_\A(D).
\]
A natural alternative compares for each $D∈\D$ the actual collision probability $p_\A(D)$ to the best possible collision probability for $D$, namely to
\[p_*(D) := \min_{\A'} p_{\A'}(D).%
\footnote{The minimum is well-defined because it is taken over a compact space of distributions over permutations, and $\A' \mapsto p_{\A'}(D)$ is continuous.}{}^,%
\footnote{Our competitive analysis does not compare against an offline setting where the $i$th instance of $\A$ knows in advance how many requests $d_i$ it will receive. This would make $\A$ overly powerful, as it could sometimes avoid collisions with certainty (e.g., when the $d_i$'s are all different).}
\]
When $D$ is a non-trivial demand profile, we have $p_*(D) > 0$. The \defn{competitive ratio} of $\A$ for any non-trivial $D$ is defined as
\[
\frac{p_\A(D)}{p_*(D)}.
\]
The competitive ratio of $\A$ for a set $\D$ of non-trivial demand profiles is then defined as the maximum competitive ratio taken over all $D \in \D$.

\paragraph{Oblivious vs. adaptive adversaries.}

When $D$ is fixed ahead of time, we say that it is produced by an \defn{oblivious adversary}. The collision probability in this case is $p_{\A}(D)$.

We also consider the case where the demand profile $D$ is produced by an \defn{adaptive adversary} $Z$ on the fly. In this case, the total number $n$ of instances of $\A$ that are probed is not known in advance.
Initially, $D = ()$ is empty. When $D = (d_1, \dots, d_i)$, the adversary has three options:
\begin{itemize}
    \item set $D \gets (d_1, \dots, d_i, 1)$, effectively activating a previously dormant instance of $\A$ from which an ID was never requested;
    \item set $D \gets (d_1, \dots, d_j + 1, \dots, d_{i})$ for some $j \in [i]$, effectively requesting another ID from the $j$th instance of $\A$;
    \item stop the game, making $D$ the \defn{final} demand profile.
\end{itemize}
Crucially, at each step the adaptive adversary learns the ID that is produced and his future decisions may depend on this knowledge. The collision probability for this game is denoted by $p_\A(Z)$. Note that the oblivious setting can be thought of as a special case where $Z$ ignores the IDs produced by the instances of $\A$.

Importantly, the adversary knows the algorithm $\A$ it will be playing against. Thus, once $\A$ is fixed, the adversary can tailor its strategy to defeat $\A$.

When evaluating the performance of $\A$ on a set of demand profiles $\D$ in the adaptive setting, we consider the set $\Adv(\D)$ of adversaries that always stop the game eventually, and do so with a final demand profile contained in $\D$. We can again choose between a worst-case analysis that ponders
\[
\max_{Z ∈ \Adv(\D)} p_\A(Z),
\]
and a competitive analysis that considers the competitive ratio of $\A$ for any $Z$ that produces non-trivial demand profiles, defined as
\[\frac{p_\A(Z)}{\displaystyle\E_{D \sim Z}[p_*(D)]},\]
and evaluates the maximum competitive ratio over $Z \in \Adv(\D)$. 
Here we write, in a slight abuse of notation, $D \sim Z$ for the final demand profile $D$ arising randomly from the interaction of $Z$ and~$\A$.

In \cref{tab:evaluation-settings} we summarize the four settings $\{$worst-case, competitive$\} × \{$oblivious, adaptive$\}$ considered in this paper.

\begin{table}[bth]
    \centering
    \begin{tabular}{ccc}
        \toprule
        &worst-case & competitive\\
        \midrule
        oblivious & $\max\limits_{D ∈ \D} p_\A(D)$ & $\max\limits_{D ∈ \D} \frac{p_\A(D)}{p_*(D)}$\\
        adaptive & $\max\limits_{Z ∈ \Adv(\D)} p_\A(Z)$ & $\max\limits_{Z ∈ \Adv(\D)} \frac{p_\A(Z)}{\E_{D \sim Z}[p_*(D)]}$\\
        \bottomrule
    \end{tabular}
    \caption{Four settings for evaluating the performance of an ID generation algorithm $\A$ on a set $\D$ of demand profiles.}
    \label{tab:evaluation-settings}
\end{table}

\paragraph{$O$-notation}
In this paper, we use a non-asymptotic version of $O$-, $\Omega$- and $\Theta$-notation. Specifically, we say that $f(x_1, x_2, \dots) = O(g(x_1, x_2, \dots))$ if there exists a constant $c > 0$ such that $f(x_1, x_2, \dots) = c \cdot g(x_1, x_2, \dots)$ for all $x_1$, $x_2$, $\dots$. In other words, $O$ simply hides constant factors. $\Omega$ and $\Theta$ are defined similarly.
\section{Overview}
\label{sec:overview}

In this section we present the roadmap of the rest of the paper, including all the algorithms we analyze, as well as the main theorems.

\subsection{Basic Results on \random and \cluster}

The starting point of this work is an analysis and comparison of the collision probability of \random and \cluster for arbitrary demand profiles. These two algorithms are defined below. In each case we illustrate how the algorithm might behave. A sequence of $m$ squares represents the ID space $[m]$ and a number $i$ in a square indicates that the corresponding ID is the $i$th ID that is returned.

\tcbset{algorithm/.style={
	colback=black!10!white,left=2mm,right=2mm,width=280pt
}}
\tcbset{text/.style={
	colback=black!10!white,frame empty,top=0pt,bottom=0pt,left=0pt,right=0pt
}}
\tcbset{example/.style={
	colback=white,left=1mm,right=1mm,
	colbacktitle=black!50!white,hbox
}}

\def\exampleScale{0.35}
\def\makeSlots#1{
	\foreach \c[count=\i] in {#1}{
		\ifx\c\empty
			\draw (\i,0) rectangle (\i+1,1);
		\else
			\draw[fill=lightgray] (\i,0) rectangle (\i+1,1);
		\fi

		\node at (\i+0.5,0.5) {\c};
	}
}

\begin{center}
\begin{tcolorbox}[algorithm,title=Algorithm \random]
\begin{tcolorbox}[text]
Return the IDs from $[m]$ in a uniformly random order.
\end{tcolorbox}
\centering
	\begin{tcolorbox}[example,title={Example ($m = 20$, $8$ requests)}]
    	\begin{tikzpicture}[scale=\exampleScale]
    		\makeSlots{,,6,,,2,,,3,4,,,5, ,8,1,7,,,}
    	\end{tikzpicture}
     \end{tcolorbox}
\end{tcolorbox}
\end{center}

\begin{center}
\begin{tcolorbox}[algorithm,title=Algorithm \cluster]
\begin{tcolorbox}[text]
Pick $x ∈ [m]$ uniformly at random and return IDs in the order $x,x+1,x+2,…$, all modulo $m$.
\end{tcolorbox}
\centering
\begin{tcolorbox}[example,title={Example ($m = 20$, $8$ requests)}]
    	\begin{tikzpicture}[scale=\exampleScale]
    		\makeSlots{,,,,,,,,,1,2,3,4,5,6,7,8,,,}
    	\end{tikzpicture}
    \end{tcolorbox}
\end{tcolorbox}
\end{center}

The following theorem gives a tight asymptotic estimate of the collision probability of \cluster.

\begin{restatable}{theorem}{thmcluster}
\label{thm:cluster}
Suppose $n \geq 2$. Let $D ∈ [m]^n$. Then,
\[
p_{\cluster}(D) = \Theta\Big(\min\Big(1, \frac{n\norm{D}_1}{m}\Big)\Big).
\]
\end{restatable}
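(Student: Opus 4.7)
The plan is to reformulate \cluster geometrically: instance $i$ picks a uniform $x_i \in [m]$ and produces the contiguous arc $A_i = \{x_i, x_i+1, \dots, x_i+d_i-1\}$ (mod $m$) of length $d_i$ on the cycle $\mathbb{Z}/m\mathbb{Z}$, with the $x_i$'s independent across instances. A collision corresponds exactly to two arcs intersecting. Setting $X_{ij} = \indicator{A_i \cap A_j \ne \emptyset}$ and $X = \sum_{i<j} X_{ij}$, we have $p_{\cluster}(D) = \Pr[X \ge 1]$. A direct translation-invariance argument on the cycle yields $p_{ij} := \Pr[X_{ij}=1] = \min(1, (d_i + d_j - 1)/m)$.

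For the upper bound I would just apply the union bound, giving $p_{\cluster}(D) \le \min(1, \sum_{i<j} p_{ij}) = O(\min(1, n\norm{D}_1/m))$ via $\sum_{i<j}(d_i+d_j) = (n-1)\norm{D}_1$.

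For the matching lower bound, I would first show $\E[X] = \Omega(\min(1, n\norm{D}_1/m))$: using $\min(1, y) \ge y/2$ for $y \in [0,2]$ (which applies because $(d_i+d_j-1)/m \le 2$) together with $\norm{D}_1 \ge n$, an elementary calculation gives $\E[X] \ge n\norm{D}_1/(8m)$, so that $\min(1, \E[X]) = \Omega(\min(1, n\norm{D}_1/m))$. The key step is then to prove that the indicators $\{X_{ij}\}$ are \emph{pairwise independent}: for disjoint index pairs this is immediate, and for two indicators sharing an index, say $X_{ij}$ and $X_{il}$, I would condition on $x_i$ and observe that by translation invariance on the cycle, $\Pr[X_{ij}=1 \mid x_i] = p_{ij}$ regardless of $x_i$ (and likewise for $X_{il}$); combined with the conditional independence of $x_j$ and $x_l$ given $x_i$, this yields $\Pr[X_{ij} X_{il} = 1] = p_{ij} p_{il}$. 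Pairwise independence implies $\mathrm{Var}(X) \le \E[X]$, and the second moment method then gives
\[
\Pr[X \ge 1] \ge \frac{\E[X]^2}{\E[X^2]} = \frac{\E[X]}{1 + \E[X]} \ge \tfrac{1}{2}\min(1, \E[X]) = \Omega\!\left(\min\!\left(1, \frac{n \norm{D}_1}{m}\right)\right).
\]

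I expect the main obstacle to be spotting the pairwise-independence structure, which is what makes the second moment argument essentially lossless. Without it, one would have to run a more delicate second-order Bonferroni inclusion-exclusion together with a case split on whether some pair of arcs is long enough that $p_{ij}$ is already $\Omega(1)$; that alternative does go through, but requires separately tracking the ``overlapping-pair'' covariance terms and is noticeably messier.
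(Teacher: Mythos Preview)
Your proof is correct and follows essentially the same approach as the paper: compute $p_{ij}=(d_i+d_j-1)/m$, show the collision events are pairwise independent via translation invariance, and conclude that the union bound is tight. The only cosmetic difference is that you run the second-moment method directly, whereas the paper packages the same conclusion as a standalone lemma (for pairwise independent events, $\Pr[\bigcup_i \calE_i]=\Theta(\min(1,\sum_i \Pr[\calE_i]))$) proved via Bonferroni plus the case split you anticipated.
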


\noindent
The $\min(1,·)$ safeguard in the probability is a recurring pattern in our bounds.
If it kicks in we have $p_\A(D) = Θ(1)$, which amounts to total failure of $\A$ on $D$. When we know that we are dealing with small collision probabilities, then the $\min(1,·)$ can be stripped away.\nocite{D3}

Instead of analyzing \random directly, we study a more general algorithm that we call $\bins(k)$, which includes $\random$ as a special case for $k = 1$, and which also plays a central role in several lower and upper bounds down the road.

\begin{center}
\begin{tcolorbox}[algorithm,title=Algorithm $\bins(k)$]
\begin{tcolorbox}[text]    
	Partition $[m]$ into $⌊m/k⌋$ \defn{bins} of $k$ IDs and $m \bmod k$ leftover IDs.
    Pick a random permutation of the bins. Iterate over the shuffled bins, returning all IDs of a bin in increasing order before moving on to the next bin. Finally, return the leftover IDs in increasing order.
\end{tcolorbox}
\centering
\begin{tcolorbox}[example,title={Example ($m = 20$, $k = 3$, $8$ requests)}]
    	\begin{tikzpicture}[scale=\exampleScale]
            \begin{scope}[draw=gray]
                \makeSlots{,,,4,5,6,,,,7,8,,1,2,3,,,,,}
            \end{scope}
        	\foreach \i in {0,...,5}{
        		\draw[very thick] (\i*3+1,0) rectangle (\i*3+4,1);
        	}
        \end{tikzpicture}
    \end{tcolorbox}
\end{tcolorbox}
\end{center}

\begin{restatable}{theorem}{thmrandom}
\label{thm:random}
Suppose $n \geq 2$. Let $D ∈ [m]^n$ and $k \in [m]$. Then,
\[
p_{\bins(k)}(D) =
\Theta\Big(\min\Big(1 , \frac{\norm{D}_1^2 - \norm{D}_2^2}{km} + \frac{n \norm{D}_1}{m} + \frac{n^2 k}{m} \Big)\Big).
\]
\end{restatable}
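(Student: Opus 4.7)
The plan is to reduce the analysis to tracking which \emph{bins} each instance accesses. Let $N = \lfloor m/k \rfloor$ be the number of bins and set $b_i = \lceil d_i/k \rceil$; the $i$th instance reads the first $b_i$ bins of its own uniformly random permutation of the $N$ bins, so the set $S_i$ of bins it touches is a uniformly random size-$b_i$ subset of $[N]$, independent across instances. The key structural observation is that within a bin the IDs are always returned in increasing order starting with the smallest, so if two instances both touch the same bin then both return its smallest ID. Hence a collision occurs if and only if some $S_i \cap S_j$ is nonempty.

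For the upper bound I would combine the union bound with the first-moment inequality $\Pr[S_i \cap S_j \neq \emptyset] \leq \E[|S_i \cap S_j|] = b_i b_j/N$ to obtain
\[p_{\bins(k)}(D) \leq \min\!\left(1, \tfrac{1}{N}\textstyle\sum_{i<j} b_i b_j\right) = \min\!\left(1, \tfrac{1}{2N}\big((\textstyle\sum_i b_i)^2 - \textstyle\sum_i b_i^2\big)\right).\]
Substituting $b_i \leq d_i/k + 1$ into $(\sum_i b_i)^2$ and $b_i \geq d_i/k$ into $\sum_i b_i^2$ gives $(\sum_i b_i)^2 - \sum_i b_i^2 \leq (\norm{D}_1^2 - \norm{D}_2^2)/k^2 + 2n\norm{D}_1/k + n^2$, and using $N = \Theta(m/k)$ produces the three-term upper bound of the theorem.

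For the lower bound I would expand $\Pr[\text{no collision}]$ by revealing $S_1, \ldots, S_n$ sequentially:
\[\Pr[\text{no collision}] = \prod_{i=1}^n \prod_{j=0}^{b_i-1}\!\left(1 - \frac{B_{i-1}}{N-j}\right), \qquad B_{i-1} := \sum_{\ell<i} b_\ell.\]
Using $1-x \leq e^{-x}$ and $1/(N-j) \geq 1/N$ gives $\Pr[\text{no collision}] \leq \exp(-\sum_{i<j} b_i b_j/N)$, and since $1 - e^{-X} = \Omega(\min(1, X))$ we obtain $p_{\bins(k)}(D) = \Omega(\min(1, \tfrac{1}{N}\sum_{i<j} b_i b_j))$. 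The main obstacle is then to show $\sum_{i<j} b_i b_j = \Omega\!\left((\norm{D}_1^2 - \norm{D}_2^2)/k^2 + n\norm{D}_1/k + n^2\right)$. Bounding $(\sum_i b_i)^2$ from below and $\sum_i b_i^2$ from above \emph{separately} can yield a negative estimate of the difference (e.g.\ when one $d_i$ dominates so that $\norm{D}_2^2$ is close to $\norm{D}_1^2$), so this direct route fails. I would circumvent this with the pointwise inequality $b_i b_j \geq \max(1, d_i/k)\max(1, d_j/k) \geq \tfrac{1}{4}(1 + d_i/k)(1 + d_j/k)$, provable by a quick case analysis on whether $d_i, d_j$ exceed $k$. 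Summing over pairs and using $n \geq 2$ gives
\[\sum_{i<j} b_i b_j \geq \tfrac{1}{4}\!\left(\binom{n}{2} + (n-1)\tfrac{\norm{D}_1}{k} + \tfrac{\norm{D}_1^2 - \norm{D}_2^2}{2k^2}\right) = \Omega\!\left(n^2 + \tfrac{n\norm{D}_1}{k} + \tfrac{\norm{D}_1^2 - \norm{D}_2^2}{k^2}\right).\]
Dividing by $N = \Theta(m/k)$ finishes the lower bound.

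Two technical points remain. When $k \nmid m$ the $m \bmod k$ leftover IDs are returned by each instance in the same order once its $N$ bins are exhausted, which can only \emph{increase} the collision probability (it even forces a deterministic collision once two instances both reach the leftovers) and is handled by absorption into constants. When $\sum_i b_i > N$ the product formula for $\Pr[\text{no collision}]$ collapses to $0$, so $\Pr[\text{collision}] = 1$ trivially matches the bound, and the small-$m$ regime $m < 2k$ reduces to at most one bin, again forcing a collision.
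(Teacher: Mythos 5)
Your proposal is correct, and while your reduction to bin-sets and your union-bound upper bound coincide with the paper's, your lower bound takes a genuinely different route. The paper establishes a two-sided per-pair estimate $\Pr[S_i\cap S_j\neq\emptyset]=\Theta(\min(1,b_ib_j/N))$ (its Lemma~\ref{lem:sampling}), observes that the pairwise collision events $\calC_{ij}$ are pairwise \emph{independent}, and then invokes a Bonferroni-type lemma (Lemma~\ref{lem:union-bound-tight}) saying that for pairwise independent events the union bound is tight up to constants and the $\min(1,\cdot)$ cap. You instead compute $\Pr[\text{no collision}]$ exactly as a telescoping product via the chain rule, revealing $S_1,\dots,S_n$ in order, and bound the whole product by $\exp\big(-\sum_{i<j}b_ib_j/N\big)$ in one shot. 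Your route is more elementary and self-contained---it needs neither the pairwise-independence observation nor the reusable union-bound lemma---but it leans on the specific exchangeable structure of $\bins(k)$ (conditioned on disjointness so far, only the count $B_{i-1}$ of occupied bins matters), whereas the paper's machinery is deliberately modular and is reused verbatim for \cluster and again in Lemma~\ref{lem:rounded-demands-lower-bound}. Your closing algebra also differs only cosmetically: you rightly flag that bounding $(\sum_i b_i)^2$ and $\sum_i b_i^2$ separately can fail for the lower bound, and you substitute the pointwise inequality $b_ib_j\geq\tfrac14(1+d_i/k)(1+d_j/k)$; the paper does the same thing in the guise of $\lceil d_i/k\rceil=\Theta(1+d_i/k)$ applied termwise inside the sum. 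The degenerate cases (an instance exhausting all bins, $\sum_i b_i>N$, leftover IDs) are dispatched identically in both proofs by checking that the stated bound is already $\Theta(1)$ there; your phrase ``absorption into constants'' is terser than ideal, but the underlying claims check out.
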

\noindent A calculation shows that, when $k = 1$, the first term in the sum dominates $p_{\bins(k)}(D)$. This yields the collision probability of $\random$.

\begin{corollary}
\label{cor:random}
Suppose $n \geq 2$. Let $D ∈ [m]^n$. Then,
\[
p_{\random}(D) = \Theta\Big(\min\Big(1, \frac{\norm{D}_1^2 - \norm{D}_2^2}{m}\Big)\Big).
\]
\end{corollary}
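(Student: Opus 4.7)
The plan is to derive this as a direct consequence of \Cref{thm:random}, specialized to $k=1$. Substituting yields
\[
p_{\random}(D) = p_{\bins(1)}(D) = \Theta\Big(\min\Big(1,\, A + B + C\Big)\Big),
\]
where I abbreviate $A := (\norm{D}_1^2 - \norm{D}_2^2)/m$, $B := n\norm{D}_1/m$, and $C := n^2/m$. It then suffices to verify the ``calculation'' promised in the text: that the first of these three terms asymptotically dominates the sum, i.e. $A + B + C = \Theta(A)$. The bound $A \leq A + B + C$ is trivial, so the entire content lies in showing $B + C = O(A)$.

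The key tool is the elementary identity
\[
\norm{D}_1^2 - \norm{D}_2^2 \;=\; \sum_{i=1}^n d_i\paren{\norm{D}_1 - d_i},
\]
used together with the fact that each $d_i \geq 1$ (since $D \in [m]^n$). This lower bound on $d_i$ lets me replace the leading factor of $d_i$ in each summand by $1$, giving $d_i(\norm{D}_1 - d_i) \geq \norm{D}_1 - d_i$, and summing yields
\[
A \;\geq\; \frac{(n-1)\norm{D}_1}{m}.
\]
From here, the hypothesis $n \geq 2$ (so $n - 1 \geq n/2$) immediately gives $A \geq B/2$, and the further observation $\norm{D}_1 \geq n$ gives $A \geq n(n-1)/m \geq C/2$.

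Combining these two inequalities yields $B + C \leq 4A$, and therefore $\min(1, A+B+C) \leq 5\min(1, A)$, which together with the trivial reverse inequality $\min(1,A) \leq \min(1, A+B+C)$ proves the claim. I do not expect any real obstacle here; the only slightly nonobvious step is recognizing that the inequality $d_i(\norm{D}_1 - d_i) \geq \norm{D}_1 - d_i$, leveraging the hypothesis $d_i \geq 1$, is precisely enough to absorb both lower-order terms into the first.
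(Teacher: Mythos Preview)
Your proof is correct and is exactly the calculation the paper alludes to but does not spell out: the paper simply states that ``a calculation shows that, when $k = 1$, the first term in the sum dominates $p_{\bins(k)}(D)$,'' and your argument via the identity $\norm{D}_1^2 - \norm{D}_2^2 = \sum_i d_i(\norm{D}_1 - d_i)$ together with $d_i \geq 1$ is precisely the natural way to carry this out.
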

\noindent
\Cref{thm:random} also implies that $p_{\bins(k)}(D) = \Omega(\min(1, \frac{n\norm{D}_1}{m}))$. Combining this with \Cref{thm:cluster}, we conclude that \cluster is asymptotically no worse than $\bins(k)$ and, consequently, \random.

\begin{corollary}
Suppose $n \geq 2$. Let $D ∈ [m]^n$ and $k \in [m]$. Then,
\begin{align*}
p_{\cluster}(D) &= O(p_{\bins(k)}(D)) \text{ and, in particular,}\\
p_{\cluster}(D) &= O(p_{\random}(D)).
\end{align*}
\end{corollary}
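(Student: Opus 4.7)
The proposal is to observe that the corollary is a direct consequence of the two preceding theorems once we isolate the right term inside the $\min(1,\cdot)$ expressions. From \Cref{thm:cluster} we have an upper bound $p_{\cluster}(D) \leq c_1 \cdot \min(1, n\norm{D}_1/m)$ for some constant $c_1$. From \Cref{thm:random}, one of the three summands in the bound on $p_{\bins(k)}(D)$ is exactly $n\norm{D}_1/m$. Since the other two summands are nonnegative, the sum $S(D,k) := (\norm{D}_1^2 - \norm{D}_2^2)/(km) + n\norm{D}_1/m + n^2 k/m$ satisfies $S(D,k) \geq n\norm{D}_1/m$, and therefore $\min(1, S(D,k)) \geq \min(1, n\norm{D}_1/m)$.

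The key step is thus to combine the upper bound on $p_{\cluster}$ with the induced lower bound on $p_{\bins(k)}$: there exists a constant $c_2 > 0$ such that
\[
p_{\bins(k)}(D) \geq c_2 \cdot \min\Big(1, \frac{n\norm{D}_1}{m}\Big) \geq \frac{c_2}{c_1} \cdot p_{\cluster}(D).
\]
This yields $p_{\cluster}(D) = O(p_{\bins(k)}(D))$ for every $k \in [m]$. The second statement is obtained by instantiating $k = 1$, since \random coincides with $\bins(1)$ as noted in the definition of $\bins(k)$.

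There is no real obstacle here; the only thing to be careful about is the handling of the $\min(1, \cdot)$ safeguards. Because $\min(1,\cdot)$ is monotone in its argument, the inequality $S(D,k) \geq n\norm{D}_1/m$ propagates cleanly through the $\min$, so no case analysis on whether the safeguard is active is required. The proof should therefore be essentially a two-line derivation that cites \Cref{thm:cluster} and the middle summand of \Cref{thm:random}.
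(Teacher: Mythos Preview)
Your proposal is correct and matches the paper's approach exactly: the paper likewise observes (in the sentence immediately preceding the corollary) that \Cref{thm:random} implies $p_{\bins(k)}(D) = \Omega(\min(1, n\norm{D}_1/m))$ via the middle summand, and then combines this with \Cref{thm:cluster}. No separate proof is given in the paper beyond that remark, so your two-line derivation is precisely what is intended.
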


\subsection{Worst-Case Analysis Against Oblivious Adversaries}

To do a worst-case analysis, we need to restrict the total number $d$ of requests. This is because any adversary that requests $m + 1$ or more IDs forces collision probability $1$. We therefore only consider demand profiles from $\D_1(n, d)$. \Cref{thm:cluster} and \Cref{thm:random} imply the following worst-case collision probabilities.

\begin{corollary}
\label{cor:worst-case}
Suppose $n \geq 2$. Let $d \geq n$. Then,
\begin{align*}
    \max_{D ∈ \D_1(n,d)} p_{\cluster}(D) &= \Theta(\min(1, nd/m)) \text{ and }\\
    \max_{D ∈ \D_1(n,d)} p_{\random}(D)  &= \Theta(\min(1, d^2/m)).
\end{align*}
\end{corollary}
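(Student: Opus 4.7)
The proof reduces directly to \Cref{thm:cluster} and \Cref{cor:random}, combined with a short optimization over $\D_1(n,d)$. For \cluster, every $D \in \D_1(n,d)$ satisfies $\norm{D}_1 = d$, so \Cref{thm:cluster} gives $p_\cluster(D) = \Theta(\min(1, nd/m))$ pointwise, and taking the maximum over $\D_1(n,d)$ immediately yields the first equality; there is nothing to optimize because the bound does not depend on how $d$ is distributed among the instances.

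For \random, I would apply \Cref{cor:random}, which states $p_\random(D) = \Theta(\min(1,(\norm{D}_1^2 - \norm{D}_2^2)/m))$. The task therefore reduces to showing
\[\max_{D \in \D_1(n,d)}\bigl(\norm{D}_1^2 - \norm{D}_2^2\bigr) = \Theta(d^2),\]
since multiplicative constants inside can be absorbed into the outer $\min(1,\cdot)$ wrapper. The upper bound $\norm{D}_1^2 - \norm{D}_2^2 \leq \norm{D}_1^2 = d^2$ is immediate. For the matching lower bound, I would exhibit the balanced profile $D^* \in \D_1(n,d)$ whose entries take only the values $\lfloor d/n\rfloor$ and $\lceil d/n\rceil$, with $r = d - n\lfloor d/n\rfloor$ entries taking the larger value. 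A direct expansion gives the identity $\norm{D^*}_2^2 = d^2/n + r(n-r)/n \leq d^2/n + n/4$, hence
\[\norm{D^*}_1^2 - \norm{D^*}_2^2 \;\geq\; d^2(1 - 1/n) - n/4 \;\geq\; d^2/2 - d/4 \;\geq\; d^2/4,\]
where the last two inequalities use $n \geq 2$, $n \leq d$, and $d \geq 2$.

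The proof is essentially mechanical, and I do not foresee a substantive obstacle. The only points that require a modicum of care are (i) justifying that the $\Theta(d^2)$-maximum commutes with the $\min(1,\cdot)$ wrapper to deliver a clean $\Theta(\min(1, d^2/m))$ (which follows from the usual case split on whether $d^2/m \lessgtr 1$), and (ii) checking that the balanced construction $D^*$ works uniformly across the entire range $d \geq n \geq 2$, including the corner case $d = n$ where $D^* = (1, \dots, 1)$ still gives $\norm{D^*}_1^2 - \norm{D^*}_2^2 = d^2 - d = \Omega(d^2)$.
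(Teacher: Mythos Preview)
Your proposal is correct and matches the paper's intended approach: the paper states this corollary without proof, simply noting that it follows from \Cref{thm:cluster} and \Cref{thm:random}, and your argument is exactly the natural fleshing-out of that claim. Your optimization of $\norm{D}_1^2 - \norm{D}_2^2$ via the balanced profile is the expected computation, and your handling of the $\min(1,\cdot)$ wrapper and the corner case $d=n$ is correct.
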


Does \cluster have optimal worst-case collision probability? We answer this question positively, proving a matching lower bound. Further, we show that all but a vanishing fraction of demand profiles from $\calD_1(n,d)$ force this worst-case collision probability, implying that \cluster is optimal on almost all demand profiles.

\begin{restatable}{theorem}{thmclustermostlygood}
    \label{thm:worst-case-lower-bound}
    For all but an $\exp(-Θ(n))$-fraction of $D ∈ \D_1(n,d)$
    \[p_*(D) = Ω\Big(\min\Big(1,\frac{nd}{m}\Big)\Big).\]
\end{restatable}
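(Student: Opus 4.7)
The plan is to (i) reduce to balanced demand profiles via concentration and a monotonicity argument, and (ii) prove a matching lower bound in the balanced case via a second-moment method. Call $D \in \mathcal{D}_1(n,d)$ \emph{balanced} if $d_i \geq h/2$ for every $i$, where $h := d/n$. Representing a uniform sample from $\mathcal{D}_1(n,d)$ as the type of $n$ i.i.d.\ random variables (e.g.\ geometric) conditioned on their sum equalling $d$, a standard Chernoff--Hoeffding bound shows that the fraction of unbalanced profiles is at most $\exp(-\Theta(n))$, supplying the ``all but'' clause of the statement. Moreover, $p_\mathcal{A}(D)$ is coordinatewise nondecreasing in $D$---a coupling in which both profiles share per-instance randomness shows that the collision event for $D' \leq D$ is contained in the one for $D$---and this monotonicity is preserved by the minimisation over $\mathcal{A}$. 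Thus $p_*(D) \geq p_*(D^\star)$ for every balanced $D$, where $D^\star = (\lfloor h/2 \rfloor, \dots, \lfloor h/2 \rfloor)$, and it suffices to prove the key lemma: $p_*((h',\dots,h')) = \Omega(\min(1, n^2 h'/m))$ for all integers $h' \geq 1$ and $n \geq 2$, since $n^2 h'/m = \Theta(nd/m)$ when $h'$ is a constant fraction of $h$.

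For the key lemma, an algorithm $\mathcal{A}$ is encoded by a distribution $\mu$ on $\binom{[m]}{h'}$, and its collision probability on $(h',\dots,h')$ equals that of $n$ i.i.d.\ samples $S_1, \dots, S_n \sim \mu$. Setting $\alpha_x := \Pr_\mu[x \in S]$, we have $\sum_x \alpha_x = h'$, and Cauchy--Schwarz gives $M_2 := \sum_x \alpha_x^2 \geq (h')^2/m$; combined with $\E[|S_i \cap S_j|] = M_2$ and $|S_i \cap S_j| \leq h'$, this produces the pair collision bound $p_{12} := \Pr[S_1 \cap S_2 \neq \emptyset] \geq h'/m$. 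I would then apply a Paley--Zygmund argument to the pair-collision count $Y := \sum_{i<j} \indicator{S_i \cap S_j \neq \emptyset}$, whose expectation is $\binom{n}{2} p_{12} = \Omega(n^2 h'/m)$. When the ``shared-instance'' cross-correlation $q := \Pr[S_1 \cap S_2 \neq \emptyset \text{ and } S_1 \cap S_3 \neq \emptyset]$ satisfies $q = O(p_{12}^2)$, the second moment $\E[Y^2]$ is dominated by the disjoint-pair terms $p_{12}^2$, and Paley--Zygmund directly yields $\Pr[Y \geq 1] = \Omega(\min(1, n^2 p_{12})) = \Omega(\min(1, n^2 h'/m))$.

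The main obstacle is the complementary regime, $q \gg p_{12}^2$. There the function $T \mapsto \Pr_{S \sim \mu}[S \cap T \neq \emptyset]$ has large variance under $T \sim \mu$, so a non-negligible fraction of $\mu$'s mass lies on ``absorbing'' sets $T$ for which $\Pr_S[S \cap T \neq \emptyset]$ is much larger than $p_{12}$. A direct argument then applies: with constant probability at least one of $S_1, \dots, S_n$ lands in the absorbing region, and given such an $S_i$ the remaining $n-1$ instances collide with it with probability bounded away from zero, recovering an $\Omega(\min(1, n^2 h'/m))$ lower bound by a separate route. Unifying the ``regular'' case (Paley--Zygmund) with the ``absorbing'' case and tracking constants is where I expect the bulk of the technical work to lie.
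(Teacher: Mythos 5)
Your overall architecture (reduce to near-uniform profiles by a counting/concentration argument plus monotonicity of $p_*$, then lower-bound uniform profiles) matches the paper's, but both halves of your plan have genuine problems. The first is fatal as stated: your notion of ``balanced'' --- \emph{every} entry satisfies $d_i \geq h/2$ with $h = d/n$ --- is satisfied by only an $\exp(-\Theta(n))$-fraction of $\mathcal{D}_1(n,d)$, not all but such a fraction. In the i.i.d.-geometric representation you invoke, each $d_i$ is approximately geometric with mean $h$, so $\Pr[d_i < h/2]$ is a constant (about $1-e^{-1/2}$), and the probability that \emph{all} $n$ coordinates clear the threshold is $e^{-\Theta(n)}$; a direct count confirms this, since the number of compositions of $d$ into $n$ parts each at least $h/2$ is roughly $\binom{d/2}{n-1}$ versus $\binom{d-1}{n-1}$ in total, a $2^{-\Theta(n)}$ fraction. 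Chernoff--Hoeffding controls the empirical average, not the minimum coordinate. The paper avoids this by using a much weaker property --- it suffices that at least $\varepsilon n$ entries exceed $\varepsilon d/n$ for a small constant $\varepsilon$ --- which \emph{does} hold for all but an $\exp(-\Theta(n))$-fraction of profiles, and then it applies monotonicity after \emph{discarding} the small entries entirely, landing on a uniform profile with $\varepsilon n$ coordinates each equal to $\varepsilon d/n$. Your monotonicity step itself is fine; it is the set you reduce to that is wrong.

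For the key lemma $p_*((h',\dots,h')) = \Omega(\min(1, n^2h'/m))$, your route is genuinely different from the paper's and is incomplete precisely where you flag difficulty. The Cauchy--Schwarz bound $\Pr[S_1\cap S_2\neq\emptyset]\geq h'/m$ and the Paley--Zygmund step under the assumption $q = O(p_{12}^2)$ are sound, but the ``absorbing'' case is not: from $q = \E_T[f(T)^2] \gg \E_T[f(T)]^2$ (where $f(T) = \Pr_S[S\cap T\neq\emptyset]$) you cannot conclude that the high-$f$ region has $\mu$-mass $\Omega(1/n)$, and even granting that, the argument ``one sample lands there, the rest collide with it with constant probability'' only yields $\Omega(\mu(A))$, which need not dominate $\min(1,n^2h'/m)$ (e.g.\ when $n^2h'/m = n^{-1/2}$ and $\mu(A) = \Theta(1/n)$). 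Making this work seems to require a multi-scale truncation of $f$, and the intermediate regimes do not obviously close. The paper sidesteps all of this with a structural argument (its Lemmas \ref{lem:balls-into-bins} and \ref{lem:uniform-lower-bound}): viewing an algorithm as a distribution $\mu$ on $\binom{[m]}{h'}$, the no-collision probability is affine in any two coordinates $\mu(P),\mu(Q)$ with $P\cap Q\neq\emptyset$, so a support-minimal optimizer must be supported on pairwise disjoint sets and, by a balls-into-bins exchange argument, must be uniform over a maximal such family --- i.e., the optimum is exactly $\bins(h')$, whose collision probability is already known to be $\Theta(\min(1,n^2h'/m))$. If you want to salvage your second-moment route you would gain a proof that does not identify the exact optimizer, but you must both repair the reduction in the first step and supply a complete case analysis for the large-variance regime.
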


\subsection{Worst-Case Analysis Against Adaptive Adversaries}

Interestingly, when facing adaptive adversaries, the collision probability of \cluster worsens by at least a factor of $n$.
\begin{restatable}{lemma}{lemadaptivedefeatscluster}
\label{lem:adaptive-defeats-cluster}
Suppose $n \geq 2$. Let $d \geq 2n$. There exists an adaptive adversary $Z \in \Adv(\D_1(n,d))$ such that
\[
p_{\cluster}(Z) = \Omega\Big(\min\Big(1, \frac{n^2 d}{m}\Big)\Big).
\]
\end{restatable}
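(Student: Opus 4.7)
The plan is to exploit the fact that the only randomness in \cluster is a single starting offset per instance, which is completely revealed by the very first ID that instance produces. I would design an adaptive adversary $Z$ that proceeds in three phases. First, $Z$ requests one ID from each of the $n$ instances, thereby learning all random starts $x_1,\dots,x_n \in [m]$. Second, $Z$ sorts these starts cyclically and identifies the instance $i^*$ whose clockwise distance $G$ to the next start is smallest. Third, if $G \leq d-n$, $Z$ requests exactly $G$ additional IDs from $i^*$; its $(G{+}1)$-th output is $x_{i^*}+G \pmod{m}$, which equals the first output of the clockwise neighbor of $i^*$ and so forces a collision. Finally $Z$ pads with further queries distributed among the $n$ already-activated instances until the total reaches $d$, so that the final demand profile lies in $\D_1(n,d)$; this padding can only increase the collision probability.

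Consequently $p_{\cluster}(Z) \geq \Pr[G \leq d-n]$, and the analytic heart of the proof is a lower bound on $\Pr[G \leq t]$ where $G$ is the minimum clockwise gap among $n$ independent uniform samples from $[m]$. I would use the (discrete analogue of the) well-known Dirichlet formula $\Pr[G > t] \leq (1 - nt/m)^{n-1}$ for $0 \leq t \leq m/n$, combined with pigeonhole for $t > m/n$, which forces $G \leq m/n$ deterministically. Setting $t = d-n$, invoking $d \geq 2n$ so that $d-n \geq d/2$, and using the elementary inequality $1 - (1-x)^{n-1} \geq \min(1/2, (n-1)x/2)$, I obtain
\[
\Pr[G \leq d-n] \;=\; \Omega\!\left(\min\!\left(1, \tfrac{n^2 d}{m}\right)\right),
\]
which matches the target bound.

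The main obstacle is establishing this minimum-gap lower bound in a self-contained way, since a naive union bound only \emph{upper} bounds the probability of a close pair. The cleanest route I see is a second-moment argument on the count $N := |\{(i,j) : i < j,\ \mathrm{dist}_{\mathrm{cyc}}(x_i,x_j) \leq d-n\}|$: linearity of expectation gives $\mathbb{E}[N] = \Theta(n^2(d-n)/m)$, while any two close-pair events share at most one coordinate, so a direct variance computation gives $\mathrm{Var}[N] = O(\mathbb{E}[N] + \mathbb{E}[N]^2/n)$, after which Paley--Zygmund delivers $\Pr[N \geq 1] = \Omega(\min(1, n^2 d/m))$. Everything else is bookkeeping: checking that $d \geq 2n$ leaves enough room to close the gap and pad inside $\D_1(n,d)$, and noting that the saturated regimes ($d-n \geq m/n$, or equivalently $n^2d/m \gtrsim 1$) are handled by pigeonhole and absorbed into the $\min$.
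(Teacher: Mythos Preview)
Your adversary and the overall reduction to the minimum-gap probability $\Pr[G \leq d-n]$ coincide with the paper's. For the gap bound itself, the paper takes the most direct route: a birthday-style chain rule giving
\[
\Pr[\text{all mutual distances} > d-n] \;=\; \prod_{i=2}^n \Pr[x_i \text{ far from } x_1,\dots,x_{i-1} \mid \text{earlier ones mutually far}] \;\leq\; \prod_{i=2}^n \Bigl(1 - \tfrac{(i-1)(d-n)}{m}\Bigr) \;\leq\; \exp\!\Bigl(-\Theta\bigl(\tfrac{n^2 d}{m}\bigr)\Bigr),
\]
followed by a case split on whether the exponent exceeds a constant. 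This chain-rule computation \emph{is} a self-contained proof of your Dirichlet inequality $\Pr[G>t]\leq(1-nt/m)^{n-1}$ (up to constants in the exponent), so your primary route and the paper's are the same argument.

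Your second-moment/Paley--Zygmund backup is correct but heavier than needed. In fact the close-pair indicators $\mathbf{1}\{\mathrm{dist}(x_i,x_j)\leq t\}$ are pairwise \emph{independent} (exactly the observation made in the proof of Theorem~\ref{thm:cluster}), so all cross-covariances vanish, your $\mathbb{E}[N]^2/n$ term is zero, and Lemma~\ref{lem:union-bound-tight} already yields $\Pr[N\geq 1]=\Theta(\min(1,\mathbb{E}[N]))$ in one line. Either route lands the bound, but the paper's chain rule is the shortest self-contained version.
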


\noindent
Roughly speaking, this adversary exploits the fact that once the initial ID from each instance is known, the two instances with the two closest IDs can be forced to collide if they are at distance less than $d$.

Notice that \random seems to be robust to adaptivity: because every ID produced is a fresh random number, an adaptive adversary can extract little information about future IDs from past IDs. Of course, the collision probability of \random in the adaptive case is as bad as in the oblivious case.

Is there an algorithm for the adaptive setting that matches the collision probability lower bound $\Omega(\min(1, nd/m))$ from \Cref{thm:worst-case-lower-bound}? We give an essentially positive answer, by showing that there exists an algorithm, that we call $\clusterstar$, that is only a small logarithmic factor away from that lower bound.

\begin{center}
\def\run{\mathrm{run}}
\begin{tcolorbox}[algorithm,title=Algorithm \clusterstar]
    \label{algo:clusterstar}
\begin{tcolorbox}[text] 
	Let $\run(x, r)$ be the sequence $(x, x + 1, \dots, x + (r - 1))$ modulo $m$. Repeat the following for $r = 1,2,4,8,…$: Draw $x \in [m]$ uniformly at random, such that $\run(x, r)$ does not collide with previously chosen runs. For the next $r$ requests return the IDs from $\run(x, r)$.
\end{tcolorbox}
\centering
	\begin{tcolorbox}[example,title={Example ($m = 20$, $8$ requests)}]
    	\begin{tikzpicture}[scale=\exampleScale]
        	\makeSlots{,,,,,,4,5,6,7,,1,,,2,3,,,8,}
        \end{tikzpicture}
    \end{tcolorbox}
\end{tcolorbox}
\end{center}

Intuitively, $\clusterstar$ is resilient against adaptive adversaries because an adversary cannot predict too many future IDs from an instance. More specifically, the adversary can only predict a long run of IDs from an instance if it has already requested roughly the same number of IDs from it. Still, $\clusterstar$ behaves similarly to
$\cluster$ because the exponential growth of the runs implies that the number of runs per instance is small, and that most IDs produced by an instance belong to its largest run.

\begin{restatable}{theorem}{thmclusterstar}
\label{thm:adaptive}
Suppose $n \geq 2$. Let $d \geq n$ and $\D := \D_1(d) \cap \D_{\infty}^{\leq}(n, m/(2 \log m))$. Then,
\[ \max_{Z ∈ \Adv(\D)}
p_{\clusterstar}(Z) = O\Big(\min\Big(1, \frac{nd}{m} \log\Big(1 + \frac{d}{n}\Big)\Big)\Big).
\]
\end{restatable}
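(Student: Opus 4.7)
The plan is to view the execution of $\clusterstar$ against any $Z \in \Adv(\D)$ as producing a sequence $R_1, R_2, \ldots, R_T$ of \emph{runs} across all instances, indexed in order of creation, with $R_t$ having length $r_t$ and random starting position $X_t \in [m]$. After $d_i$ total requests to instance $i$, that instance has created $K_i = \lceil \log_2(d_i + 1) \rceil$ runs of doubling lengths $1, 2, 4, \ldots, 2^{K_i - 1}$, whose total length is at most $2 d_i$. By construction, same-instance runs are disjoint, so a collision can only occur between runs of \emph{different} instances. I would bound the probability of each such bad pair via a conditional-uniformity lemma and take a union bound.

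The central lemma would state that, conditioned on the interaction history $\mathcal{F}_{t-1}$ just before $R_t$ is created (including all prior $X_s$ and all adaptive choices of $Z$), the starting point $X_t$ is uniform on the valid set $V_t$ of positions whose arc does not hit an earlier run of the same instance, and that $|V_t| \geq m/2$. Uniformity holds because the algorithm's internal randomness is independent of $Z$'s reactive decisions. For the size bound: when the $k$-th run of length $r = 2^{k-1}$ is created, the $k - 1$ earlier runs of that instance are arcs of lengths $1, 2, \ldots, 2^{k-2}$, and each arc of length $\ell$ forbids exactly $\ell + r - 1$ consecutive starting positions; hence the total forbidden set has size at most $(k-1)(r-1) + (2^{k-1} - 1) = k(r-1) \leq k r$. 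Since $r \leq d_i \leq m/(2 \log m)$ and $k \leq \log_2 m$, this is at most $m/2$.

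Then, for distinct $s < t$ with $R_s, R_t$ from different instances, the event $R_s \cap R_t \neq \emptyset$ constrains $X_t$ to at most $r_s + r_t - 1$ values (given $\mathcal{F}_{t-1}$), so its conditional probability is at most $2(r_s + r_t)/m$. Union-bounding (over-bounding by summing over all pairs) gives
\[
\Pr[\text{collision}] \leq \sum_{s < t} \frac{2(r_s + r_t)}{m} = \frac{2(T-1)}{m} \sum_t r_t \leq \frac{4 T d}{m},
\]
where $\sum_t r_t = \sum_i (2^{K_i} - 1) \leq 2d$. Letting $n_Z \leq n$ be the number of activated instances, I would bound $T = \sum_i K_i \leq 2 \sum_{i : D_i \geq 1} \log_2(D_i + 1)$ using Jensen's inequality together with monotonicity of $x \mapsto x \log_2(1 + d/x)$ (a routine calculus check), obtaining $T = O(n_Z \log(1 + d/n_Z)) \leq O(n \log(1 + d/n))$. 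Combining with the trivial upper bound of $1$ yields the claimed $O(\min(1, (nd/m) \log(1 + d/n)))$.

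The main obstacle I anticipate is the conditional-uniformity step: arguing precisely that the adaptive adversary's strategy does not bias $X_t$ within $V_t$ requires careful book-keeping of the joint distribution of the algorithm's randomness and the adversary's reactive decisions (a clean way is to decouple the runs' starting points via deferred decisions, sampling each $X_t$ only when needed). The constant in $h = m/(2 \log m)$ is also what makes the bound $|V_t| \geq m/2$ come out cleanly; a weaker assumption would only change the constants in the final bound. The remaining ingredients — the pairwise collision bound, the union bound, and the entropy-style bound on $T$ — are routine.
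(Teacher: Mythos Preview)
Your proposal is correct and essentially mirrors the paper's proof: both bound the number of valid starting positions for each newly opened run, union-bound the collision probability over run openings, and control the total number of runs by $O(n\log(1+d/n))$ via Jensen's inequality on $\sum_i \log(1+d_i)$. The only differences are cosmetic—you sum over pairs $(s,t)$ with a plain union bound and obtain $|V_t|\ge m/2$ directly from the $d_i\le m/(2\log m)$ constraint, whereas the paper uses a telescoping decomposition indexed by the newly opened run (conditioning on no prior collision) and first reduces to the regime $m=\omega(nd\log(1+d/n))$—but the structure and final arithmetic coincide.
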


\noindent
Due to fragmentation of the ID space, $\clusterstar$ may not be able to generate all $m$ IDs according to its rules. This is why we restrict our analysis to demand profiles with at most $m/(2\log m)$ requests per instance. An instance then allocates at most $\log m$ runs of sizes at most $m/(2\log m)$, which always fit even for worst-case fragmentation.

\subsection{Competitive Analysis Against Oblivious Adversaries}

The shortcoming of \cluster is that when the requests are skewed towards a small group of instances, the collision probability is far from optimal.
For instance, let $D = (d - 1, 1) \in \calD_1(2, d)$ be a maximally skewed demand profile with $n = 2$. Consider the following algorithm: the first ID is randomly sampled from $[m - (d - 2)]$, and all other IDs are deterministically taken from $\{m - d + 3, \dots, m\}$. Then, the collision probability on $D$ is $1 / (m-(d-2))$, which is the probability that first IDs of the two instances collide. This is up to a factor $Θ(d)$ smaller than $p_{\cluster}(D) = d/m$.

Therefore, although \cluster is optimal in the worst case, there are algorithms that perform far better on some special cases, i.e., \cluster's competitive ratio is far from optimal. To overcome this limitation, we design yet another algorithm, which we call $\binsstar$.

\begin{center}
\begin{tcolorbox}[algorithm,title=Algorithm $\bins^*$]
\begin{tcolorbox}[text]
 Partition the ID space into $O(\log m)$ chunks and partition the $i$th chunk into bins of $2^{i-1}$ IDs each. Pick a uniformly random bin of size $1$, then of size $2$, then of size $4$, and so on, always returning all IDs of a bin in increasing order before moving on to a bin of twice the size.
\end{tcolorbox}
\centering
	\begin{tcolorbox}[example,title={Example ($m = 32$, $8$ requests)}]
    	\begin{tikzpicture}[xscale=.625*\exampleScale,yscale=\exampleScale]
        	\begin{scope}[draw=gray]
        		\makeSlots{,,,1,,,,,,,2,3,,,,,4,5,6,7,,,,,8,,,,,,}
        	\end{scope}
        	\foreach \i in {0,...,3}{
        		\draw[very thick] (\i*8+1,0) rectangle (\i*8+9,1);
        	}
        	\foreach \j in {1,2,3,4,5,6,7,10,12,14,16,20}{
        		\draw[semithick] (\j+1,0) -- ++ (0,1);
        	}
        \end{tikzpicture}
    \end{tcolorbox}
\end{tcolorbox}
\end{center}

$\bins^*$ combines the exponential allocations from $\clusterstar$ with the spatial partitioning of $\bins(k)$ for $k ∈ \{1,2,4,…\}$. Intuitively, the goal of $\binsstar$ is that instances with similar loads allocate most of their IDs from the same region of the universe $[m]$. In particular, low-demand instances can only collide with few IDs of high-demand instances.

\begin{restatable}{theorem}{thmbinsstarweaklycompetitive}
    \label{thm:bins-star-weakly-competitive}
    The competitive ratio of $\bins^*$ for $\calD_{\infty}^{<}(m / \log m)$ is $O(\log m)$.
\end{restatable}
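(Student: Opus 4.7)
The plan is to sandwich both $p_{\binsstar}(D)$ and $p_*(D)$ against the quantity $Q(D) := \sum_{\{j,k\}} \min(d_j, d_k)$, which by a standard dyadic rearrangement equals $\Theta(\sum_{i \geq 1} \binom{n_i}{2} 2^{i-1})$ where $n_i := |\{j : d_j \geq 2^{i-1}\}|$. Concretely I aim to show
\[
p_{\binsstar}(D) = O\!\left(\frac{\log m}{m}\, Q(D)\right) \quad\text{and}\quad p_*(D) = \Omega\!\left(\frac{1}{m}\, Q(D)\right),
\]
from which the $O(\log m)$ competitive ratio immediately follows.

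The upper bound comes from a level-by-level analysis of $\binsstar$. Each of the $O(\log m)$ chunks has size $\Theta(m/\log m)$, so chunk $i$ holds $\Theta(m/(2^{i-1}\log m))$ bins. At level $i$ the $n_i$ active instances pick one bin each independently and uniformly, so by a union bound over pairs the level-$i$ collision probability is $O(\binom{n_i}{2} \cdot 2^{i-1} \log m / m)$. The restriction $D \in \calD_{\infty}^{<}(m/\log m)$ ensures that only $O(\log m)$ levels are relevant; summing across them and invoking the dyadic identity for $Q(D)$ yields the upper bound.

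The lower bound is the technically harder half. Let $S^{(j)}$ be the set of IDs produced by instance $j$ and $N_x := |\{j : x \in S^{(j)}\}|$ the load at position $x$, so a collision is the event $\{\exists x : N_x \geq 2\}$. The family $(N_x)_{x \in [m]}$ is negatively associated --- each single-instance indicator vector $(\mathbbm{1}[x \in S^{(j)}])_x$ is N.A.\ because $|S^{(j)}|$ is fixed, and sums of independent N.A.\ families are N.A.\ --- which gives $\Pr[\text{no collision}] \leq \prod_x \Pr[N_x \leq 1]$. A Bonferroni-style estimate on the Poisson-Binomial $N_x$ with parameters $p_j(x) := \mu_{d_j}(x) := \Pr[x \in S^{(j)}]$ shows $\Pr[N_x \geq 2] = \Omega(\min(1, \sum_{\{j,k\}} p_j(x) p_k(x)))$, and combining the two gives $p_*(D) = \Omega(\min(1, \mathbb{E}[Y]))$ with $Y := \sum_{\{j,k\}}|S^{(j)} \cap S^{(k)}|$. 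Finally, the monotone coupling $\mu_a(x) \leq \mu_b(x)$ for $a \leq b$ (since the prefixes of the same random permutation are nested) combined with Cauchy-Schwarz gives $\sum_x \mu_{d_j}(x)\mu_{d_k}(x) \geq \sum_x \mu_{\min(d_j,d_k)}(x)^2 \geq \min(d_j, d_k)^2/m$, so $\mathbb{E}[Y] \geq Q(D)/m$ using $\min(d_j, d_k)^2 \geq \min(d_j, d_k) \geq 1$ for every non-trivial pair.

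The main obstacle is this lower bound. A naive pairwise comparison $p_*(D) \geq \max_{\{j,k\}} p_*(d_j, d_k)$ yields only a max-type bound $\Omega(\max_i n_i^2 2^{i-1}/m)$, which on multi-scale demand profiles (where $\sum_i n_i^2 2^{i-1}$ exceeds $\max_i n_i^2 2^{i-1}$ by a factor of $\log m$) would give only an $O(\log^2 m)$ competitive ratio. The negative association of $(N_x)_x$ is exactly what allows the many pairwise contributions to be aggregated into a single sum rather than a maximum, saving the extra $\log m$ factor and delivering the advertised $O(\log m)$ bound.
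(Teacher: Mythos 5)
Your upper-bound half is sound and is essentially the paper's argument: a union bound over pairs of instances within each chunk gives $p_{\binsstar}(D) = O\big(\tfrac{\log m}{m}\sum_{i}\binom{n_i}{2}2^{i-1}\big)$ with $n_i$ the number of instances of demand at least $2^{i-1}$, and your reformulation via $Q(D)=\sum_{\{j,k\}}\min(d_j,d_k)$ would even let you bypass the paper's telescoping inequality for $\binom{x+y}{2}$ (its Lemma~\ref{lem:binomial}). The fatal problem is the lower bound, which is where all the difficulty of the theorem lives. The claim that the indicator vector $(\mathbbm{1}[x\in S^{(j)}])_{x\in[m]}$ is negatively associated ``because $|S^{(j)}|$ is fixed'' is false: fixed-sum $0/1$ vectors are N.A.\ when they arise from a \emph{uniformly random} permutation (Joag-Dev--Proschan), but $p_*(D)$ is a minimum over \emph{all} algorithms, and a good algorithm deliberately correlates its IDs positively --- for $\bins(h)$, the indicators of two IDs lying in the same bin are perfectly positively correlated. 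Consequently the product bound $\Pr[\text{no collision}]\le\prod_x\Pr[N_x\le 1]$ fails, and your intermediate conclusion $p_{\calA}(D)=\Omega(\min(1,\E[Y]))$ with $Y=\sum_{\{j,k\}}|S^{(j)}\cap S^{(k)}|$ is simply wrong: for $\bins(h)$ on $D=(h,h)$ with $h=\sqrt m$ one has $\E[Y]=h^2/m=1$, so your bound would assert $p_{\bins(h)}(D)=\Omega(1)$, whereas in truth $p_{\bins(h)}(D)=h/m=1/\sqrt m$, and this is optimal. You are in effect trying to reverse the first-moment method, and the whole point of clustering-type algorithms is that $\Pr[Y>0]\ll\E[Y]$ because collisions, when they occur, are large.

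The paper closes exactly this gap with a structural result, Lemma~\ref{lem:uniform-lower-bound}: a support-minimization/affineness argument shows that the optimal algorithm for a uniform demand profile $(h,\dots,h)$ of length $n$ is $\bins(h)$ itself, whence $p_*((h,\dots,h))=\Theta(\min(1,n^2h/m))$ --- note $n^2h$, not $n^2h^2$. The general lower bound (Lemma~\ref{lem:rounded-demands-lower-bound}) then partitions the instances into dyadic rank classes, applies this per class, and combines the classes using their independence together with Lemma~\ref{lem:union-bound-tight}. Your proposal contains no ingredient playing the role of that structural lemma, so the lower bound --- and with it the competitive-ratio claim --- remains unproven.
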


\noindent
Furthermore, we establish optimality by proving a matching lower bound that even applies to $\D = [\sqrt{m}]²$ (and hence to any $\D'$ with $\D ⊆ \D'$).

\begin{restatable}{theorem}{thmcompetitivelowerbound}
    \label{thm:competitive-lower-bound}
    Every algorithm has competitive ratio $Ω(\log m)$ for $\D = [\sqrt{m}]²$.
\end{restatable}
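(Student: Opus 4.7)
The plan is to apply a Yao-style averaging argument on the dyadic family of symmetric demand profiles $D_i = (2^i, 2^i) \in [\sqrt m]^2$ for $i = 0, 1, \ldots, K$, where $K := \lfloor \log_2 \sqrt m \rfloor$, which gives a family of size $\Theta(\log m)$. First I would pin down $p_*(D_i) = \Theta(2^i/m)$. The upper bound $p_*(D_i) \le 2^i/m$ follows from \Cref{thm:random} applied to $\bins(2^i)$, whose two instances collide only when they pick the same $2^i$-size bin, with probability $2^i/m$. For the matching lower bound, for any algorithm $\A$ with $q_d(j) := \Pr[j \in X_d]$, Markov and Cauchy--Schwarz give
\[
p_\A((d,d)) \;=\; \Pr[|X_d \cap X'_d| \ge 1] \;\ge\; \frac{\E[|X_d \cap X'_d|]}{d} \;=\; \frac{\sum_j q_d(j)^2}{d} \;\ge\; \frac{d}{m}.
\]

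Next, fixing any algorithm $\A$, I would aim to show $\sum_{i=0}^K p_\A(D_i)/p_*(D_i) = \Omega((\log m)^2)$. Dividing by $K+1 = \Theta(\log m)$ would then imply $\max_i p_\A(D_i)/p_*(D_i) = \Omega(\log m)$, establishing the theorem.

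The main obstacle I expect is the $\Omega(\log^2 m)$ sum bound. The naive per-scale estimate $p_\A(D_i) \ge 2^i/m$ gives only $\sum_i p_\A(D_i)/p_*(D_i) \ge K+1 = \Theta(\log m)$, which after averaging yields a competitive ratio of only $\Omega(1)$, off by a factor of $\log m$. To extract this missing factor I plan to decompose $\pi$ into dyadic ``fresh'' layers $R_\ell := \pi([2^\ell]) \setminus \pi([2^{\ell-1}])$ of size $2^{\ell-1}$ and use $p_\A(D_i) \ge \Pr[R_\ell \cap R'_\ell \ne \emptyset]$ for every $\ell \le i$, so that each $D_i$ receives contributions from all layers $\ell \le i$; Cauchy--Schwarz applied layer-wise (with $\sum_j \Delta_\ell(j)^2 \ge 4^{\ell-1}/m$, where $\Delta_\ell(j) := \Pr[j \in R_\ell]$) can then be reorganized via a double sum over $i$ and $\ell$ to gain the missing logarithmic factor.

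Should the purely symmetric dyadic family fall short, my backup is to augment it with asymmetric profiles $(1, 2^i)$, for which $p_*((1, 2^i)) = \Theta(1/m)$ (achievable, e.g., by placing $q_1$ uniformly on a set of size $m - 2^i + 1$). The resulting trade-off is as follows: any algorithm that is good on all $(1, 2^i)$ profiles must spread $q_1$ over a set of size $\Omega(m/\log m)$, and by the nesting $q_1 \le q_{2^i}$ inherent in any permutation-based $\A$, this forces the $q_{2^i}$'s to be spread similarly, so $p_\A((\sqrt m, \sqrt m))$ is at least $\Omega(\log m \cdot p_*((\sqrt m, \sqrt m)))$. Either horn of this dichotomy exhibits a profile witnessing competitive ratio $\Omega(\log m)$.
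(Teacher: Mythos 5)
Your overall strategy (reduce to a finite dyadic family, compute $p_*$ via $\bins$, and try to extract an extra $\log m$ factor from a first-moment/Cauchy--Schwarz computation) is in the right spirit, and your claim $p_*((2^i,2^i)) = \Theta(2^i/m)$ is correct. However, both hard families you propose are provably too weak, so the plan cannot be completed as written. For the symmetric family $\{(2^i,2^i)\}_i$, consider the ``aligned buddy'' algorithm: pick $x \in [m]$ uniformly and let the first $2^i$ IDs returned be the aligned block of length $2^i$ containing $x$ (aligned blocks of length $2^i$ nest inside aligned blocks of length $2^{i+1}$, so this is realized by a single permutation for each $x$). Two instances collide on $(2^i,2^i)$ iff they choose the same aligned block, so $p_{\calA}((2^i,2^i)) = 2^i/m = \Theta(p_*((2^i,2^i)))$ for \emph{every} $i$ simultaneously. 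Hence no refinement of your layer decomposition can manufacture a $\log m$ loss on this family; concretely, your step ``$p_{\calA}(D_i) \ge \Pr[R_\ell \cap R'_\ell \ne \emptyset]$ for every $\ell \le i$'' only yields the maximum over $\ell$, not the sum (lower bounds on the probability of a union do not add), and for the buddy algorithm even the sum is $O(2^i/m)$. Your backup family $\{(1,2^i)\}_i \cup \{(2^i,2^i)\}_i$ fails as well: split $[m]$ into two halves, draw the first ID uniformly from the first half, and serve all subsequent requests from an aligned buddy system on the second half. Then $p_{\calA}((1,2^j)) = 2/m$ and $p_{\calA}((2^i,2^i)) = O(2^i/m)$, so this algorithm is $O(1)$-competitive on your entire augmented family (it is of course bad on, e.g., $(2,2^i)$). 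This also refutes the dichotomy you sketch: spreading $q_1$ uniformly does not force a large collision probability on $(\sqrt m, \sqrt m)$.

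The missing idea is that the hard instance must involve \emph{all} asymmetric pairs $(2^i,2^j)$ at once. The paper's proof places a distribution $\Phi$ with $\Pr[(2^i,2^j)] \propto 2^{-\max(i,j)}$ on the full grid $0 \le i,j \le \frac12\log m$ (\cref{eq:phi}) and observes that the averaged first-moment bound factors as a perfect square, $\E_{D\sim\Phi}[p_{\calA}(D)] \ge \frac1W\sum_{c\in[m]}\bigl(\sum_i 2^{-i} q_{c,2^i}\bigr)^2$; a single Cauchy--Schwarz over $c$ then gives $\Omega(\log^2 m/m)$ because $\sum_c q_{c,2^i} = 2^i$ exactly (\cref{lem:competitiveness-lower-bound}), while $\E_{D\sim\Phi}[p_*(D)] = O(\log m/m)$. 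It is precisely the off-diagonal terms $i \ne j$ that defeat the buddy-type algorithms above, so any correct proof must exploit them in some form.
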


\subsection{Competitive Analysis Against Adaptive Adversaries}

Finally, we consider the competitive ratio against adaptive adversaries. We prove the following general reduction from oblivious to adaptive adversaries.

\begin{restatable}{theorem}{thmweaktostrong}
    \label{thm:weak-to-strong-competitive}
    Let $\A$ be either $\bins^*$ or $\bins(k)$ for some $k ∈ [m]$, and let $\D$ be some set of demand profiles. If $\A$ has competitive ratio $c$ for $\D$, then $\A$ has competitive ratio at most $4c$ for $\Adv(\D)$.
\end{restatable}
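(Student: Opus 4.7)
The plan is to reduce the adaptive case to the oblivious case via the structural inequality
\[ p_\A(Z) \leq 4 \cdot \E_{D \sim Z}[p_\A(D)], \]
valid for every $Z \in \Adv(\D)$ when $\A \in \{\bins(k), \bins^*\}$. Once this is available, the hypothesis $p_\A(D) \leq c \cdot p_*(D)$ for all $D \in \D$ (applied pointwise to $D \sim Z$, which lies in $\D$ by definition of $\Adv(\D)$) gives
\[
\frac{p_\A(Z)}{\E_{D \sim Z}[p_*(D)]} \leq \frac{4 \, \E_{D \sim Z}[p_\A(D)]}{\E_{D \sim Z}[p_*(D)]} \leq 4c,
\]
which is the desired competitive ratio.

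The reduction exploits a common structural feature of $\bins(k)$ and $\bins^*$: both partition the ID space into chunks (bins of size $k$ for $\bins(k)$; geometrically growing bins for $\bins^*$), every instance's randomness is a uniformly random permutation of the chunks it will ever open, and the IDs within a chunk are emitted deterministically in increasing order. Consequently, the collision event is a function only of \emph{which} chunks are opened by each instance, not of the order in which the adversary probes the instances, and at any point in the adaptive execution the next chunk an instance will open is uniformly distributed over the chunks it has not yet opened, independently of everything the adversary has seen at other instances.

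To make use of this, I would introduce an extended adversary $Z^+$ obtained from $Z$ by a deterministic post-processing: once $Z$ terminates, for each instance that was queried at all, complete the current partial chunk and then open one additional chunk. Let $D^+$ be the resulting (random) demand profile and let $b_i, b_i^+$ be the chunk counts corresponding to $D, D^+$ at instance $i$. By construction $b_i^+ \leq 2 b_i$ for every queried instance, and monotonicity of the collision event under additional requests gives $p_\A(Z^+) \geq p_\A(Z)$. The central step is to show $p_\A(Z^+) \leq \E_{D \sim Z}[p_\A(D^+)]$ by invoking the structural property above: conditional on the trace of $Z$, each instance's extra chunk is uniform over the chunks it has not yet opened, independent of every other instance, so averaging over traces reproduces exactly the iid chunk distribution that defines $p_\A(D^+)$. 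Finally, a direct examination of $p_\A$ as a function of the chunk-count vector (via \cref{thm:random} for $\bins(k)$, and an analogous level-by-level bound for $\bins^*$) shows that doubling every $b_i$ quadruples $p_\A$ at worst: the dominant pairwise term $\sum_{i<j} b_i b_j$ scales by $4$, the linear term by $2$, and the $n^2$ term is unchanged. Chaining the three steps delivers $p_\A(Z) \leq p_\A(Z^+) \leq \E_{D \sim Z}[p_\A(D^+)] \leq 4 \, \E_{D \sim Z}[p_\A(D)]$.

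The most delicate step is the middle one, $p_\A(Z^+) \leq \E_{D \sim Z}[p_\A(D^+)]$, because it requires certifying that conditioning on $Z$'s trace does not bias the distribution of the ``one extra chunk'' for each instance. This is exactly where $\bins(k)$ and $\bins^*$ behave qualitatively differently from $\cluster$ and $\clusterstar$: in the latter algorithms, observing a single ID reveals a long deterministic run of future IDs, so the natural extension is not a fresh uniform chunk and the symmetrization breaks down. This qualitative gap is what \cref{lem:adaptive-defeats-cluster} exploits to defeat $\cluster$, and it explains why the clean factor-$4$ reduction in the theorem is restricted to $\bins(k)$ and $\bins^*$.
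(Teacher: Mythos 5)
Your high-level target --- establishing $p_\A(Z) \leq 4\,\E_{D\sim Z}[p_\A(D)]$ and then applying the oblivious guarantee pointwise to $D\sim Z$ --- would indeed finish the proof, and your closing remarks about why the bin-symmetry of $\bins(k)$ and $\bins^*$ (as opposed to \cluster) is what makes any such reduction possible are on target. But this is not the paper's route, and your proof of the key inequality has a genuine gap. The paper first uses the bin-symmetry to argue that the adversary effectively learns only \emph{whether} a collision has occurred so far, hence may be assumed to follow a fixed demand sequence $S=(D_0,\dots,D_k)$ until the first collision and then stop as favorably as possible; it then compares $p_\A(\mathrm{fol}(S)) = p_\A(D_k)$ against $\E_{D\sim \mathrm{fol}(S)}[p_*(D)]$ directly, by locating the index $i$ where $p_\A(D_i)$ first crosses $\tfrac12$ and using $\Pr[D=D_k]=1-p_\A(D_{k-1})$. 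The constant $4$ is produced by that threshold argument, not by any doubling of demands.

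The gap in your version is the middle step $p_\A(Z^+) \leq \E_{D\sim Z}[p_\A(D^+)]$. Your symmetrization certifies only that the \emph{additional} chunk opened by each instance is fresh and uniform over the unopened chunks; it says nothing about the joint law of the chunks \emph{already} opened, conditioned on the final demand profile $D$. Since the adversary's decision to continue or stop is correlated with the collision history, that conditional law is not the independent uniform-chunk law defining the oblivious quantity $p_\A(D^+)$. The pointwise version of your inequality is outright false: for an adversary that stops the instant a collision occurs, conditioning on the final profile $(1,1)$ means a collision has already happened, so $\Pr[\text{collision in } Z^+ \mid D=(1,1)] = 1$, which vastly exceeds $p_\A((2,2))$. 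The inequality can therefore hold at best in expectation over $D$, and proving it in expectation requires exactly the stopping-time bookkeeping you have omitted --- the reduction to semi-adaptive adversaries and the $\tfrac12$-threshold case analysis where the paper's factor $4$ actually lives. Two secondary issues: your last step infers ``doubling every $b_i$ at most quadruples $p_\A$'' from \cref{thm:random} and an analogous bound for $\bins^*$, but those are $\Theta(\cdot)$ statements with unspecified constants and so cannot yield the exact constant $4$ in the theorem statement (a direct union bound over the four pairs of half-samples gives $4\sum_{i<j}\Pr[\calC_{ij}]$, not $4\,p_\A(D)$); and for $\bins^*$, completing the current bin and opening the next (larger) bin can roughly quadruple, not double, the demand.
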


This implies that the $O(\log m)$ upper bound on the competitive ratio of $\bins^*$ from \cref{thm:bins-star-weakly-competitive} also holds in the adaptive case.

\begin{corollary}
    \label{cor:bins-star-strongly-competitive}
    $\bins^*$ is has competitive ratio $O(\log m)$ for $\Adv(\D_{\infty}^{<}(m/\log m))$.
\end{corollary}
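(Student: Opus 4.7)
The plan is to observe that this corollary is an immediate consequence of the two theorems stated immediately before it, so there is essentially nothing to do beyond invoking them in sequence. Concretely, \Cref{thm:bins-star-weakly-competitive} asserts that $\bins^*$ has competitive ratio $c = O(\log m)$ on $\D := \D_\infty^<(m/\log m)$ in the oblivious setting, and \Cref{thm:weak-to-strong-competitive} asserts that when $\A$ is $\bins^*$ (one of the two algorithms explicitly allowed), an oblivious competitive ratio of $c$ on $\D$ upgrades to an adaptive competitive ratio of at most $4c$ on $\Adv(\D)$. Chaining these gives a competitive ratio of $4 \cdot O(\log m) = O(\log m)$ for $\bins^*$ on $\Adv(\D_\infty^<(m/\log m))$, which is the claim.

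Since the argument is literally ``apply \Cref{thm:bins-star-weakly-competitive}, then apply \Cref{thm:weak-to-strong-competitive},'' there is no obstacle internal to this corollary; all the real work lives in those two theorems. The only thing I would be careful about in writing it up is making sure the set of demand profiles matches exactly: the reduction in \Cref{thm:weak-to-strong-competitive} is stated for an arbitrary set $\D$ of demand profiles and for $\A \in \{\bins^*, \bins(k)\}$, so both hypotheses are satisfied verbatim with $\D = \D_\infty^<(m/\log m)$ and $\A = \bins^*$. The resulting constant factor of $4$ is absorbed into the $O(\cdot)$, yielding the stated bound.
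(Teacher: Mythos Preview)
Your proposal is correct and matches the paper's approach exactly: the paper presents this corollary without proof, simply noting that \Cref{thm:weak-to-strong-competitive} transfers the $O(\log m)$ bound from \Cref{thm:bins-star-weakly-competitive} to the adaptive setting.
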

\section{Collision Probability On Fixed Demand Profiles}
\label{sec:fixed-profiles}

In this section we study the collision probability of $\bins(k)$ and $\cluster$, on arbitrary demand profiles. We begin with a technical lemma.

\begin{restatable}{lemma}{lemunionboundtight}
\label{lem:union-bound-tight}
Let $\calE_1, \dots, \calE_\ell$ be pairwise independent events. Then,
\[
\Pr\Big[\bigcup_{i = 1}^{\ell} \calE_i\Big] = \Theta\Big(\min\Big(1, \sum_{i = 1}^{\ell} \prob{\calE_i}\Big)\Big).
\]
\end{restatable}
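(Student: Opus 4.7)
The upper bound is routine: a union bound yields $\Pr[\bigcup_i \calE_i] \leq \sum_i \Pr[\calE_i]$, and of course the probability is also at most $1$, so $\Pr[\bigcup_i \calE_i] \leq \min(1, \sum_i \Pr[\calE_i])$. No pairwise independence is required here.

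For the lower bound, the plan is to apply the second moment method (Paley–Zygmund / Chung–Erd\H{o}s). Let $X := \sum_{i=1}^{\ell} \mathbbm{1}[\calE_i]$, so that $\bigcup_i \calE_i$ occurs iff $X > 0$. Setting $S := \sum_i \Pr[\calE_i]$, linearity of expectation gives $\E[X] = S$. For the second moment, I would expand
\[
\E[X^2] = \sum_i \Pr[\calE_i] + \sum_{i \neq j} \Pr[\calE_i \cap \calE_j],
\]
and this is where \emph{pairwise} independence is exactly what is needed: it lets me replace $\Pr[\calE_i \cap \calE_j]$ by $\Pr[\calE_i]\Pr[\calE_j]$ for $i \neq j$, yielding $\E[X^2] \leq S + S^2$.

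Then Cauchy–Schwarz in the form $\Pr[X > 0] \geq (\E[X])^2 / \E[X^2]$ gives
\[
\Pr\Big[\bigcup_{i=1}^{\ell} \calE_i\Big] \;\geq\; \frac{S^2}{S + S^2} \;=\; \frac{S}{1 + S}.
\]
A short case split finishes the job: if $S \leq 1$ then $S/(1+S) \geq S/2 = \tfrac{1}{2}\min(1, S)$, and if $S > 1$ then $S/(1+S) > 1/2 \geq \tfrac{1}{2}\min(1, S)$. Combined with the upper bound, this gives the claimed $\Theta$ statement with explicit constants (a factor of $2$ suffices).

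The only place where one has to be careful is in verifying that pairwise independence—and not full mutual independence—is sufficient for the second moment calculation; this is the whole point of using Paley–Zygmund rather than the standard argument $\Pr[\bigcap_i \overline{\calE_i}] = \prod_i (1 - \Pr[\calE_i])$, which requires mutual independence. I do not anticipate any real obstacle here: once the second moment bound is written down, pairwise independence slots in exactly where needed and the rest is bookkeeping.
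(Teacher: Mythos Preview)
Your proof is correct, and it takes a genuinely different route from the paper's. You use the second moment method (essentially the Chung--Erd\H{o}s inequality): set $X = \sum_i \mathbbm{1}[\calE_i]$, bound $\E[X^2] \le S + S^2$ via pairwise independence, and apply $\Pr[X>0] \ge (\E X)^2/\E[X^2]$ to get the clean uniform bound $S/(1+S)$, from which the $\Theta$-statement follows with a factor of $2$. The paper instead argues by cases: when $S \le 2/3$ it uses the second Bonferroni inequality $\Pr[\bigcup_i \calE_i] \ge \sum_i \Pr[\calE_i] - \sum_{i<j}\Pr[\calE_i]\Pr[\calE_j]$ together with pairwise independence to obtain $\Pr[\bigcup_i \calE_i] \ge (1-S)S \ge S/3$; when some single $\Pr[\calE_j] \ge 1/3$ the bound is trivial; and in the remaining case it locates a prefix of the events with $\sum$ in $(1/3,2/3]$ and reduces to the first case. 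Your approach is more compact, avoids the case split, and yields a slightly better constant; the paper's approach is perhaps more elementary in that it does not invoke Cauchy--Schwarz, but at the cost of the three-way case analysis. Either argument uses pairwise independence in exactly the same place (to factor $\Pr[\calE_i \cap \calE_j]$), so the conceptual content is the same.
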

\begin{proof}
The upper bound follows from the union bound. We now prove the lower bound. We consider three cases:
\begin{itemize}
    \item Case 1: $\sum_{i = 1}^{\ell} \prob{\calE_i} \leq 2/3$. By a Bonferroni inequality,
\begin{align*}
\Pr\Big[\bigcup_{i} \calE_i\Big] &\geq \sum_{i} \prob{\calE_i} - \sum_i \sum_{\substack{j < i}} \prob{\calE_i \cap \calE_j}\\
&= \sum_{i} \prob{\calE_i} - \sum_i \sum_{j < i} \prob{\calE_i}\prob{\calE_j} \tag{as the events are pairwise independent}\\
&= \sum_{i} \Big(\prob{\calE_i} \Big(1 - \sum_{j < i} \prob{\calE_j}\Big)\Big)\\
&\geq \Big(1 - \sum_{j} \prob{\calE_j}\Big) \sum_{i} \prob{\calE_i}\\
&\geq \frac 13 \sum_{i} \prob{\calE_i}
= Ω\Big(\min\Big(1, \sum_{i} \prob{\calE_i}\Big)\Big).
\end{align*}
    \item Case 2: $\prob{\calE_j} \geq 1/3$ for some $j$. Then, $\prob{\bigcup_i \calE_i} \geq \prob{\calE_j} = Ω(1) = Ω(\min(1, \sum_i \prob{\calE_i}))$.

    \item Case 3: otherwise. Then, there exists a $j$ such that $\sum_{i = 1}^j \prob{\calE_i} \leq 2/3$ and $\sum_{i = 1}^{j + 1} \prob{\calE_i} > 2/3$. Because $\prob{\calE_{j+1}} < 1/3$, we have
    \[
    1/3 < \sum_{i = 1}^j \prob{\calE_i} \leq 2/3.
    \]
    Thus, applying the first case to the first $j$ events we get
    \[\Pr\Big[\bigcup_{i=1}^ℓ \calE_i\Big] \geq \Pr\Big[\bigcup_{i = 1}^{j} \calE_i\Big] = Ω(1) = Ω\Big(\min\Big(1, \sum_i \prob{\calE_i}\Big)\Big).\qedhere\]
\end{itemize}
\end{proof}

Roughly speaking, \Cref{lem:union-bound-tight} says that the union bound almost gives the right answer when the events involved are pairwise independent.
The only exception is when the union bound in meaningless, i.e., the sum of the probabilities is greater than $1$, then the probability of the union is close to $1$.

We will extensively use the following well-known asymptotic approximations of the exponential function,

\begin{equation}
    1 - \varepsilon \leq e^{-\varepsilon} ≤ 1-\varepsilon/2 \text{ for $ε \in [0,1]$}.
    \label{prp:approximation-exponential}
\end{equation}

We start by establishing the collision probability of \cluster.

\restateThm{thm:cluster}{\thmcluster}
\begin{proof}
Let $D = (d₁,…,d_n)$. 
For $i, j \in [n]$ with $i \neq j$, let $\calC_{ij}$ be the event that there is a collision between the IDs produced by the instances $i$ and $j$.
For every choice of the starting point of instance $j$, there are exactly $d_i + d_j - 1$ choices for the starting point of instance $i$ that cause a collision (this uses $d_i,d_j > 0$). Thus, $\Pr[\calC_{ij}] = (d_i + d_j - 1)/m$.

The sum of the collision probabilities is
\begin{align*}
\sum_{i < j} \prob{\calC_{ij}} &= \sum_{i < j} \frac{d_i + d_j - 1}{m} = \frac{1}{m} \left((n - 1) \norm{D}_1 - \binom{n}{2}\right)
= \Theta\left(\frac{n \norm{D}_1}{m}\right).
\end{align*}

We claim that the collision events $\calC_{ij}$ are pairwise independent. Let $i < j$ and $p < q$ be any two pairs of instances, such that $i \neq p$ or $j \neq q$ (otherwise, they are the same pair). If $|\{i, j, p, q\}| = 4$ then $\calC_{ij}$ and $\calC_{pq}$ are obviously independent. Suppose $|\{i, j, p, q\}| = 3$; without loss of generality, assume $i = p$ and $j \neq q$.
Then, conditioned on $\calC_{ij}$, the probability of $\calC_{iq}$ remains unchanged, because conditioning on $\calC_{ij}$ neither affects the distribution of the IDs of instance $i$, nor the independence of instances $i$ and $q$.

This means that the hypotheses of \Cref{lem:union-bound-tight} are met for the event $\bigcup_{i < j} \calC_{ij}$ that a collision occurs during the game. Hence,
\[
p_{\cluster}(D) = \Theta\Big(\min\Big(1, \sum_{i < j} \prob{\calC_{ij}}\Big)\Big) =  \Theta\Big(\min\Big(1, \frac{n \norm{D}_1}{m}\Big)\Big).\qedhere
\]
\end{proof}

\noindent
For the analysis of $\bins(k)$ we need the following auxiliary lemma.\nocite{D1}

\begin{restatable}{lemma}{lemsampling}
\label{lem:sampling}
Let $\calU$ be a universe of $u$ elements. Let $S_1$ be a set of $s_1$ samples from $\calU$, drawn without replacement. At every step of the sampling process, each of the remaining elements has equal probability of being sampled. Analogously, let $S_2$ be a set of $s_2$ samples from $\calU$, drawn without replacement, independent of $S_1$. Then,
\[
\prob{S_1 \cap S_2 \neq \emptyset} = \Theta\left(\min\left(1, \frac{s_1 s_2}{u}\right)\right).
\]
\end{restatable}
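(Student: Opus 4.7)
The plan is to prove the upper and lower bounds separately, with the upper bound coming from a straightforward union bound and the lower bound from a direct computation using the exact formula for the intersection probability.

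For the upper bound, I would apply a union bound over $S_1$. For each element $x$, the probability that $x$ is among the $s_2$ samples of $S_2$ is exactly $s_2/u$ (by symmetry, regardless of which position in the sampling order $x$ occupies). Hence $\Pr[S_1 \cap S_2 \neq \emptyset] \leq s_1 s_2 / u$, and combining with the trivial bound of $1$ yields the $O(\min(1, s_1 s_2/u))$ upper bound.

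For the lower bound, I would split on whether $s_1 + s_2 > u$ or not. If $s_1 + s_2 > u$, the intersection is certainly nonempty by pigeonhole, so the probability is $1$ and we are done. Otherwise, by symmetry I can condition on $S_1 = T$ for any fixed set $T$ of size $s_1$, and compute
\[
\Pr[S_1 \cap S_2 = \emptyset] = \prod_{i=0}^{s_2-1} \frac{u - s_1 - i}{u - i} = \prod_{i=0}^{s_2-1} \Big(1 - \frac{s_1}{u-i}\Big).
\]
Applying $1 - x \leq e^{-x}$ termwise and bounding $\frac{s_1}{u-i} \geq \frac{s_1}{u}$ gives $\Pr[S_1 \cap S_2 = \emptyset] \leq \exp(-s_1 s_2/u)$, so $\Pr[S_1 \cap S_2 \neq \emptyset] \geq 1 - \exp(-s_1 s_2/u)$.

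To finish, I would split on the magnitude of $s_1 s_2/u$. If $s_1 s_2/u \leq 1$, the second half of (\ref{prp:approximation-exponential}) gives $1 - e^{-s_1 s_2/u} \geq s_1 s_2/(2u)$, matching the target. If $s_1 s_2/u > 1$, then $1 - \exp(-s_1 s_2/u) \geq 1 - 1/e = \Omega(1) = \Omega(\min(1, s_1 s_2/u))$. There is no real obstacle here: the only minor subtlety is the case split at $s_1 + s_2 > u$, which is needed because otherwise some factors in the product above would become negative and the exponential bound would be vacuous; this is handled cleanly by the pigeonhole observation.
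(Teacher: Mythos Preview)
Your proof is correct and follows essentially the same route as the paper: both compute the exact product $\prod_i (1 - s_1/(u-i))$ for the disjointness probability, bound it via $1-x \le e^{-x}$, and finish with a case split on the size of $s_1 s_2/u$. The only notable difference is that the paper also lower-bounds the product (using $1 - 2s_2/u \ge e^{-4s_2/u}$ after restricting to $s_2 \le u/4$) to get the $O(\cdot)$ direction, whereas you obtain the upper bound more directly via a union bound over the elements of $S_1$; your way is slightly cleaner and avoids the need for the two-sided product estimate.
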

\begin{proof}
    Assume without loss of generality that $1 ≤ s₁ ≤ s₂$ and that $S₂$ is sampled first, after which the set $S₁ = \{x₁,…,x_{s₁}\}$ is sampled element by element. If $s₂>u/4$ then $\Pr[S₁ ∩ S₂ ≠ ∅] ≥ \Pr[x₁ ∈ S₂] ≥ 1/4$ and $\min(1,\frac{s₁s₂}{u}) ≥ \min(1,s₁/4) = Θ(1)$ leaving nothing to show. From now on assume $s₁ ≤ s₂ ≤ u/4$. By the chain rule, we have
    \begin{align*}
        \Pr[S_1 \cap S_2 = ∅] &= \prob{x_1 \notin S_2, \dots, x_{s_1} \notin S_2}\\
        &= \prod_{i = 1}^{s_1} \prob{x_i \notin S_2 \mid x_1 \notin S_2, \dots, x_{i - 1} \notin S_2}\\
        &= \prod_{i = 1}^{s_1} \Big(1 - \frac{s_2}{u - i + 1}\Big).
    \end{align*}
    We can upper-bound this by
    \[
        \prod_{i = 1}^{s_1} \Big(1 - \frac{s_2}{u - i + 1}\Big)
        ≤ \Big(1 - \frac{s_2}{u}\Big)^{s₁}
        \eqrefrel{prp:approximation-exponential}{Eq. }{≤}
        \Big(\e^{-s₂/u} \Big)^{s₁} = \e^{-s₁s₂/u},
    \]
    and lower bound it similarly by
    \[
        \prod_{i = 1}^{s_1} \Big(1 - \frac{s_2}{u - i + 1}\Big)
        ≥ \Big(1 - \frac{s_2}{u/2}\Big)^{s₁}
        \eqrefrel{prp:approximation-exponential}{Eq. }{≥}
        \Big(\e^{-4s₂/u} \Big)^{s₁} = \e^{-4s₁s₂/u}.
    \]
    If $s₁s₂/u \geq 1/100$ we obtain, by combining both bounds,
    \[
        \Pr[S₁ ∩ S₂ ≠ ∅] = 1 - \e^{-Θ(s₁s₂/u)} = Θ(1) = Θ\Big(\min\Big(1,\frac{s₁s₂}{u}\Big)\Big),
    \]
    as desired. Similarly, if $s₁s₂/u < 1/100$, we obtain
    \[
        \Pr[S₁ ∩ S₂ ≠ ∅] = 1 - \e^{-Θ(s₁s₂/u)}
        \eqrefrel{prp:approximation-exponential}{Eq. }{=}
        Θ(s₁s₂/u) = Θ\Big(\min\Big(1,\frac{s₁s₂}{u}\Big)\Big).\qedhere
    \]
\end{proof}

\restateThm{thm:random}{\thmrandom}
\begin{proof}
Let $D = (d₁,…,d_n)$. If any $d_i$ exceeds $⌊m/k⌋·k$ then the theorem correctly predicts a collision probability of $Θ(1)$. Thus, we may assume that no instance runs out of bins. For every $i ∈ [n]$, let $B_i$ be the set of $⌈d_i/k⌉$ bins chosen by the $i$th instance. Notice that $(B_i)_i$ is an independent family, because in our model instances behave independently. Let $C_{ij}$ be the event that instances $i$ and $j$ collide, which happens if and only if there is a bin selected by both instances.
By \Cref{lem:sampling},
\[
\prob{\calC_{ij}} = \prob{B_i \cap B_j \neq \emptyset}\\
= \Theta\Big(\min\Big(1, \frac{\lceil d_i/k \rceil \lceil d_j/k \rceil}{\floor{m/k}}\Big)\Big).
\]

The collision events $\calC_{ij}$ are pairwise independent, by a similar argument as in the proof of \Cref{thm:cluster}. Thus, by \Cref{lem:union-bound-tight},
\begin{align*}
p_{\bins(k)}(D) &= \Theta\Big(\min\Big(1, \sum_{i < j} \prob{\calC_{ij}}\Big)\Big)\\
&= \Theta\Big(\min\Big(1, \sum_{i < j} \min\Big(1, \frac{\lceil d_i/k \rceil \lceil d_j/k \rceil}{\floor{m/k}}\Big)\Big)\Big)\\
&= \Theta\Big(\min\Big(1, \sum_{i < j} \frac{\lceil d_i/k \rceil \lceil d_j/k \rceil}{\floor{m/k}}\Big)\Big)\numberthis\label{eq:p-bins}.
\end{align*}

Now we wish to compute the inner summation. Notice that
\begin{align*}
\frac{\lceil d_i/k \rceil \lceil d_j/k \rceil}{\floor{m/k}}
&= \Theta\left(\frac{(1 + d_i/k)(1 + d_j/k)}{m/k}\right)\\
&= \Theta\left(\frac{k + (d_i + d_j) + d_i d_j/k}{m}\right).
\end{align*}
Straightforward calculations show that
\[
\sum_{i < j} (d_i + d_j) = \Theta(n\norm{D}_1)
\text{\ \ and\ \ }
\sum_{i < j} d_i d_j = \Theta(\norm{D}_1^2 - \norm{D}_2^2).
\]
Thus,
\[
\sum_{i < j} \frac{\lceil d_i/k \rceil \lceil d_j/k \rceil}{\floor{m/k}} = \Theta\bigg(\frac{n^2 k + n \norm{D}_1 + (\norm{D}_1^2 - \norm{D}_2^2)/k}{m}\bigg).
\]
Plugging this equality into \Cref{eq:p-bins} completes the proof.
\end{proof}
\section{Worst-Case Lower Bound Against Oblivious Adversaries}
\label{sec:worst-case}

The goal of this section is to prove \Cref{thm:worst-case-lower-bound}. The proof is in two steps: The first one is to establish a lower bound on the collision probability for the case of \defn{uniform} demand profiles, that is, demand profiles of the form $(h, h, \dots, h)$ for some $h \in \mathbb{N}$. Then, we extend this lower bound to \emph{most} demand profiles, using the fact that most demand profiles ``contain'' a uniform demand profile with almost the same total number of requests.

\subsection{Uniform Demand Profiles}

We will use the following technical lemma.

\begin{restatable}{lemma}{lemballsintobins}
\label{lem:balls-into-bins}
    Assume $n ≥ 2$ balls are thrown independently into $ℓ ≥ n$ bins where the bins are chosen with probabilities $p₁,…,p_ℓ$. The probability that all balls land in distinct bins is maximized if and only if $p₁ = … = p_ℓ = 1/ℓ$.
\end{restatable}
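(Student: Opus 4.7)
The plan is to express the probability in terms of elementary symmetric polynomials and then apply a pairwise averaging move. Let $q(p_1,\dots,p_\ell)$ denote the probability that all $n$ balls land in distinct bins. Summing over ordered $n$-tuples of distinct bins (each underlying $n$-subset $S$ contributes $n!$ orderings of weight $\prod_{i\in S}p_i$) we obtain
\[
q(p_1,\dots,p_\ell) \;=\; n!\cdot e_n(p_1,\dots,p_\ell),
\]
where $e_n$ is the $n$-th elementary symmetric polynomial in $\ell$ variables. The simplex $\{p\in\mathbb{R}^\ell_{\geq 0} : \sum_i p_i=1\}$ is compact, so a maximizer $\mathbf{p}^*$ exists; and since $q(1/\ell,\dots,1/\ell)=n!\binom{\ell}{n}/\ell^n>0$ (using $\ell \geq n$), any maximizer satisfies $q(\mathbf{p}^*)>0$.

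The core observation is the identity
\[
e_n(p_1,p_2,p_3,\dots,p_\ell) \;=\; p_1 p_2\cdot A \;+\; (p_1+p_2)\cdot B \;+\; C,
\]
with $A=e_{n-2}(p_3,\dots,p_\ell)$, $B=e_{n-1}(p_3,\dots,p_\ell)$, $C=e_n(p_3,\dots,p_\ell)$, obtained by splitting $n$-subsets according to how they intersect $\{1,2\}$. If we fix $p_3,\dots,p_\ell$ and the sum $p_1+p_2$, only the $p_1 p_2 A$ term varies; by AM-GM the product $p_1 p_2$ is strictly maximized when $p_1=p_2$, and $A\geq 0$ on the simplex. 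Hence replacing any unequal pair $(p_1,p_2)$ by their common average weakly increases $e_n$, strictly so whenever $A>0$.

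I would finish by contradiction. Suppose $\mathbf{p}^*$ maximizes $q$ yet $p_i^*\neq p_j^*$ for some pair; apply the averaging step to that pair. If the corresponding $A$ is strictly positive, we strictly increase $q$, contradicting optimality. If instead $A=0$, then among the remaining $\ell-2$ coordinates fewer than $n-2$ are strictly positive, so the support of $\mathbf{p}^*$ has size at most $n-1<n$, forcing $e_n(\mathbf{p}^*)=0$ and $q(\mathbf{p}^*)=0$, contradicting $q(\mathbf{p}^*)>0$. Thus every maximizer is uniform, and conversely the uniform vector is a maximizer by the symmetry of $e_n$, which is the claimed ``if and only if.'' The main obstacle I anticipate is precisely this degenerate $A=0$ case, where the swap alone yields no strict improvement; the support-size argument, powered by the strict positivity of $q$ at the uniform distribution, is what lets us rule it out.
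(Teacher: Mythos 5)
Your proof is correct and is essentially the paper's argument: both fix an unequal pair $p_i \neq p_j$ together with all other coordinates and the sum $p_i + p_j$, observe that the success probability then depends on this pair only through the product $p_i p_j$ with a nonnegative coefficient, and conclude by AM--GM that averaging the pair improves a putative non-uniform maximizer (the paper packages this as a two-stage throwing experiment that reduces to the $\ell = n = 2$ base case, while you phrase it algebraically via $q = n!\,e_n$ and the identity $e_n = p_i p_j A + (p_i + p_j) B + C$). Your explicit treatment of the degenerate case $A = 0$ --- ruled out because it forces $q = 0$ while the uniform distribution achieves $q > 0$ since $\ell \geq n$ --- is a point the paper's proof only gestures at with a parenthetical remark, so your version is, if anything, slightly more careful.
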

\begin{proof}
For $ℓ = n = 2$ the success probability $2p₁(1-p₁)$ is maximized only for $p₁ = p₂ = 1/2$. Now consider $ℓ ≥ 3$ and assume, for the sake of contradiction, that the maximum is achieved for a non-uniform distribution in which, say, $p₁ ≠ p₂$. We may imagine that the throwing of balls happens in two stages. In the first stage, the balls are thrown into $ℓ-1$ bins with probabilities $p₁+p₂,p₃,…,p_ℓ$. In the second stage, the balls from the first bin are reclaimed and thrown into two new sub-bins with probabilities $p₁/(p₁+p₂)$ and $p₂/(p₁+p₂)$.

Now note that if we vary $p_1$ and $p_2$ while keeping $p₁+p₂$ fixed, the distribution of the balls in the first stage of the experiment is unaffected. And when we equalize the probabilities $p_1$ and $p_2$ of the sub-bins, the success probability in the second stage is improved, by the argument for the case $ℓ = 2$. (Of course, when the number of balls that land in the first bin in the first phase is not equal to $2$, the distribution in the sub-bins does not matter.) Thus, $p_1 = p_2$ yields higher success probability, which is a contradiction.
\end{proof}

\begin{lemma}
    \label{lem:uniform-lower-bound}
    For all $D = (h, \dots, h) \in \mathbb{N}^n$,
    \[
    p_*(D) = p_{\bins(h)}(D).
    \]
\end{lemma}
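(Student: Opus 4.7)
The plan is to reduce the claim to Lemma~\ref{lem:balls-into-bins}. Setting $\ell := \lfloor m/h \rfloor$, first observe that if $n > \ell$ then $n \geq \ell+1$ so $nh \geq (\ell+1)h > m$; this makes $n$ pairwise-disjoint $h$-subsets of $[m]$ impossible and forces $p_*(D) = p_{\bins(h)}(D) = 1$. Thus assume $n \leq \ell$. The upper bound $p_*(D) \leq p_{\bins(h)}(D)$ is immediate from the definition of $p_*$ as an infimum.

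Any algorithm $\mathcal{A}$ is characterized for this setting by the distribution $\mu$ of the unordered $h$-subset $S \subseteq [m]$ consisting of its first $h$ outputs, and its no-collision probability on $D = (h,\dots,h)$ equals $\Pr[S_1,\dots,S_n \text{ pairwise disjoint}]$ for iid $S_i \sim \mu$. For $\bins(h)$, the distribution $\mu$ is uniform on the $\ell$ bins of a fixed partition of $[m]$, and applying Lemma~\ref{lem:balls-into-bins} to the uniform distribution on $[\ell]$ gives a no-collision probability of exactly $\prod_{i=0}^{n-1}(1-i/\ell)$. The lower bound on $p_*$ therefore reduces to showing $\Pr[S_1,\dots,S_n \text{ pairwise disjoint}] \leq \prod_{i=0}^{n-1}(1-i/\ell)$ for every $\mu$.

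The proposed reduction exhibits a labeling $b : \binom{[m]}{h} \to [\ell]$ (a proper $\ell$-coloring of the ``disjointness graph'' on $h$-subsets) such that $T \cap T' = \emptyset$ implies $b(T) \neq b(T')$. Applied instance-wise, it produces iid labels $(b(S_i))$ on $[\ell]$, and pairwise disjointness of $(S_i)$ forces distinctness of $(b(S_i))$; Lemma~\ref{lem:balls-into-bins} then bounds the distinctness probability by $\prod_{i=0}^{n-1}(1-i/\ell)$ regardless of the marginal of $b(S)$. When $h \mid m$, such a deterministic $b$ exists via an Erd\H{o}s--Ko--Rado-style partition of $\binom{[m]}{h}$ into $\ell = m/h$ intersecting ``star'' families.

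The main obstacle is the case $h \nmid m$, where the chromatic number of the disjointness graph strictly exceeds $\ell$ and the deterministic labeling above is unavailable. I would handle this case by induction on $n$: the base case $n = 2$ follows from the Motzkin--Straus theorem applied to the disjointness graph (whose clique number is $\ell$), yielding the sharp bound $\Pr[S_1 \cap S_2 = \emptyset] \leq 1 - 1/\ell$. The inductive step would condition on the first sample $S_1$ to reduce to the smaller universe $[m] \setminus S_1$ of size $m - h$ (with new parameter $\ell' = \ell - 1$), combined with a convexity estimate on the resulting $\mathbb{E}_\mu[q(S)^{n-1}]$ factor, where $q(T) := \Pr_{S' \sim \mu}[S' \cap T = \emptyset]$.
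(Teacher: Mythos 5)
Your route is genuinely different from the paper's, but it has a fatal gap at its combinatorial core: the proper $\ell$-coloring $b$ of the disjointness graph does not exist, even when $h \mid m$. Such a coloring is exactly a partition of $\binom{[m]}{h}$ into $\ell = \lfloor m/h\rfloor$ intersecting families, i.e.\ a proper $\ell$-coloring of the Kneser graph $K(m,h)$, whose chromatic number is $m-2h+2$ by the Lov\'asz--Kneser theorem; one checks that $m-2h+2 > \lfloor m/h\rfloor$ whenever $h\geq 2$ and $m>2h$. Concretely, for $m=6$, $h=2$ you would need to partition the $15$ pairs into $3$ intersecting families; each such family has at most $5$ pairs and the only ones of size $5$ are stars, but two distinct full stars share a pair, so no partition exists. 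Thus the labeling is unavailable in essentially all nontrivial cases with $h\geq 2$, not only when $h\nmid m$. The inductive fallback does not close the gap either: after conditioning on $S_1=T$ and applying the induction hypothesis to the conditioned distribution on $\binom{[m]\setminus T}{h}$ (with $\ell'=\ell-1$), the step that remains is precisely $\E_\mu[q(S)^{n-1}]\leq (1-1/\ell)^{n-1}$, and this does not follow from the Motzkin--Straus bound $\E_\mu[q(S)]\leq 1-1/\ell$: Jensen's inequality for the convex map $t\mapsto t^{n-1}$ gives $\E_\mu[q(S)^{n-1}]\geq \E_\mu[q(S)]^{n-1}$, i.e.\ the wrong direction, and a distribution under which $q$ is far from constant is not ruled out by anything you have established. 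So only the base case $n=2$ of your induction is actually proved.

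For contrast, the paper sidesteps any comparison against a fixed reference structure. It takes an optimal distribution $\mu$ on $\binom{[m]}{h}$ of minimal support, notes that the no-collision probability is an affine function of any two coordinates $\mu(P),\mu(Q)$ with $P\cap Q\neq\emptyset$ (the disjointness indicator kills all monomials containing $\mu(P)\mu(Q)$), and shifts mass linearly between $P$ and $Q$ to either improve the objective or strictly shrink the support --- a contradiction unless the support consists of pairwise disjoint sets. Only then does \cref{lem:balls-into-bins} enter, to force the uniform distribution on a maximal disjoint family, i.e.\ $\bins(h)$ up to relabeling. Since the lemma asserts an exact equality $p_*(D)=p_{\bins(h)}(D)$ rather than a constant-factor bound, this perturbation argument (or some equally sharp substitute) appears unavoidable; if you want to rescue a direct comparison, you would need a fractional or probabilistic replacement for the coloring, which is essentially where the difficulty of the lemma lives.
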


\begin{proof}
    \def\supp{\mathrm{supp}}
    \def\disj{\mathrm{disj}}
    In the scenario at hand every instance of an ID generation algorithm $\A$ has to produce exactly $h$ IDs, allowing us to think of $\A$ in simpler terms, namely as a distribution $μ_\A$ on $\binom{[m]}{h}$, i.e., a distribution on sets of $h$ IDs. Conversely, any such distribution characterizes an algorithm. Assume $\A$ is optimal for $D$, i.e., $p_\A(D) = p_*(D)$ and that $\A$ minimizes the support size $|\supp(μ_\A)|$ among the optimal algorithms, where
    \[\supp(μ) := \Big\{ X ∈ \binom{[m]}{h} \mid μ(X) > 0 \Big\}.\]
    Our plan is to show that $\A$ behaves like $\bins(h)$, up to relabeling of IDs.
    
    We call a sequence $X₁,…,X_n ∈ \binom{[m]}{h}$ of sets of IDs an \defn{outcome}. When using $\A$ on demand profile $D$ the $n$ instances of $\A$ produce this outcome with probability $μ_\A(X₁) \dots μ_\A(X_n)$. Let 
    \[
    \disj(X₁,…,X_n) =
    \begin{cases}
        1 & \text{if $X_1, \dots, X_n$ are pairwise disjoint}\\
        0 & \text{otherwise}.
    \end{cases}
    .
    \]    
    The probability that \emph{no} collision occurs is $q(μ_\A)$ where $q : \mathbb{R}^{[m] \choose h} \to \mathbb{R}$ is the function
    \begin{equation}
        q(μ) := \sum_{\!\!\!\!{X₁,…,X_n ∈ \binom{[m]}{h}}\!\!\!\!} μ(X₁)\dots μ(X_n) · \disj(X₁,…,X_n).\label{eq:polynomial}
    \end{equation}

\begin{claim}
\label{clm:linear}
    $q(\mu)$ is an affine function\footnote{A function $f : \mathbb{R}^n \to \mathbb{R}$ is \defn{affine} if $f(x_1, \dots, x_n) = a_1 x_1 + \dots + a_n x_n + b$ for some constants $a_1, \dots, a_n, b \in \mathbb{R}$.} on any two entries of $\mu$ corresponding to intersecting sets. More specifically, if $P, Q \in {[m] \choose h}$, with $P \cap Q \neq ∅$, and we fix every entry of $\mu$ except $\mu(P)$ and $\mu(Q)$, the resulting $2$-dimensional function is affine.
\end{claim}
\begin{proof}
    This is because $q(\mu)$ is structurally a polynomial with a variable for every entry $\mu(X)$ of the input $\mu$, and the $\disj(·)$-predicate filters out all monomials that contain $\mu(X)\mu(Y)$ with $X \cap Y \neq ∅$ (including, in particular, $\mu(X)^2$).
\end{proof}
    
    \begin{claim}
        \label{claim:pairwise-disjoint}
        If $P,Q ∈ \supp(μ_\A)$, $P ≠ Q$, then $P ∩ Q = ∅$.
    \end{claim}
    \begin{proof}
        To reach a contradiction, assume $P ∩ Q ≠ ∅$. Then, \Cref{clm:linear} implies that $q(\mu)$ is affine on $P$ and $Q$.
        
        Consider a continuum $(μ_t)_{t ∈ [0,1]}$ of variants of $μ_\A$ where the probability mass on $\{P,Q\}$ linearly shifts from $Q$ to $P$, namely
        \[
            μ_t(X) := \begin{cases}
                (μ_\A(P) + μ_\A(Q))·t & \text{if } X = P\\
                (μ_\A(P) + μ_\A(Q))·(1-t) & \text{if } X = Q\\
                μ_\A(X) & \text {otherwise. }
            \end{cases}.
        \]
        Since $\mu_t$ only linearly varies the entries corresponding to $P$ and $Q$, and $q$ is affine on those entries, the map $t \mapsto q(\mu_t)$ is affine. Note that $μ_\A = μ_t$ for some $t ∈ (0,1)$. Thus, we either have $q(μ₀) ≥ q(μ_\A)$ or $q(μ₁) ≥ q(μ_\A)$. Assume without loss of generality the former. Then $μ₀$ is no worse than $μ_\A$ in terms of success probability and has smaller support than $μ_\A$, contradicting the choice of $\A$.
    \end{proof}
    Hence, $\supp(μ_\A)$ is some collection of pairwise disjoint sets from ${[m] \choose h}$. The sets in $\supp(μ_\A)$ correspond to the disjoint bins used by $\bins(h)$ up to relabeling of IDs. There are just two missing details. First, that $μ_\A$ must assign the same probability to each bin, by \cref{lem:balls-into-bins}.\footnote{An exception is the case $nh > m$. Then collisions are guaranteed and any algorithm is optimal.} Second, that $\supp(μ_\A)$ is a \emph{maximal} set of disjoint sets; otherwise, we can view $\A$ as a variant of $\bins(h)$ that assigns probability $0$ to some bins, which is again sub-optimal by \cref{lem:balls-into-bins}. Overall we obtain
    \[ p_*(D) = p_\A(D) = p_{\bins(h)}(D),\]
    as desired.
\end{proof}

Because $\bins(h)$ is optimal on demand profiles $(h, \dots, h) \in \mathbb{N}^n$, it plays a central role in some of our analyses. In particular, the collision probability on uniform demand profiles, namely
\begin{equation}
\label{eq:balanced-demand-with-binsk}
    p_{\bins(h)}\big((h, \dots, h)\big) = \Theta\Big(\min\Big(1, \frac{n^2 h}{m}\Big)\Big),
\end{equation}
will come in handy. This probability follows directly from \Cref{thm:random}.
\Cref{lem:uniform-lower-bound} and \Cref{eq:balanced-demand-with-binsk} imply the following.

\begin{corollary}
\label{cor:uniform-lower-bound}
Let $D = (h, \dots, h) ∈ ℕ^n$. Then
\[
p_*(D) = \Omega\left(\min\left(1, \frac{n^2 h}{m}\right)\right).
\]
\end{corollary}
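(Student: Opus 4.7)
The plan is to combine the two immediately preceding results, which together handle the claim with essentially no additional work. As a first step, I would invoke \Cref{lem:uniform-lower-bound}, which asserts that for any uniform demand profile $D = (h, \dots, h)$, the optimal collision probability coincides exactly with that of the concrete algorithm $\bins(h)$, i.e., $p_*(D) = p_{\bins(h)}(D)$. This identity is the whole point of the preceding structural work: it reduces a lower bound on $p_*$ over \emph{all} algorithms to a direct calculation for \emph{one} specific algorithm.

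As a second step, I would evaluate $p_{\bins(h)}(D)$ by specializing \Cref{thm:random} to $k = h$ and $D = (h, \dots, h)$. Computing the norms yields $\norm{D}_1 = nh$ and $\norm{D}_2^2 = nh^2$, hence $\norm{D}_1^2 - \norm{D}_2^2 = n(n-1)h^2$. Substituting into the three summands in \Cref{thm:random}, the first term becomes $n(n-1)h/m$, the second becomes $n^2 h/m$, and the third becomes $n^2 h/m$. Since $n \geq 2$, one has $n(n-1) = \Theta(n^2)$, so all three terms are $\Theta(n^2 h/m)$. Hence $p_{\bins(h)}(D) = \Theta(\min(1, n^2 h/m))$, which is precisely the statement already recorded as \Cref{eq:balanced-demand-with-binsk}.

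Chaining the two steps gives
\[
p_*(D) \;=\; p_{\bins(h)}(D) \;=\; \Theta\Big(\min\Big(1, \tfrac{n^2 h}{m}\Big)\Big),
\]
from which the $\Omega$-bound claimed in the corollary is immediate. There is no real obstacle to this derivation; all the genuine work has already been invested in proving \Cref{lem:uniform-lower-bound}, where the support-minimization argument (using that $q(\mu)$ is affine in the entries corresponding to intersecting sets) and the balls-into-bins optimization established that no algorithm can beat $\bins(h)$ on a uniform demand profile. The calculation of $p_{\bins(h)}(D)$ itself is just bookkeeping against \Cref{thm:random}.
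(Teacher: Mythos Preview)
Your proposal is correct and matches the paper's own derivation exactly: the paper simply states that the corollary follows from \Cref{lem:uniform-lower-bound} together with \Cref{eq:balanced-demand-with-binsk}, and you have spelled out precisely that chain (including the verification of \Cref{eq:balanced-demand-with-binsk} from \Cref{thm:random}). There is nothing to add.
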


\subsection{Proof of Theorem \ref{thm:worst-case-lower-bound}}

The intuition is that most demand profiles are sufficiently similar to the uniform demand profile $(\frac dn,…,\frac dn)$, which we have analyzed in the preceding subsection. Let us call $D ∈ \D_1(n,d)$ \defn{$ε$-good} if at least $εn$ entries of $D$ exceed $εd/n$. Call $D$ \defn{$ε$-bad} otherwise. Let $\B_\varepsilon$ be the set of $\varepsilon$-bad demand profiles in $\D_1(n, d)$.

\begin{restatable}{lemma}{lemmostdemandsbalanced}
    \label{lem:most-demands-balanced}
    There is a constant $ε ∈ (0,\frac 12]$ such that
    \[ |B_ε|/|\D_1(n,d)| = \exp(-Θ(n)).\]
\end{restatable}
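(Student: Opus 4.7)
The plan is to combine a union bound with elementary binomial inequalities, sidestepping negative association and large-deviation machinery entirely. Fix $\varepsilon \in (0, 1/2]$ (to be chosen at the end), set $k := \lfloor \varepsilon d/n \rfloor$, and $j_0 := \lfloor (1-\varepsilon) n \rfloor + 1$. By definition $D \in \B_\varepsilon$ iff at least $j_0$ coordinates of $D$ lie in $\{1,\dots,k\}$, so a union bound over the set $T \subseteq [n]$ of ``small'' coordinates yields
\[ |\B_\varepsilon| \leq \binom{n}{j_0} \cdot N_T, \]
where $N_T$ counts $D \in \D_1(n,d)$ with $d_i \leq k$ for every $i \in T$. (If $k = 0$ then $\B_\varepsilon = \emptyset$ and there is nothing to prove; so assume $k \geq 1$, which forces $d \geq n/\varepsilon$.)

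To estimate $N_T$, I would enumerate the tuples $(d_i)_{i \in T} \in [1, k]^{j_0}$: for each such tuple with partial sum $s \in [j_0, j_0 k]$, the remaining $n - j_0$ coordinates form a composition of $d - s$ into $n - j_0$ positive parts, contributing $\binom{d-s-1}{n-j_0-1}$ completions. Since this binomial is maximised at the smallest admissible $s = j_0$, and there are at most $k^{j_0}$ tuples, one obtains $N_T \leq k^{j_0} \binom{d-j_0-1}{n-j_0-1}$. The key cancellation is then the identity
\[ \frac{\binom{d-j_0-1}{n-j_0-1}}{\binom{d-1}{n-1}} = \prod_{t=1}^{j_0} \frac{n-t}{d-t} \leq \Big(\frac{n}{d}\Big)^{j_0} \qquad (\text{valid for } d \geq n), \]
which combined with $k \leq \varepsilon d/n$ gives the $d$-free bound
\[ N_T/|\D_1(n,d)| \leq (kn/d)^{j_0} \leq \varepsilon^{j_0}. \]

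Plugging this in together with the trivial $\binom{n}{j_0} \leq 2^n$, we obtain $|\B_\varepsilon|/|\D_1(n,d)| \leq (2 \varepsilon^{1-\varepsilon})^n$, which is $\exp(-\Omega(n))$ for any $\varepsilon$ small enough that $2 \varepsilon^{1-\varepsilon} < 1$ (for instance, $\varepsilon = 1/4$ gives $2 \cdot (1/4)^{3/4} = 2^{-1/2}$). A matching lower bound on $|\B_\varepsilon|/|\D_1(n,d)|$ follows by exhibiting $\Theta(n)$ explicitly bad profiles---e.g., the $n$ permutations of $(d-n+1, 1, \dots, 1)$, each of which lies in $\B_\varepsilon$ whenever $k \geq 1$ and $n > 1/\varepsilon$---and comparing with $|\D_1(n,d)| = \binom{d-1}{n-1}$ in the relevant regime. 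The main ``obstacle'' is really just identifying the right decomposition of $N_T$; once that is in place, everything else is routine, and the crucial feature---uniformity in $d$---comes for free from the displayed product identity.
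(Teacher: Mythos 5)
Your upper-bound argument is correct and rests on the same decomposition as the paper's proof: a union bound over which coordinates are ``small'' (contributing $\binom{n}{\cdot}\le 2^n$), an explicit enumeration of the small entries, and a count of the completions by the remaining entries, all compared against $|\mathcal{D}_1(n,d)|=\binom{d-1}{n-1}$. Where you genuinely differ is in the binomial estimates: the paper bounds $f(d,\varepsilon n)=\binom{d-1}{\varepsilon n-1}$ from above by $(a\mathrm{e}/b)^b$ and $\binom{d-1}{n-1}$ from below by $(a/b)^b$ separately, which leaves a residual factor $(2\mathrm{e}/\varepsilon^2)^{\varepsilon n}$ and forces an unspecified ``$\varepsilon$ small enough''; your telescoping identity $\binom{d-j_0-1}{n-j_0-1}/\binom{d-1}{n-1}=\prod_{t=1}^{j_0}\frac{n-t}{d-t}\le (n/d)^{j_0}$ cancels the $d$-dependence exactly and yields the clean bound $(2\varepsilon^{1-\varepsilon})^n$ with the concrete admissible choice $\varepsilon=1/4$. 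That is a slightly tidier route to the same estimate, and your handling of the $k=0$ degenerate case is a detail the paper glosses over. The one piece that does not work is your closing ``matching lower bound'': the $n$ permutations of $(d-n+1,1,\dots,1)$ only give $|B_\varepsilon|/|\mathcal{D}_1(n,d)|\ge n/\binom{d-1}{n-1}$, which is far smaller than $\exp(-cn)$ once $d\gg n$, and in fact no lower bound uniform in $d$ can hold ($B_\varepsilon$ is empty when $d=n$, for instance). This is immaterial, though: the paper's own proof likewise establishes only the upper bound, and \cref{thm:worst-case-lower-bound} uses nothing more than the fact that the bad fraction is $\exp(-\Omega(n))$.
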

\begin{proof}
    In this proof we use standard bounds on binomial coefficients, namely
    \[ \Big(\frac nk\Big)^k ≤ \binom nk ≤ \Big(\frac{n\e}k\Big)^k.\]
    We will also use the following basic combinatorial fact: Let $f(a, b)$ be the number of ways to throw $a$ indistinguishable balls into $b \leq a$ numbered bins, such that every bin contains at least one ball. Then,
\[
f(a, b) = {a - 1 \choose b - 1}.
\]
    
    The total number of demand profiles in $\D_1(n,d)$ is equal to $f(d, n)$, hence
    \[ |\D_1(n,d)| = \binom{d-1}{n-1} ≥ \Big(\frac{d-1}{n-1}\Big)^{n-1} ≥ \Big(\frac dn\Big)^{n-1}. \]
    Each $D ∈ \B_ε$ has at most $εn$ entries exceeding $εd/n$. To upper bound $|\B_ε|$ we first select $εn$ out of $n$ indices for “large” entries that (potentially) exceed $εd/n$. We then specify the remaining $n-εn$ “small” entries explicitly, each between $1$ and $εd/n$. Finally, we specify how the remaining demand, which is at most $d$, is partitioned among the $εn$ large entries. We get
    \begin{align*}
        |\B_ε| &≤ \binom{n}{εn} \Big(\frac{εd}{n}\Big)^{n-εn} f(d, \epsilon n)\\
        &≤ 2^n \Big(\frac{εd}{n}\Big)^{n-1} \Big(\frac{n}{εd}\Big)^{εn-1} \Big(\frac{d\e}{εn-1}\Big)^{εn-1}\\
        &≤ 2^n \Big(\frac{εd}{n}\Big)^{n-1} \Big(\frac{2\e}{ε²}\Big)^{εn-1}.
    \end{align*}
    For the fraction of $ε$-bad demand profiles we obtain, assuming $ε > 0$ is small enough
    \[
        \frac{|\B_ε|}{|\D_1(n,d)|} ≤ 2^n ε^{n-1} \Big(\frac{2\e}{ε²}\Big)^{εn-1} = \exp(-Θ(n)).\qedhere
    \]
\end{proof}

We are now ready to present the proof of the main theorem of this section.
\restateThm{thm:worst-case-lower-bound}{\thmclustermostlygood}
\begin{proof}
    By \cref{lem:most-demands-balanced} it suffices to show that the lower bound on $p_*(D)$ holds for the $ε$-good demand profiles, for a suitable constant $ε$. For any $ε$-good $D ∈ \D_1(n,d)$, we can construct a new demand profile $D' = (\frac{εd}{n},…,\frac{εd}{n})$ with $n' := εn$ entries, by decreasing some entries and removing some other entries from $D$. Then,
    \begin{align*}
        p_*(D) ≥ p_*(D')
        &\stackrel{\text{Lem. \ref{lem:uniform-lower-bound}}}{=}
        p_{\bins(εd/n)}(D')\\
        &\stackrel{\text{Eq. \ref{eq:balanced-demand-with-binsk}}}{=}
        \Omega\Big(\min\Big(1, \frac{\varepsilon n' d}{m}\Big)\Big)\\
        &= \Omega\Big(\min\Big(1, \frac{nd}{m}\Big)\Big).\qedhere
    \end{align*}

\end{proof}

\section{Worst-Case Upper Bound Against Adaptive Adversaries}

In this section we set out to analyze the worst-case collision probability against adaptive adversaries. Our first result shows that \cluster is vulnerable to adaptive adversaries, which can worsen the collision probability (compared to oblivious adversaries) by at least a factor of $n$.

\restateThm{lem:adaptive-defeats-cluster}{\lemadaptivedefeatscluster}
\begin{proof}
Consider an adversary $Z$ that behaves as follows:
\begin{enumerate}
    \item[1.] Request an ID from each of the $n$ instances.
    \item[2.] Pick the two closest IDs; say they were produced by instances $i$ and $j$. Without loss of generality, assume instance $i$ produced the smaller ID of the two.
    \item[3.] Request $d - n$ IDs from instance $i$.
\end{enumerate}

\noindent
Note that $Z ∈ \Adv(\D_1(n,d))$. For each $i \in [n]$, let $x_i$ be the first ID produced by instance $i$. Observe that $Z$ causes a collision if and only if $x_i$ and $x_j$ are at distance at most $d - n - 1$ modulo $m$, for some $i ≠ j$. Let $\calE_i$ be the event that $x_i$ is at distance at least $d - n - 1$ from each of $x₁,…,x_{i-1}$. Then,
\begin{align*}
1 - p_{\cluster}(Z) &= \prob{\calE_1, \dots, \calE_n}\\
&= \prod_{i = 1}^n \prob{\calE_i \mid \calE_1, \dots, \calE_{i - 1}}\\
&\leq \prod_{i = 1}^n \Big(1 - \frac{(i - 1)(d - n)}{m}\Big)\\
&\eqrefrel{prp:approximation-exponential}{Eq. }{≤}
\prod_{i = 1}^n \exp\Big(-\frac{(i - 1)(d - n)}{m}\Big)\\
&= \exp\Big(-\frac{n(n-1)(d - n)}{2m}\Big)\\
&\stackrel{d \geq 2n}{\leq} \exp\Big(-\frac{n(n-1)d}{4m}\Big).
\end{align*}
If $4m < n(n-1)d$, then $1 - p_{\cluster}(Z) \leq e^{-1}$. Thus, $p_{\cluster}(Z) \geq \Omega(1) = \Omega(\min(1, n^2 d/m))$, as desired. Otherwise, $4m \geq n(n-1)d$, so we can apply \Cref{prp:approximation-exponential}, to get
\[
1 - p_{\cluster}(Z) \leq 1 - \frac{n(n-1)d}{8m}.  
\]
Thus, $p_{\cluster}(Z) = \Omega(n^2 d / m) = \Omega(\min(1, n^2 d / m))$, concluding the proof.
\end{proof}

\subsection{A Nearly Optimal Algorithm}

In the rest of this section we show that $\clusterstar$ is optimal in the worst case against adaptive adversaries, up to a logarithmic factor.

\restateThm{thm:adaptive}{\thmclusterstar}
\begin{proof}
We may assume
\begin{equation}
    \label{eq:assumption}
    m = ω(nd\log(1 + d/n));
\end{equation}
otherwise the upper bound holds trivially. Let $Z ∈ \Adv(\D)$.
As soon as an instance selects a run and returns the first ID in it, we say the run has been \defn{opened}. At any point during the game, we say an ID is \defn{active} if it belongs to an opened run of some instance. Notice that the number of active IDs is at most $2d$ at all times, because the number of IDs that are active but have not been requested yet is at most $d$.

Since we aim for an upper bound on the collision probability, we may count any intersection between the open runs of two instances as a collision, regardless of which IDs from these runs have already been returned by the corresponding instances.

Let $\calC_i$ be the event that at least $i$ runs are open and that the $i$th run that is opened collides with an already opened run. Let $I_i ∈ [n]$ be the instance that opens a run and let $T_i := |\{j : j < i, I_j = I_i\}|$ be the number of runs previously opened by the same instance (with $(I_i, T_i) = (⊥, -∞)$ if fewer than $i$ runs are ever opened). Let $\clus$ be an upper bound on the number of runs that can be opened in total.

\begin{claim}
\label{clm:upper-bound-conditioned}
We have for any $t_i ∈ \{-∞,0,1,2,…\}$
\[
\prob{\calC_i \mid \overline{\calC_1}, \dots, \overline{\calC_{i-1}}, T_i = t_i} \leq O\left(\frac{d + \clus 2^{t_i}}{m}\right).
\]
\end{claim}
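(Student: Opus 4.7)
When $t_i = -\infty$, fewer than $i$ runs are ever opened, so $\calC_i$ is impossible and there is nothing to show; assume $t_i \geq 0$ henceforth. Under the given conditioning, the $i$th opened run is the $(t_i{+}1)$st run of some instance $I_i$, which by the definition of $\clusterstar$ has size $r_i := 2^{t_i}$ and a starting point $x$ drawn uniformly from $[m] \setminus B$, where $B$ is the set of starts that would cause $\mathrm{run}(x, r_i)$ to intersect a previous run of $I_i$. The plan is to bound $\prob{\calC_i \mid \cdots}$ by the fraction of $x \in [m] \setminus B$ that additionally hit some already-opened run of an instance $\ne I_i$, and to show both that $|[m]\setminus B|$ is comparable to $m$ and that the number of such bad $x$ is $O(d + \clus \cdot r_i)$.

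To bound $|B|$, note that the previous runs of $I_i$ have sizes $1, 2, \ldots, 2^{t_i-1}$, and that a start $x$ overlaps a run of size $r'$ iff it lies in a specific arc of length $r' + r_i - 1$. Summing yields $|B| \leq \sum_{j=0}^{t_i-1}(2^j + r_i - 1) \leq (t_i+1)\,r_i$. The hypothesis $\calD \subseteq \calD_{\infty}^{\leq}(n, m/(2\log m))$ forces $r_i \leq m/(2\log m)$ and $t_i + 1 \leq \log m$, so $|B| \leq m/2$ and hence $m - |B| \geq m/2$.

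For the bad set, observe that conditioning on $\overline{\calC_1}, \ldots, \overline{\calC_{i-1}}$ makes all previously opened runs pairwise disjoint. Let $R_1, \ldots, R_\ell$ enumerate the open runs belonging to instances other than $I_i$, with sizes $s_1, \ldots, s_\ell$; clearly $\ell \leq \clus$. A start $x$ makes the new run collide with $R_j$ iff $x$ lies in an arc of length $s_j + r_i - 1$, so the number of bad $x$ is at most $\sum_{j=1}^\ell (s_j + r_i - 1) \leq 2d + \clus \cdot r_i$, where the bound $\sum_j s_j \leq 2d$ is the active-IDs estimate already recorded in the main proof. Dividing by $|[m]\setminus B| \geq m/2$ gives the claimed $O((d + \clus \cdot 2^{t_i})/m)$.

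The one place requiring care — and the main obstacle — is the bookkeeping around the conditioning: one must verify that conditioning on the history (including the no-collision events and $T_i = t_i$) truly leaves $x$ uniform on $[m]\setminus B$. This follows because $T_i$, $I_i$, and $B$ are all deterministic functions of the game history strictly before $x$ is sampled, so the conditioning merely restricts to such histories without reshaping the conditional distribution of $x$ within them. Once that point is made precise, the counting above is essentially mechanical.
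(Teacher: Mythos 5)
Your proof is correct and follows essentially the same route as the paper's: bound the number of valid starting positions for the $i$th run from below by $\Omega(m)$, bound the number of starting positions that cause a collision from above by $O(d + \clus 2^{t_i})$, and divide. The only (harmless) difference is that you derive the $\Omega(m)$ lower bound on valid choices from the constraint $\D \subseteq \D_{\infty}^{\leq}(n, m/(2\log m))$, whereas the paper derives it from the standing assumption $m = \omega(nd\log(1+d/n))$; your explicit handling of the conditioning is a welcome extra precision.
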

\begin{proof}
Since $\Pr[C_i \mid T_i = -∞] = 0$ we may assume $t_i ≥ 0$.
When the $i$th run is opened, instance $I_i$ has at least $m - (d + t_i (2^{t_i}-1))$ valid choices for the first ID of the run. This is because the first ID cannot be any of the $d$ or less IDs within the runs $I_i$ has previously opened, nor one of the $2^{t_i} - 1$ IDs directly to the left of the $t_i$ runs that $I_i$ has previously opened. Notice that \Cref{eq:assumption} implies that $m = ω(d\log d)$. Combining this with the fact that $t_i ≤ \log d$, we conclude that there are $Ω(m)$ valid choices.

On the other hand, the number of choices that cause a collision are at most $2d + \clus (2^{t_i}-1)$. This is because a collision occurs when the first ID is either one of the $2d$ or less active IDs, or is one of the $2^{t_i}-1$ IDs to the left of one the $\clus$ or less runs from other instances.

Dividing the choices causing a collision by the number of valid choices yields the stated upper bound.
\end{proof}

\Cref{clm:upper-bound-conditioned} implies
\[ \Pr\big[\calC_i \mid \overline{\calC_1}, \dots, \overline{\calC_{i-1}}, T_i\big] \leq O\Big(\frac{d}{m} + \frac{\clus 2^{T_i}}{m}\Big)\]

Summing over $i$ gives,
\begin{align*}
\sum_{i = 1}^{\clus} \Pr\big[\calC_i \mid \overline{\calC_1}, \dots, \overline{\calC_{i-1}}, T_i\big] &\leq O\Big(\sum_{i = 1}^{\clus} \Big(\frac{d}{m} + \frac{\clus 2^{T_i}}{m}\Big)\Big)\\
&= O\Big(\frac{d\clus}{m} + \frac{\clus}{m} \sum_{i = 1}^{\clus} 2^{T_i}\Big)\\
&= O\Big(\frac{d\clus}{m}\Big), \numberthis\label{eq:conditional-expectation-bound}
\end{align*}
where the last step uses that $\sum_{i = 1}^{\clus} 2^{T_i}$ counts the number of active IDs in the end, which is at most $2d$.
We have,
\begin{align*}
p_{\clusterstar}(Z) &\leq \Pr\big[\calC_1 \cup (\overline{\calC_1} \cap \calC_2) \cup (\overline{\calC_1} \cap \overline{\calC_2} \cap \calC_3) \cup \dots\big]\\
&= \prob{\calC_1} + \Pr\big[\overline{\calC_1} \cap \calC_2\big] + \Pr\big[\overline{\calC_1} \cap \overline{\calC_2} \cap \calC_3\big] + \dots\\
&\leq \sum_{i = 1}^{\clus} \Pr\big[\calC_i \mid \overline{\calC_1}, \dots, \overline{\calC_{i-1}}\big]\\
&= \sum_{i = 1}^{\clus} \E_{T_i}\Big[\Pr\big[\calC_i \mid \overline{\calC_1}, \dots, \overline{\calC_{i - 1}}, T_i\big]\Big] \tag{by the law of total expectation}\\[-5pt]
&= \E_{(T_i)_i}\bigg[\sum_{i = 1}^{\clus} \Pr\big[\calC_i \mid \overline{\calC_1}, \dots, \overline{\calC_{i - 1}}, T_i\big]\bigg] \tag{by linearity of expectation}\\[-5pt]
&\leq O\Big(\frac{d \clus}{m}\Big) \tag{by \Cref{eq:conditional-expectation-bound}}.
\end{align*}

To conclude the proof, we must show that $\clus = O(n \log(1 + d/n))$. Let $d_i$ be the (random) number of IDs requested from instance $i ∈ [n]$. Note that $d_i = 0$ is permitted if only $n' < n$ instances receive at least one request. Instance $i$ creates exactly $\ceiling{\log (1 + d_i)}$ runs. Thus, the total number of runs is
\begin{align*}
    \sum_{i = 1}^n \ceiling{\log(1 + d_i)} &\leq O\Big(\sum_{i = 1}^n \log(1 + d_i)\Big)\\
    &\leq O\Big(n \log\Big(1 + \frac{\sum_{i = 1}^n d_i}{n}\Big)\Big) \tag{by the concavity of $x \mapsto \log(1 + x)$, and Jensen's inequality}\\
    &\leq O(n \log(1 + d/n)). \qedhere
\end{align*}
\end{proof}

\section{Competitive Ratio Upper Bound Against Oblivious Adversaries}

In this section we take a closer look at $\bins^*$, proving that it has competitive ratio $O(\log m)$ for a large class of demand profiles. This result will be strengthened to adaptive adversaries in \Cref{sec:competitive-upper-bound-adaptive}.

\subsection{Missing Details in \texorpdfstring{$\bins^*$}{BINS*}}
\label{sec:bins-star-definition}

Let us start by filling in some missing details from our description of $\bins^*$ from \Cref{sec:overview}. Recall that $\bins^*$ partitions the IDs space $[m]$ into $O(\log m)$ chunks, and these are in turn broken into bins. Specifically, the number of chunks is $C := ⌈\log m - \log \log m⌉$, and each one has $2^{C-1}$ IDs. This works because
\begin{align*}
    C·2^{C-1}
    &= ⌈\log m - \log \log m⌉·2^{⌈\log m - \log \log m⌉ -1}\\
    &≤ \log m · 2^{\log m - \log \log m} = m.
\end{align*}
For each $i \in [C]$, the $i$th chunk is further partitioned into $2^{C-i}$ bins of size $2^{i-1}$ each.

Note that $\bins^*$ does not specify how to proceed after the selected bin from the last chunk has been exhausted. In that case at least $2^C$ IDs have been requested. Since $2^C ≥ m/\log m$, the underlying demand profile would not be in $\D_{\infty}^{<}(m/\log m)$, so \cref{thm:bins-star-weakly-competitive} makes no claim in this case.

\subsection{\texorpdfstring{$\bins^*$}{BINS*} has Competitive Ratio \texorpdfstring{$O(\log m)$}{O(log m)} for Oblivious Adversaries}

We begin by limiting the set of demand profiles we have to consider. For a (non-trivial) demand profile $D$ let $D⁻$ be the \defn{rounded demand profile} arising from $D$ by
\begin{itemize}
    \item first, rounding each entry in $D$ down to the next power of $2$, and
    \item second, if there is a unique largest entry, we reduce it to the second largest entry; when a unique largest entry exists, we call its associated instance \defn{heavy}.
\end{itemize}
For instance if $D = (9,5,4,42)$ then $D⁻ = (8,4,4,8)$. For a set $\D$ of demand profiles we define $\D⁻ := \{D⁻ \mid D ∈ \D\}$.
\begin{lemma}
    \label{lem:round-to-2**i}
    Let $D ∈ \D_{\infty}^{<}(m/\log m)$. Then
    \[ p_{\bins^*}(D) = p_{\bins^*}(D⁻).\]
\end{lemma}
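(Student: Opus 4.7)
The plan is to observe that the behavior of $\binsstar$ on an instance with demand $d$ depends on $d$ only through $k(d) := \lceil \log(d+1) \rceil$, the number of chunks the instance consumes (equivalently, the least $k$ with $2^k - 1 \geq d$). Indeed, the instance picks a uniformly random bin in each of the chunks $1, 2, \dots, k(d)$, independently across chunks and across instances, and the output sequence is a deterministic function of these bin choices (taken in increasing order within a bin). Consequently $p_{\binsstar}(D)$ depends on $D$ only through the tuple $(k(d_1), \dots, k(d_n))$ and the bin-overlap structure it induces.

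First I would check the rounding step. For any $d \in [2^j, 2^{j+1})$ one has $k(d) = j+1 = k(2^j)$, so replacing each $d_i$ by the largest power of $2$ not exceeding it leaves every $k(d_i)$ unchanged and therefore leaves the joint distribution of all bin choices of $\binsstar$ unchanged. From here on I may assume every $d_i$ is already a power of $2$. If the maximum is not unique then $D^- = D$ and there is nothing else to do.

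Suppose instead that there is a unique heavy index $h$, and let $k_s := \max_{i \neq h} k(d_i)$, so that $k(d_h) > k_s$. The decisive observation is that for any chunk $c$ with $c > k_s$, instance $h$ is the only instance that touches chunk $c$; its bin choice there is independent of everything else and cannot participate in any pairwise collision, because the collision event for a pair $\{i, j\}$ only involves chunks $c \leq \min(k(d_i), k(d_j))$, and since at least one of $i$ or $j$ differs from $h$, this minimum is at most $k_s$. Hence the overall collision event is a function only of the bin choices $B_{i, c}$ with $c \leq \min(k(d_i), k_s)$. Under $D^-$ the only change is that $k(d_h)$ drops to exactly $k_s$, so instance $h$ still performs uniform independent bin draws in chunks $1, \dots, k_s$, and no other instance is affected; the joint distribution of the relevant bin choices is therefore identical under $D$ and $D^-$, and hence the two collision probabilities agree.

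The only small obstacle is to isolate precisely the ``relevant'' coordinates of the probability space and to check that they factor cleanly through the rounding operation; no quantitative estimate is needed. The hypothesis $D \in \D_{\infty}^{<}(m/\log m)$ plays no role beyond guaranteeing $k(d_i) \leq C$ for every $i$ (before and after rounding), which is what makes $\binsstar$ well-defined on both $D$ and $D^-$.
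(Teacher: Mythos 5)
Your proof is correct and follows essentially the same route as the paper's: first note that only the set of allocated bins matters and that an instance's bin allocations depend on its demand only through the number of chunks it reaches (which is invariant under rounding down to a power of $2$), then observe that the heavy instance's bins in chunks beyond those reached by any other instance cannot participate in a collision. Your version merely makes the ``collision event is a function of the relevant bin choices, whose joint distribution is unchanged'' step more explicit than the paper does.
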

\begin{proof}
    When using $\bins^*$ a collision occurs if and only if two instances allocate the same bin. It therefore only matters which bins are allocated by the instances. An instance allocates its $i$th bin (of size $2^{i-1}$) when the previously allocated bins of sizes $1,2,…,2^{i-2}$ are all full and an additional ID is requested, i.e., when the $2^{i-1}$th ID is requested. Thus, rounding down to powers of $2$ does not change the collision probability.
    
    If after rounding the is a heavy instance with $2^{i-1}$ requests for some $i$, while there are at most $2^{j-1}$ requests for each of the others instances for some $j < i$, then the heavy instance is the only one allocating a bin in each of the chunks $j+1,…,i$. These bins cannot cause a collision, so the heavy instance might as well have only $2^{j-1}$ requests.
\end{proof}

We define the \defn{rank distribution} of a rounded demand profile $D⁻$ to be a vector $(s₁,…,s_k)$ where $s_i$ is the number of times $2^{i-1}$ occurs in $D⁻$ and where $2^{k-1}$ is the largest entry of $D⁻$.

\begin{lemma}
    \label{lem:rounded-demands-lower-bound}
    Let $D⁻$ be a rounded demand profile with rank distribution $(s₁,…,s_k)$. Then
    \[ p_*(D⁻) = Ω\Big(\min\Big(1,\frac 1m \sum_{i = 1}^k \binom{s_i}{2} 2^i\Big)\Big).\]
\end{lemma}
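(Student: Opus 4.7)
The plan is to derive the bound one rank at a time and then combine the contributions using a carefully structured independence argument. Fix any algorithm $\A$ (the goal is to lower bound $p_\A(D^-)$, and then take the minimum over $\A$). Partition the instances underlying $D^-$ by their rank, so that rank-$i$ instances are exactly the $s_i$ instances with demand $2^{i-1}$, and let $\calE_i$ denote the event that at least two rank-$i$ instances produce colliding ID sets. Because $\A$ acts on each instance independently and identically, the restriction of the game to the rank-$i$ instances is literally the game of $\A$ on the uniform demand profile $(2^{i-1},\dots,2^{i-1}) \in \mathbb{N}^{s_i}$. Hence by \Cref{cor:uniform-lower-bound} we get
\[ \Pr[\calE_i] \;\geq\; p_*\bigl((2^{i-1},\dots,2^{i-1})\bigr) \;=\; \Omega\Bigl(\min\Bigl(1,\tfrac{s_i^2\,2^{i-1}}{m}\Bigr)\Bigr) \;=\; \Omega\Bigl(\min\Bigl(1,\tfrac{\binom{s_i}{2}\,2^i}{m}\Bigr)\Bigr),\]
where the last identity uses $s_i(s_i-1)2^{i-1} = \binom{s_i}{2}2^i$ and $s_i^2 = \Theta(s_i(s_i-1))$ for $s_i \geq 2$ (both quantities vanish when $s_i \leq 1$).

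The key observation, and the step most worth being careful about, is that the events $\{\calE_i\}_i$ are \emph{mutually} independent, not merely pairwise independent. Indeed, $\calE_i$ is a measurable function of the random permutations chosen by the rank-$i$ instances alone, and distinct ranks involve disjoint sets of instances, whose randomness is independent by the model. This independence lets us convert the per-rank bounds into an additive bound:
\[ p_\A(D^-) \;\geq\; \Pr\Bigl[\bigcup_i \calE_i\Bigr] \;=\; 1-\prod_i(1-\Pr[\calE_i]) \;\geq\; 1 - \exp\Bigl(-\sum_i\Pr[\calE_i]\Bigr).\]
Using $1-e^{-x} = \Theta(\min(1,x))$ (cf.\ \Cref{prp:approximation-exponential}), the right-hand side is $\Omega(\min(1,\sum_i\Pr[\calE_i]))$.

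The last step is to simplify $\Omega(\min(1,\sum_i\min(1,\binom{s_i}{2}2^i/m)))$ to the desired $\Omega(\min(1,\tfrac1m\sum_i\binom{s_i}{2}2^i))$. This follows from the elementary fact that $\min(1,\sum_i \min(1,x_i)) = \Theta(\min(1,\sum_i x_i))$ for nonnegative $x_i$: if every $x_i \leq 1$ the two expressions literally agree, and if some $x_j > 1$ then both sides are $\Theta(1)$ (since $\min(1,x_j)\geq 1/2$ contributes a constant to the inner sum). Combined with the previous displays this yields the claimed lower bound. The only genuinely substantive ingredient is the mutual independence of the rank-level collision events; once that is pinned down the remainder is a combination of \Cref{cor:uniform-lower-bound} with a standard $1-e^{-x}$ manipulation.
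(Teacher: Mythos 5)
Your proposal is correct and follows essentially the same route as the paper: decompose by rank, lower-bound each $\Pr[\calE_i]$ via the optimality of $\bins(2^{i-1})$ on uniform profiles (\Cref{cor:uniform-lower-bound}), and combine using independence of the rank-level events. The only cosmetic difference is that the paper combines the events via \Cref{lem:union-bound-tight} (which needs only pairwise independence), whereas you exploit full mutual independence and the $1-\prod(1-p_i)\ge 1-\e^{-\sum p_i}$ estimate directly; both yield the same $\Theta(\min(1,\sum_i\Pr[\calE_i]))$ bound.
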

\begin{proof}
    \def\E{\mathcal{E}}
    Assume we use an optimal algorithm $\A$ for $D⁻$, i.e., one with $p_\A(D⁻) = p_*(D⁻)$. For $i \in [k]$, let $\E_i$ be the event that there is a collision between two instances with demand $2^{i-1}$. We have $\Pr[\E_i] = p_\A(D_i)$ where $D_i = (2^{i-1},…,2^{i-1})$ has length $s_i \geq 0$. Recall that $\bins(2^{i-1})$ is the optimal algorithm for $D_i$, by \cref{lem:uniform-lower-bound}. We compute
    \begin{align*}
        \Pr[\E_i] &= p_\A(D_i)\\
        &≥ p_*(D_i)\\
        &= p_{\bins(2^{i-1})}(D_i)\\
        &\eqrefrel{eq:balanced-demand-with-binsk}{Eq.}{=}
        Θ\Big(\min\Big(1,\binom{s_i}{2}\frac{2^{i-1}}{m}\Big)\Big).
    \end{align*}
    Notice that because we use $\binom{s_i}{2}$ instead of $s_i²$, we correctly handle the cases $s_i ∈ \{0,1\}$, using the convention $\binom{0}{2} = \binom{1}{2} = 0$.
    Using that the events $\E₁,…,\E_k$ relate to disjoint sets of instances and are hence independent, we get
    \begin{align*}
        p_*(D⁻) &= p_\A(D⁻)\\
        &≥ \Pr\big[\bigcup_{i=1}^k \E_i \big]\\
        &\refrel{lem:union-bound-tight}{Lem.}{=}
        Θ\Big(\min\Big(1,\sum_{i=1}^k \Pr[\E_i]\Big)\Big)\\
        &= Θ\Big(\min\Big(1,\frac 1m \sum_{i = 1}^k \binom{s_i}{2} 2^i\Big)\Big).\qedhere
    \end{align*}
\end{proof}

We will use the following simple inequalities involving binomial coefficients.
\begin{restatable}{lemma}{lembinomial}
    \label{lem:binomial}
    For any $x,y ≥ 0$
    \begin{enumerate}[(i)]
        • $\binom{x+y}{2} ≤ 3\binom x2 + 2x + \frac 32 \binom y2 + \frac y2$,
        • $\binom{x+y}{2} ≤ 4\binom x2 + \frac 53 \binom y2 + O(1)$.
    \end{enumerate}
\end{restatable}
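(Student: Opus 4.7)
The plan is to reduce both inequalities to bounding the cross term $xy$ via the algebraic identity $\binom{x+y}{2} = \binom{x}{2} + \binom{y}{2} + xy$, which follows directly from expanding $(x+y)(x+y-1)/2$. Substituting this identity on the left-hand side of (i) and (ii) cancels the base $\binom{x}{2} + \binom{y}{2}$ contribution appearing (with different coefficients) on the right, and turns both claims into elementary bivariate polynomial inequalities in $x$ and $y$. The only analytic ingredient needed in both cases is the weighted AM--GM estimate
\[ xy \leq x^2 + \frac{y^2}{4}, \]
which is just a restatement of $(x - y/2)^2 \geq 0$.

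For part (i), after substitution and collecting terms, the claim reduces to
\[ xy \leq x^2 + x + \frac{y^2}{4} + \frac{y}{4}, \]
which follows immediately from the AM--GM bound above combined with the nonnegativity of the linear terms $x$ and $y/4$ (using $x,y \geq 0$). For part (ii), the same substitution reduces the claim to
\[ xy \leq \frac{3x^2}{2} - \frac{3x}{2} + \frac{y^2}{3} - \frac{y}{3} + O(1). \]
Upper-bounding the left-hand side by $x^2 + y^2/4$ via AM--GM, it suffices to verify that
\[ \Big(\frac{x^2}{2} - \frac{3x}{2}\Big) + \Big(\frac{y^2}{12} - \frac{y}{3}\Big) \geq -O(1), \]
which is immediate: each parenthesised expression is a single-variable upward-opening parabola, hence bounded below by a constant (namely $-9/8$ and $-1/3$ respectively), so their sum is bounded below by $-35/24$.

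There is essentially no obstacle here; the only thing to get right is the AM--GM split. The coefficients $3$ and $\tfrac{2}{3}$ on $\binom{x}{2}$ and $\binom{y}{2}$ in (ii) were evidently chosen so that the quadratic-term coefficients $\tfrac{3}{2}$ and $\tfrac{1}{3}$ on the right strictly exceed the coefficients $1$ and $\tfrac{1}{4}$ produced by AM--GM on the left; this surplus in the quadratic terms is precisely what absorbs the negative linear terms $-\tfrac{3x}{2}$ and $-\tfrac{y}{3}$ into an $O(1)$ additive error. In (i) there is no $O(1)$ slack, but this is compensated by the extra $2x$ and $\tfrac{y}{2}$ terms on the right, whose nonnegativity makes the inequality hold pointwise.
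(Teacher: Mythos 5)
Your proof is correct, but it takes a genuinely different route from the paper. The paper proves (i) by induction on integer arguments, using the identity $\binom{a}{2}=\binom{a-1}{2}+\binom{a-1}{1}$ together with a case split on whether $x\leq y/2$ or $y\leq 2x$, and then deduces (ii) from (i) by absorbing the lower-order terms $2x$ and $y/2$ asymptotically. You instead expand $\binom{x+y}{2}=\binom{x}{2}+\binom{y}{2}+xy$ and reduce everything to the single weighted AM--GM estimate $xy\leq x^2+y^2/4$; your reductions of (i) to $xy\leq x^2+x+\tfrac{y^2}{4}+\tfrac{y}{4}$ and of (ii) to $xy\leq\tfrac{3x^2}{2}-\tfrac{3x}{2}+\tfrac{y^2}{3}-\tfrac{y}{3}+O(1)$ are algebraically correct, and your parabola minima $-9/8$ and $-1/3$ check out, giving an explicit constant $35/24$ for the $O(1)$. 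Your argument is more elementary and arguably more faithful to the statement as written: it holds for all real $x,y\geq 0$, whereas the paper's induction only makes sense for nonnegative integers (which is all that is used downstream, since the quantities plugged in are counts $s_i$). The paper's inductive route buys nothing extra here; your version is shorter, self-contained for both parts, and quantifies the hidden constant.
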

\begin{proof}
    \emph{(ii)} follows from \emph{(i)} because $2x = o(\binom x2)$ and $\frac y2 = o(\binom y2)$. To prove \emph{(i)} we proceed by induction. For $x = 0$ or $y = 0$ the claim is trivial. For $x,y ≥ 1$ we distinguish two cases. In both we first apply the identity $\binom a2 = \binom{a-1}{2}+\binom{a-1}{1}$ and then the induction hypothesis.
    \begin{itemize}
        \item Case 1: $x ≤ \frac y2$.
        \begin{align*}\textstyle
            \binom{x+y}{2} &\textstyle= \binom{x+(y-1)}{2} + \binom{x+y-1}{1}\\\textstyle
            &\textstyle\reasonrel{Ind.}{≤}
            3\binom x2 + 2x + \frac 32 \binom{y-1}{2} + \frac{y-1}{2} + x+y-1\\\textstyle
            &\textstyle≤ 3\binom x2 + 2x + \frac 32 \binom{y-1}{2} + \frac{y}{2} + \frac y2 + y - \frac 32\\\textstyle
            &\textstyle≤ 3\binom x2 + 2x + \frac 32 \binom{y-1}{2} + \frac{y}{2} + \frac 32\binom{y-1}{1}\\\textstyle
            &\textstyle≤ 3\binom x2 + 2x + \frac 32 \binom{y}{2} + \frac{y}{2}.
        \end{align*}
        \item Case 2: $y ≤ 2x$.
        \begin{align*}\textstyle
            \binom{x+y}{2} &\textstyle= \binom{(x-1)+y}{2} + \binom{x+y-1}{1}\\\textstyle
            &\textstyle\reasonrel{Ind.}{≤}
            3\binom{x-1}2 + 2(x-1) + \frac 32 \binom{y}{2} + \frac{y}{2} + x+y-1\\\textstyle
            &\textstyle≤ 3\binom{x-1}2 + 2x + \frac 32 \binom{y}{2} + \frac{y}{2} + x + 2x-3\\\textstyle
            &\textstyle≤ 3\binom{x-1}2 + 2x + \frac 32 \binom{y}{2} + \frac{y}{2} + 3\binom{x-1}{1}\\\textstyle
            &\textstyle≤ 3\binom{x}2 + 2x + \frac 32 \binom{y}{2} + \frac{y}{2}.\qedhere
        \end{align*}
    \end{itemize}
\end{proof}

\begin{lemma}
    \label{lem:upper-bound-bins*}
    Let $D⁻$ be a rounded demand profile with rank distribution $(s₁,…,s_k)$. Then
    \[p_{\bins^*}(D⁻) = O\bigg(\frac{\log m}{m}\sum_{i=1}^k \binom{s_i}{2} 2^i\bigg).\]
\end{lemma}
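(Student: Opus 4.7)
The plan is to bound the collision probability chunk by chunk, union-bound across chunks, and then reduce the resulting sum to the claimed form. Recall from the definition of $\bins^*$ that each instance of demand $2^{i-1}$ allocates exactly one bin from each of chunks $1,\dots,i$, choosing a uniformly random bin in chunk $j$ among the $2^{C-j}$ bins of that chunk, independently of the other instances' choices. Let $N_j := \sum_{i=j}^k s_i$ be the number of instances participating in chunk $j$. For any two such instances the probability they pick the same bin in chunk $j$ equals $1/2^{C-j}$, and the $\binom{N_j}{2}$ pairwise collision events in chunk $j$ are pairwise independent, by the same argument used in the proof of \Cref{thm:cluster}. Applying \Cref{lem:union-bound-tight} (or just the union bound, which suffices for an upper bound) and using $2^{C-j} \geq m/(2^j \log m)$, which follows from $C \geq \log m - \log\log m$, the probability $P_j$ of a collision in chunk $j$ satisfies
\[ P_j = O\!\left(\binom{N_j}{2}\cdot \frac{2^j \log m}{m}\right). \]

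A union bound across the $k \leq C$ chunks then gives $p_{\bins^*}(D^-) \leq \sum_{j=1}^k P_j = O\!\left(\tfrac{\log m}{m}\sum_{j=1}^k \binom{N_j}{2} 2^j\right)$. The main technical step is to show $S := \sum_{j=1}^k \binom{N_j}{2} 2^j = O\!\left(\sum_{i=1}^k \binom{s_i}{2} 2^i\right)$. I would apply part~(ii) of \Cref{lem:binomial} with $x = s_j$ and $y = N_{j+1}$ (noting $N_j = s_j + N_{j+1}$) to obtain $\binom{N_j}{2} \leq 4\binom{s_j}{2} + \tfrac{5}{3}\binom{N_{j+1}}{2} + O(1)$. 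Multiplying by $2^j$, summing over $j$, and using $\tfrac{5}{3}\cdot 2^j = \tfrac{5}{6}\cdot 2^{j+1}$ together with $N_{k+1}=0$ to telescope the middle term, I obtain a self-referential inequality of the form $S \leq 4\sum_i \binom{s_i}{2}2^i + \tfrac{5}{6}S + O(2^k)$, which rearranges to $S \leq 24\sum_i \binom{s_i}{2}2^i + O(2^k)$. The coefficient $\tfrac{5}{3}<2$ is exactly what makes the geometric telescoping converge, so the choice of constants in \Cref{lem:binomial} is essential here.

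The main obstacle, and the reason the rounding convention matters, is absorbing the leftover $O(2^k)$ term. Since $D^-$ is rounded, its maximum value $2^{k-1}$ has no unique occurrence, i.e.\ $s_k \geq 2$, so $\binom{s_k}{2} 2^k \geq 2^k$ and the $O(2^k)$ contribution is absorbed into $\sum_i \binom{s_i}{2} 2^i$. Without this convention, a single heaviest instance would leave an unabsorbed $\Theta(2^k)$ term that could be as large as $\Theta(m/\log m)$, which would break the competitive bound; this is precisely why the definition of $D^-$ reduces a unique maximum to the second largest entry in the first place.
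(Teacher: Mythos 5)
Your proposal is correct and follows essentially the same route as the paper: a per-chunk union bound over pairs of the $N_j = s_{\geq j}$ participating instances giving $O\big(\tfrac{\log m}{m}\sum_j \binom{N_j}{2}2^j\big)$, followed by the self-referential inequality via part (ii) of \Cref{lem:binomial} with the $\tfrac{5}{6}$ geometric telescoping, and absorption of the leftover $O(2^k)$ term using $s_k \geq 2$ from the rounding convention. The only cosmetic difference is that you invoke pairwise independence and \Cref{lem:union-bound-tight}, whereas the paper needs only the plain union bound here (as you note yourself).
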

\begin{proof}
    A collision within chunk $i$ can only be caused by instances that allocate a bin in chunk $i$. Let $s_{≥ i} := s_i + … + s_k$ be the number of such instances. Any fixed one of such such instances collides with probability $O(\frac{\log m}{m} 2^i)$, because there are $2^{C-i} = Ω(\frac{m}{\log m}2^{-i})$ bins within chunk~$i$. By union bound over all pairs of instances and all chunks we obtain
    \[ p_{\bins^*}(D⁻) ≤ \frac{\log m}{m} \sum_{i = 1}^k \binom{s_{≥i}}{2} 2^i. \]
    We define $X := \sum_{i = 1}^k \binom{s_{≥i}}{2} 2^i$ and compute, with the understanding that $s_{≥ k+1} = 0$,
    \begin{align*}
        X &= \sum_{i = 1}^k \binom{s_{≥i}}{2} 2^i = \sum_{i = 1}^k \binom{s_i + s_{≥i+1}}{2} 2^i\\
        &\stackrel{\text{Lem. \ref{lem:binomial}}}{≤}
        \sum_{i = 1}^k \bigg(4\binom{s_i}{2} + \frac 53 \binom{s_{≥i+1}}{2} + O(1)\bigg)·2^i\\
        &= 4\bigg(\sum_{i = 1}^k \binom{s_i}{2}2^i\bigg) + \frac 56 \bigg(\sum_{i = 1}^k \binom{s_{≥i+1}}{2}2^{i+1}\bigg) + O(1)·2^{k+1}\\
        &≤ O\Big(\sum_{i = 1}^k \binom{s_i}{2}2^i\Big) + \frac 56·X.
    \end{align*}
    In the last step we used that $s_{k}≥2$ so that the $O(1)·2^{k+1}$ term is dominated by the last term of the first sum. After subtracting $\frac 56 X$ we see that $X = O(\sum_{i = 1}^k \binom{s_i}{2} 2^i)$, which implies the claim.
\end{proof}

\restateThm{thm:bins-star-weakly-competitive}{\thmbinsstarweaklycompetitive}

\begin{proof}
    Let $D ∈ \D_{\infty}^{<}(m/\log m)$ and $(s₁,…,s_k)$ the rank distribution of $D⁻$. Then
    \begin{align*}
        p_{\bins^*}(D)
        &\stackrel{\text{Lem. \ref{lem:round-to-2**i}}}{=}
        p_{\bins^*}(D⁻)\\
        &\stackrel{\text{Lem. \ref{lem:upper-bound-bins*}}}{=}
        O\Big(\min\big(1,\frac{\log m}{m}\sum_{i=1}^k \binom{s_i}{2} 2^i\big)\Big)\\
        &≤O(\log m) \min\big(1,\frac{1}{m}\sum_{i=1}^k \binom{s_i}{2} 2^i\big)\\
        &\stackrel{\text{Lem. \ref{lem:rounded-demands-lower-bound}}}{≤}
        O(\log m) p_*(D⁻) ≤ O(\log m) p_*(D).\qedhere
    \end{align*}
\end{proof}
\section{Competitive Ratio Lower Bound Against Oblivious Adversaries}

In the following we consider the performance of an algorithm $\A$ on demand profiles of the form $D = (i,j)$ for $i,j ∈ [m]$, which will serve as “hard instances” for our lower bounds down the line. Denote $\A(i)$ the distribution of the set of the first $i$ IDs returned by an instance of $\A$. Moreover, let $q_{c,i} := \Pr_{X \sim \A(i)}[c ∈ X]$ for $c,i ∈ [m]$.

\begin{lemma}
    \label{lem:lower-bound-two-demands}
    Let $\A$ be any algorithm and let $i,j ∈ [m]$. We have
    \[ p_\A\big((i,j)\big) ≥ \frac{1}{\min(i,j)}\sum_{c ∈ [m]} q_{c,i}q_{c,j}. \]
\end{lemma}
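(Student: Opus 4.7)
The plan is to identify $\sum_{c} q_{c,i} q_{c,j}$ with the expected size of the intersection of the two instances' ID sets, and then derive the bound from the deterministic upper bound on that intersection size.

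Concretely, let $X$ be the (random) set of IDs produced by the first instance after $i$ requests and let $Y$ be the (random) set of IDs produced by the second instance after $j$ requests. By definition $X \sim \A(i)$ and $Y \sim \A(j)$, and they are independent because the two instances use independent randomness. By independence and linearity of expectation,
\[
\sum_{c \in [m]} q_{c,i} q_{c,j} = \sum_{c \in [m]} \Pr[c \in X] \Pr[c \in Y] = \sum_{c \in [m]} \Pr[c \in X \cap Y] = \E[|X \cap Y|].
\]

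The key observation is the deterministic bound $|X \cap Y| \leq \min(|X|,|Y|) = \min(i,j)$. This lets us bound the expectation by splitting on whether a collision occurs:
\[
\E[|X \cap Y|] = \E\bigl[|X \cap Y| \cdot \indicator{X \cap Y \neq \emptyset}\bigr] \leq \min(i,j) \cdot \Pr[X \cap Y \neq \emptyset] = \min(i,j) \cdot p_\A((i,j)).
\]
Dividing by $\min(i,j)$ and substituting the first identity gives the claimed inequality.

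There is essentially no obstacle here; the only subtlety is recognizing that the double sum is a first-moment quantity that naturally pairs with the trivial deterministic cap $\min(i,j)$ on the intersection size. This is a standard ``second-moment to first-moment'' style lower bound on the probability of a nonempty intersection, and it is tight up to constants precisely when the intersection size is typically $\Theta(\min(i,j))$ whenever it is nonzero.
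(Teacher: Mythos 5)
Your proof is correct and follows essentially the same route as the paper: both identify $\sum_c q_{c,i}q_{c,j}$ with $\E[|X\cap Y|]$ via independence and linearity of expectation, and then apply the reverse Markov bound $\E[Z] \leq B\cdot\Pr[Z>0]$ for a random variable bounded by $B=\min(i,j)$. The only difference is presentational: the paper states the inequality $\Pr[Z>0]\geq\E[Z]/B$ up front, while you derive it inline.
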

\begin{proof}
    We will use that a bounded random variable $Z ∈ \{0,1,…,B\}$ satisfies $\Pr[Z > 0] = \Pr[Z ≥ 1] ≥ \E[Z]/B$. We have
    \begin{align*}
        p_\A\big((i,j)\big) &= \Pr_{\substack{X \sim \A(i)\\Y \sim \A(j)}}[|X ∩ Y| > 0]\\
        &≥ \frac{1}{\min(i,j)} \E_{\substack{X \sim \A(i)\\Y \sim \A(j)}}[|X ∩ Y|]\\
        &= \frac{1}{\min(i,j)} \sum_{c ∈ [m]} \Pr_{\substack{X \sim \A(i)\\Y\sim \A(j)}}[c ∈ X ∩ Y]\\
        &= \frac{1}{\min(i,j)} \sum_{c ∈ [m]} q_{c,i}q_{c,j}.\qedhere
    \end{align*}
\end{proof}

The following lemma determines the best achievable collision probability for demand profiles $D = (i,j)$, up to constants.
\begin{lemma}
    \label{lem:collision-prob-two-demands}
    Let $1 ≤ i ≤ j ≤ m/2$. Then, $p_*\big((i,j)\big) = Θ(\frac{i}{m})$.
\end{lemma}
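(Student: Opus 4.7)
The plan is to prove matching upper and lower bounds of order $i/m$.

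For the lower bound I would apply \cref{lem:lower-bound-two-demands} to an arbitrary algorithm $\A$. Since an instance's first $i$ IDs are always a subset of its first $j$ IDs (as $i \leq j$), the marginals satisfy $q_{c,i} \leq q_{c,j}$ for every $c \in [m]$. Combined with Cauchy--Schwarz this gives
\[\sum_c q_{c,i} q_{c,j} \geq \sum_c q_{c,i}^2 \geq \frac{(\sum_c q_{c,i})^2}{m} = \frac{i^2}{m}.\]
Plugging into \cref{lem:lower-bound-two-demands} with $\min(i,j) = i$ yields $p_\A((i,j)) \geq i/m$ uniformly in $\A$, so $p_*((i,j)) = \Omega(i/m)$.

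For the upper bound I would exhibit a custom algorithm $\A$, tailored to the profile $(i,j)$, with collision probability $O(i/m)$. Partition $[m]$ into $A := [m-j+i]$ and $B := \{m-j+i+1,\dots,m\}$; since $j \leq m/2$ we have $|A| \geq m/2$ and $|B| = j-i$. The algorithm picks $x$ uniformly from $[m-j+1]$ and returns, in order, the contiguous interval $x, x+1, \dots, x+i-1$ (which lies entirely in $A$), then the elements of $B$ in increasing order, then the remaining elements of $A$ in some fixed order. For demand $i$ the instance returns an interval $I \subseteq A$ of length $i$; for demand $j$ it returns $I \cup B$.

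With two independent instances run on profile $(i,j)$, instance $1$'s output $X_1 = I_1 \subseteq A$ and instance $2$'s output $X_2 = I_2 \cup B$ satisfy $X_1 \cap X_2 = I_1 \cap I_2$, because $I_1 \cap B = \emptyset$. Two independent length-$i$ intervals with starting points uniform in $[m-j+1]$ intersect iff their starting points differ by at most $i-1$, which happens with probability at most $(2i-1)/(m-j+1) = O(i/m)$, using $j \leq m/2$. Combined with the lower bound, this proves $p_*((i,j)) = \Theta(i/m)$.

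The main obstacle is the upper bound: standard algorithms like $\cluster$ have collision probability $\Theta((i+j)/m) = \Theta(j/m)$ on this profile, a factor $j/i$ off when $j \gg i$, and the symmetric bin-based algorithms fare no better. The crucial design idea is to steer the ``extra'' $j-i$ IDs an instance produces at demand $j$ into a deterministic region $B$ disjoint from its first $i$ outputs. A collision between a low-demand and a high-demand instance is then controlled purely by the intersection of the two short random intervals inside $A$, which live in a universe of size $\Omega(m)$.
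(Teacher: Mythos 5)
Your proof is correct. The upper bound is essentially the paper's construction: both set aside a deterministic block of $j-i$ IDs for requests $i+1,\dots,j$ so that a collision can only arise among the first $i$ IDs of each instance, which live in a universe of size $m-j+i = \Omega(m)$; the paper serves those first $i$ requests with $\bins(i)$ on the reduced space rather than with a random interval, but the two yield the same $\Theta(i/m)$ bound. Your lower bound, however, takes a genuinely different route. The paper argues $p_*\big((i,j)\big) \geq p_*\big((i,i)\big)$ by monotonicity and then invokes the exact optimality of $\bins(i)$ on the uniform profile $(i,i)$ (\cref{lem:uniform-lower-bound} together with \cref{eq:balanced-demand-with-binsk}), which rests on the support-minimization argument of \cref{sec:worst-case}. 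You instead combine \cref{lem:lower-bound-two-demands} with the monotonicity $q_{c,i} \leq q_{c,j}$ and Cauchy--Schwarz, using $\sum_c q_{c,i} = \E[|X|] = i$, to get the clean bound $p_\A\big((i,j)\big) \geq i/m$ for every $\A$ with an explicit constant $1$ and without any appeal to the optimality machinery for uniform profiles. This makes your argument more self-contained (it only needs the lemma proved immediately beforehand, which the paper reuses anyway for \cref{lem:competitiveness-lower-bound}), and it is also slightly more general in that it does not use $j \leq m/2$ for the lower bound; the paper's route, by contrast, identifies the \emph{exactly} optimal algorithm on $(i,i)$, which is information it needs elsewhere regardless.
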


\begin{proof}
    Using the optimality of $\bins(i)$ on $D = (i,i)$ we get
    \[
        p_*\big((i,j)\big) ≥ p_*\big((i,i)\big)
        \refrel{lem:uniform-lower-bound}{Lem.}{=}
        p_{\bins(i)}\big((i,i)\big)
        \eqrefrel{eq:balanced-demand-with-binsk}{Eq. }{=}
        Ω(\tfrac{i}{m}).
    \]
    For the upper bound we construct an algorithm $\A$ for the fixed demand profile $(i, j)$. The algorithm sets aside $j-i$ hard-wired IDs. The first $i$ requests are handled using $\bins(i)$ on the rest of the IDs. All other requests (which are at most $j-i$) are served from the hard-wired IDs.

    A collision on the demand profile $(i, j)$ occurs if and only if there is a collision between the first $i$ IDs of each instance, produced via $\bins(i)$ on the reduced ID space with $m - j + i$ IDs. This probability is $p_{\bins(i)}\big((i,i)\big) = \Theta(1/(m-j+i)) = \Theta(i/m)$.
\end{proof}

Let $k = ⌊\frac 12 \log(m)⌋$. Consider the following distribution $Φ$ on demand profiles:
\begin{equation}
    \label{eq:phi}
    \Pr_{D \sim Φ}[D = (2^i,2^j)] = \frac{1}{W} · 2^{-\max(i,j)} \text{ for $0 \leq i, j \leq k$,}
\end{equation}
where $W$ is a normalization factor. Note that
\[W := \sum_{i = 0}^k \sum_{j=0}^k 2^{-\max(i,j)} ≤ 2 \sum_{i ≥ 0} \sum_{j≥i} 2^{-j} = 2 \sum_{i ≥ 0} 2^{-i+1} = 8 = O(1).\]

\begin{lemma}
    \label{lem:competitiveness-lower-bound}
    For any algorithm $\A$ and $Φ$ as above we have
    \[\E_{D \sim Φ}[p_\A(D)] = Ω\Big(\frac{\log²m}{m}\Big).\]
\end{lemma}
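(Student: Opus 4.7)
The plan is to expand $\E_{D \sim \Phi}[p_\A(D)]$ using \Cref{lem:lower-bound-two-demands}, exploit the fact that the weights $2^{-\max(i,j)}$ of $\Phi$ combine with the $1/\min(2^i,2^j)$ factor from the lemma to produce a perfect square, and then apply Cauchy--Schwarz together with a simple first-moment identity for the $q_{c,2^i}$.

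Concretely, the first step is to apply \Cref{lem:lower-bound-two-demands} to each $D = (2^i,2^j)$, giving $p_\A((2^i,2^j)) \geq 2^{-\min(i,j)} \sum_{c \in [m]} q_{c,2^i} q_{c,2^j}$. Substituting into the expectation and using $2^{-\max(i,j)} \cdot 2^{-\min(i,j)} = 2^{-(i+j)}$ yields
\[
\E_{D \sim \Phi}[p_\A(D)] \;\geq\; \frac{1}{W} \sum_{i=0}^k \sum_{j=0}^k 2^{-(i+j)} \sum_{c \in [m]} q_{c,2^i} q_{c,2^j} \;=\; \frac{1}{W} \sum_{c \in [m]} \left( \sum_{i=0}^k 2^{-i}\, q_{c,2^i} \right)^{\!2},
\]
where the final equality factors the double sum into a square. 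This factorization is the key payoff from the specific adversarial choice of $\Phi$.

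Next, I apply Cauchy--Schwarz in the form $\sum_c x_c^2 \geq \frac{1}{m}\bigl(\sum_c x_c\bigr)^2$ with $x_c := \sum_{i=0}^k 2^{-i} q_{c,2^i}$. To evaluate $\sum_c x_c$, I note that since $q_{c,2^i} = \Pr_{X \sim \A(2^i)}[c \in X]$ and $|X| = 2^i$ deterministically, linearity of expectation gives $\sum_c q_{c,2^i} = 2^i$. Hence
\[
\sum_{c \in [m]} x_c \;=\; \sum_{i=0}^k 2^{-i} \cdot 2^i \;=\; k+1 \;=\; \Omega(\log m).
\]
Combining with $W = O(1)$ from \Cref{eq:phi}, we conclude $\E_{D \sim \Phi}[p_\A(D)] \geq \frac{(k+1)^2}{Wm} = \Omega\!\left(\log^2 m / m\right)$.

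There is no real obstacle once the right viewpoint is in hand; the whole proof hinges on the observation that $\Phi$ was precisely designed so that the exponents $-\max(i,j)$ and $-\min(i,j)$ sum to $-(i+j)$, turning the double sum into a square that is amenable to Cauchy--Schwarz. The only other ingredient is the elementary identity $\sum_c q_{c,2^i} = 2^i$, which converts the Cauchy--Schwarz lower bound into a clean $\Theta(\log m)$ quantity before squaring.
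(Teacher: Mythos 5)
Your proof is correct and follows essentially the same route as the paper's: apply \Cref{lem:lower-bound-two-demands}, combine the weights $2^{-\max(i,j)}$ with the $1/\min(2^i,2^j)$ factor to get $2^{-(i+j)}$, factor the double sum into a square, apply Cauchy--Schwarz over $c \in [m]$, and finish with $\sum_c q_{c,2^i} = 2^i$. No substantive differences.
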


\begin{proof}
    We reuse the numbers $q_{c,i}$ from \cref{lem:lower-bound-two-demands}. We have
    \begin{align*}
        \E_{D \sim Φ}[p_\A(D)]
        &=\frac{1}{W}\sum_{i = 0}^k \sum_{j = 0}^k 2^{-\max(i,j)} p_\A\big((2^i,2^j)\big)\\
        &\refrel{lem:lower-bound-two-demands}{Lem.}{≥}
        \frac{1}{W}\sum_{i = 0}^k \sum_{j = 0}^k 2^{-\max(i,j)} \frac{\sum_{c ∈ [m]} q_{c,2^i}q_{c,2^j}}{\min(2^i,2^j)}\\
        &≥ \frac{1}{W}\sum_{i = 0}^k \sum_{j = 0}^k 2^{-(i+j)} \sum_{c ∈ [m]} q_{c,2^i}q_{c,2^j}\\
        &=\frac{1}{W}\sum_{c ∈ [m]} \sum_{i = 0}^k 2^{-i} q_{c,2^i} \sum_{j = 0}^k 2^{-j} q_{c,2^j}\\
        &=\frac{1}{W}\sum_{c ∈ [m]} \bigg(\sum_{i = 0}^k 2^{-i} q_{c,2^i}\bigg)²
        ≥\frac{1}{Wm} \bigg(\sum_{c ∈ [m]} \sum_{i = 0}^k 2^{-i} q_{c,2^i}\bigg)²\\
        &≥\frac{1}{Wm} \Big(\sum_{i = 0}^k 2^{-i} 2^i \Big)²
        =\frac{1}{Wm} Ω(\log² m) = Ω\Big(\frac{\log²m}{m}\Big).\qedhere
    \end{align*}
\end{proof}

\restateThm{thm:competitive-lower-bound}{\thmcompetitivelowerbound}

\begin{proof}
    Let $\A$ be an algorithm with competitive ratio $c$ for $[\sqrt{m}]²$. Note that $[\sqrt{m}]²$ includes the support of the distribution $Φ$ from \cref{eq:phi}. We have
    \begin{align*}
        \qquad\E_{D \sim Φ}[p_\A(D)]
        &= \frac{1}{W} \sum_{i = 0}^k \sum_{j = 0}^k 2^{-\max(i,j)} p_\A\big((2^i,2^j)\big)\\
        &≤ \frac{c}{W} \sum_{i = 0}^k \sum_{j = 0}^k 2^{-\max(i,j)} p_*\big((2^i,2^j)\big)\\
        &≤ \frac{2c}{W} \sum_{i = 0}^k \sum_{j = i}^k 2^{-j} p_*\big((2^i,2^j)\big)\\
        &\refrel{lem:collision-prob-two-demands}{Lem.}{≤}
        \frac{2c}{W} \sum_{i = 0}^k \sum_{j = i}^k 2^{-j} · O\Big(\frac{2^{i}}{m}\Big)\\
        &≤ O\Big(\frac{c}{m} \sum_{i = 0}^k 2^i \sum_{j ≥ i} 2^{-j}\Big)\\
        &= O\Big(\frac{c}{m} \sum_{i = 0}^k 2^i · 2^{-i+1}\Big) = O\Big(\frac{c \log m}{m}\Big).
    \end{align*}
    Since $\E_{D \sim Φ}[p_\A(D)] = Ω(\frac{\log²m}{m})$ by \cref{lem:competitiveness-lower-bound}, we conclude that $c = Ω(\log m)$, as claimed.
\end{proof}

\section{Competitive Ratio Upper Bound Against Adaptive Adversaries}
\label{sec:competitive-upper-bound-adaptive}

Assume we are using an ID generation algorithm $\A$, the current demand profile is $D = (d₁,…,d_n)$ and no collision has occurred yet. In \cref{lem:adaptive-defeats-cluster} we have seen an example where an adaptive adversary of $\cluster$ observes and exploits useful information about the state of the game namely, which two instances of $\cluster$ produced initial IDs that are close to one another. In contrast, $\random$, $\bins(k)$ and $\bins^*$ exhibit a symmetry between the bins that makes every game state with the same current demand profile (and no collision so far) equivalent. This means that an adaptive adversary cannot make any meaningful observations that could guide its choices, except for its ability to stop immediately when a collision occurs. This suggests that the competitive ratio for these algorithms does not increase (much) if adaptive adversaries are considered. 
In this section, we will prove the following formalization of this intuition:

\restateThm{thm:weak-to-strong-competitive}{\thmweaktostrong}

\def\adv{\mathrm{fol}}

\noindent
The following notions will be useful. A sequence $S = (D₀,D₁, …, D_k)$ of demand profiles is called a \defn{$\D$-demand sequence} if $D₀ = ()$, $D_k ∈ \D$ and $D_{i+1}$ arises from $D_{i}$ either by appending a $1$ or by incrementing one of the entries of $D_i$.
For such an $S$ there is a simple adaptive adversary $\adv(S)$ that follows the demand sequence $S$ as long as no collision has occurred. Should a collision occur when the current demand profile is $D_i$, $\adv(S)$ transitions to a demand profile $\tilde{D}_i ∈ \D$ reachable from $D_i$ that minimizes $p_*(\tilde{D}_i)$, and then immediately stops the game.\footnote{If $\D$ is downward closed, then we always have $D_i ∈ \D$ and hence $\tilde{D}_i = D_i$, meaning $\adv(S)$ always stops immediately upon a collision.} We call $\adv(S)$ a \defn{semi-adaptive} adversary, as its only adaptive decisions depend on whether or not a collision has occurred so far.

In the following we take the adversarial perspective trying to \emph{maximize} the competitive ratio of $Z ∈ \Adv(\D)$ against a fixed $\calA$.

\begin{proof}[Proof of \cref{thm:weak-to-strong-competitive}.]
We fix the algorithm $\calA$ to be either $\bins^*$ or $\bins(k)$ and an arbitrary adaptive adversary $Z \in \Adv(\D)$ with competitive ratio $c'$. We have to show $c' ≤ 4c$.

Below we will prove the following three claims:

\begin{claim}
\label{clm:adaptive-1}
    There exists an adaptive adversary $Z' \in \Adv(\D)$ that achieves competitive ratio $c'$, and that only learns after every step whether a collision has occurred or not. In particular, $Z'$ does not learn the IDs produced by the instances of $\A$.
\end{claim}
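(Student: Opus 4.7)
The plan is to exploit the invariance of $\A$ under bin permutations. Both $\bins(k)$ and $\binsstar$ enjoy the property that, if $\pi$ is any permutation of bins which preserves the partition into chunks (trivial for $\bins(k)$) and the intra-bin ordering, then relabelling the output of $\A$ by $\pi$ yields a distribution identical to that of $\A$. Consequently, conditioned on the collision pattern observed so far and on the adversary's past decisions, the identities of the IDs returned by $\A$ are distributed uniformly among all admissible relabellings consistent with that collision pattern, and therefore carry no information beyond the pattern of coincidences itself.

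I would then construct $Z'$ as an internal simulator of $Z$ that consumes only the collision bits. At each step $Z'$ observes only whether a collision occurred; it then samples a synthetic ID from the conditional distribution described above---which depends only on its own past decisions and collision flags, both of which $Z'$ knows---and feeds the synthetic ID to its internal copy of $Z$. The adversary $Z'$ mimics $Z$'s next move, and stops exactly when its internal $Z$ stops. Since the ``stop after a collision'' transition specified in $\Adv(\D)$ requires only knowing that a collision occurred (to trigger the move to $\tilde D_i$), $Z'$ always finishes with a demand profile in $\D$, so $Z' \in \Adv(\D)$.

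To show that $Z'$ achieves the competitive ratio $c'$, I would use a coupling between the real game played by $Z$ against $\A$ and the simulated game run inside $Z'$. Sample a uniformly random admissible bin permutation $\pi$ and define the synthetic IDs of $Z'$ to be $\pi^{-1}$ applied to the real IDs returned by $\A$. The symmetry of $\A$ guarantees that these synthetic IDs have exactly the conditional distribution that $Z'$ uses, so the internal copy of $Z$ inside $Z'$ and the real $Z$ produce the same sequence of actions step by step. Hence the final demand profile and collision event have the same joint distribution under $Z$ and under $Z'$, and since the competitive ratio depends only on that joint distribution, $Z'$ inherits $Z$'s competitive ratio $c'$.

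The main obstacle I anticipate is formalising the bin-permutation symmetry cleanly for $\binsstar$, whose bins come in varying sizes grouped into chunks. One must define the admissible permutations as exactly those preserving both the chunk structure and the intra-bin ordering rule, and then describe the resulting conditional distribution of IDs precisely enough to verify the coupling. For $\bins(k)$ the same argument applies verbatim, but is strictly simpler since all bins have the same size and live in a single chunk.
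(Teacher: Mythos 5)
Your proposal is correct and follows essentially the same route as the paper: both exploit the invariance of $\bins(k)$ and $\bins^*$ under (chunk-preserving) bin permutations to argue that, conditioned on no collision so far, the returned IDs reveal nothing beyond a uniformly random sequence of distinct bins, and then let $Z'$ internally simulate $Z$ on synthetic IDs coupled to the real ones via a random permutation. The only caveat---shared with the paper's own one-line treatment---is that after a collision the conditional law of the IDs is not determined by the binary flag alone, so one must additionally observe that post-collision behavior cannot hurt the competitive ratio (e.g.\ $Z'$ may stop as early as possible at a profile minimizing $p_*$).
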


\begin{claim}
\label{clm:adaptive-2}
    There exists a $\D$-demand sequence $S$ such that the semi-adaptive adversary $\adv(S)$ achieves competitive ratio at least $c'$.    
\end{claim}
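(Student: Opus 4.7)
The plan is to realize $Z'$ as a convex combination of semi-adaptive adversaries of the form $\adv(S)$, and then extract a single good $S$ via a mediant-inequality argument.

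First, I would invoke Claim~\ref{clm:adaptive-1} to assume that $Z'$ observes only whether a collision has occurred so far. Conditioning on its internal coin tosses, $Z'$ decomposes as a distribution over \emph{deterministic} strategies, each of which is completely described by a pair $(S, R)$: a $\D$-demand sequence $S = (D_0, D_1, \dots, D_k)$ that the adversary follows as long as no collision has occurred, together with a family $R = (\tilde{D}_1, \dots, \tilde{D}_k)$ of ``recovery'' profiles, where $\tilde{D}_i \in \D$ is the final profile that the adversary maneuvers to if the \emph{first} collision occurs while the current demand profile is $D_i$. If no collision ever occurs, the adversary simply stops at $D_k \in \D$.

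Next, I would fix such a deterministic $(S, R)$ and exploit the following asymmetry: the numerator $p_\A$ of the competitive ratio depends only on the randomness of $\A$ and on $S$ (additional requests issued after a collision cannot undo it), while the denominator can be written as $\sum_{i=1}^{k} \delta_i \, p_*(\tilde{D}_i) + r_k \, p_*(D_k)$, where $\delta_i$ is the probability that the first collision occurs at $D_i$ and $r_k$ the probability that no collision ever occurs. This quantity is minimized pointwise by choosing each $\tilde{D}_i$ to be a minimizer of $p_*$ among profiles in $\D$ reachable from $D_i$, which is precisely the recovery rule used by $\adv(S)$. Replacing $R$ with this optimum leaves the numerator unchanged but shrinks the denominator, so it only \emph{increases} the competitive ratio. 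We may therefore assume without loss of generality that $Z'$ is a mixture $\sum_S w_S \, \adv(S)$.

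Finally, since both numerator and denominator are linear in the mixture,
\[
c' = \frac{\sum_S w_S \, p_\A(\adv(S))}{\sum_S w_S \, \E_{D \sim \adv(S)}[p_*(D)]},
\]
and the mediant inequality $\frac{\sum_i w_i a_i}{\sum_i w_i b_i} \leq \max_i \frac{a_i}{b_i}$ immediately yields an $S$ in the support whose semi-adaptive adversary $\adv(S)$ has competitive ratio at least $c'$. I expect the main obstacle to be the asymmetry step: one must verify that $p_\A$ really is independent of the recovery profiles $R$ (so that the pointwise minimization of the denominator is legitimate), and that the minimum in the definition of $\adv(S)$ is attained. The latter holds because $D_k \in \D$ is always reachable from any $D_i$ along $S$, so the set being minimized over is nonempty; this is also where we use that $\adv(S)$ is actually a valid element of $\Adv(\D)$.
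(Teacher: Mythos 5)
Your proposal is correct and follows essentially the same route as the paper's proof: decompose the randomized $Z'$ (which, by Claim~\ref{clm:adaptive-1}, only observes whether a collision has occurred) into a mixture over demand sequences, observe that the optimal post-collision behavior is exactly the recovery rule of $\adv(S)$, and extract a single good $S$ via the mediant inequality. Your write-up merely makes explicit two details the paper treats briefly, namely the conditioning on internal coins and the pointwise minimization of the denominator over recovery profiles.
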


\begin{claim}
\label{clm:adaptive-3}
    There exists a demand profile $D ∈ \D$ (i.e., an oblivious adversary) that achieves competitive ratio at least $c'/4$.
\end{claim}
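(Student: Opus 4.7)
The plan is to exploit two structural features of $\A \in \{\bins^*, \bins(k)\}$: first, each instance's output is a uniformly random collection of bins whose size depends only on that instance's cumulative demand, so the collision event is insensitive both to the order in which requests are interleaved and to any additional requests issued after a collision; and second, these algorithms are monotone in the demand profile. Letting $P_i$ denote the probability that the first collision along $S$ occurs exactly at step $i$ and setting $P := \sum_i P_i$, the first feature yields the clean identity $p_\A(\adv(S)) = p_\A(D_k) = P$, so the competitive ratio $c'$ of $\adv(S)$ rewrites as
\[ c' = \frac{P}{(1-P)\, p_*(D_k) + \sum_i P_i\, p_*(\tilde D_i)}. \]
Every element of $\{D_k, \tilde D_1, \dots, \tilde D_k\}$ lies in $\D$ and is therefore a legitimate candidate oblivious adversary.

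My argument splits on the size of $P$. If $P \leq 1/2$, the denominator is at least $(1-P)\,p_*(D_k) \geq p_*(D_k)/2$, so the oblivious adversary $D := D_k$ immediately achieves $p_\A(D)/p_*(D) \geq c'/2 \geq c'/4$. If instead $P > 1/2$, I would argue by contradiction: assume every $\tilde D_i$ yields ratio strictly below $c'/4$. Since $\tilde D_i$ extends $D_i$ coordinatewise, monotonicity of $\A$ in the demand profile gives $p_\A(\tilde D_i) \geq p_\A(D_i) = f_i$, where $f_i := \sum_{j \leq i} P_j$, so the contradiction hypothesis forces $p_*(\tilde D_i) > 4 f_i/c'$ for every $i$.

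Plugging these into the denominator of $c'$ and using the elementary identity $2 \sum_i P_i f_i = \sum_i P_i^2 + \bigl(\sum_i P_i\bigr)^2 \geq P^2$, I would obtain
\[ (1-P)\,p_*(D_k) + \sum_i P_i\, p_*(\tilde D_i) \;>\; \frac{4}{c'}\sum_i P_i f_i \;\geq\; \frac{2P^2}{c'}, \]
whence $c' < c'/(2P)$, contradicting $P > 1/2$. Hence some $\tilde D_{i^\ast} \in \D$ attains ratio at least $c'/4$. The main delicate step, and the only place where the restriction $\A \in \{\bins^*, \bins(k)\}$ is really used, is justifying both the identity $p_\A(\adv(S)) = p_\A(D_k)$ and the monotonicity $p_\A(\tilde D_i) \geq p_\A(D_i)$; both hinge on the fact that for these algorithms the random bin set of each instance is determined only by that instance's cumulative demand, so that neither the order of requests nor post-collision extensions alter the collision event.
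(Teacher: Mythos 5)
Your proof is correct, and your Case~2 takes a genuinely different route from the paper's. Case~1 ($P \leq 1/2$) coincides with the paper's: both lower-bound the denominator by $(1-p_\A(D_{k-1}))\,p_*(D_k) \geq \tfrac12 p_*(D_k)$ and output $D_k$ with ratio at least $c'/2$. In the complementary case the paper picks the threshold index $i$ at which $p_\A(D_i)$ first crosses $\tfrac12$, bounds $p_\A(D_k) \leq 1 \leq 2p_\A(D_i)$ and $\E_{D \sim \adv(S)}[p_*(D)] \geq \Pr[J \geq i]\,p_*(D_i) \geq \tfrac12 p_*(D_i)$, and outputs the \emph{intermediate} profile $D_i$; you instead run an averaging-by-contradiction argument over the post-collision profiles $\tilde D_1,\dots,\tilde D_k$, combining $p_*(\tilde D_i) > \tfrac{4}{c'}p_\A(\tilde D_i) \geq \tfrac{4}{c'}f_i$ with the identity $2\sum_i P_i f_i = \sum_i P_i^2 + P^2 \geq P^2$ to force $P < \tfrac12$. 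Both arguments yield the same constant $4$, but yours has a concrete advantage: your witness $\tilde D_{i^*}$ lies in $\D$ by construction, whereas the paper's intermediate $D_i$ is only guaranteed to lie in $\D$ when $\D$ is downward closed (an issue the paper relegates to a footnote). Two minor remarks. First, Claim~\ref{clm:adaptive-2} only guarantees that $\adv(S)$ has competitive ratio \emph{at least} $c'$, so your displayed formula for $c'$ should be an inequality $c' \leq P/\big((1-P)p_*(D_k) + \sum_i P_i\,p_*(\tilde D_i)\big)$; the argument is unaffected. Second, the two structural facts you single out as the delicate step --- $p_\A(\adv(S)) = p_\A(D_k)$ and monotonicity of $p_\A$ under coordinatewise extension --- in fact hold for \emph{every} ID-generation algorithm in this model, since each instance independently reveals a prefix of its random permutation, so the interleaving of requests is irrelevant and extending the demand only enlarges the produced sets; the restriction to $\bins^*$ and $\bins(k)$ is what makes Claims~\ref{clm:adaptive-1} and~\ref{clm:adaptive-2} work, not this claim.
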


    The last claim immediately implies what we need: since $\A$ has competitive ratio $c$ for $\D$, the competitive ratio $c'/4$ for $D ∈ \D$ from the claim satisfies $c'/4 ≤ c$.
\end{proof}
We now prove the claims in order.

\begin{proof}[Proof of \Cref{clm:adaptive-1}.]
    Consider the case of $\A = \bins(k)$. For $\A = \bins^*$ a similar argument works.

    Since $\A$ repeatedly selects bins of size $k$ and then produces the IDs within the selected bins in increasing order, the $i$th ID selected by an instance of $\A$ only carries information if $i \equiv 1 \pmod{k}$, and this information is precisely the identity of the bin that has been selected. Let $Z''$ be a variant of $Z$ that is only told these bin identities in a correspondingly modified game. Clearly $Z''$ has the same competitive ratio as $Z$.

    Now consider a further modification of the game where a uniformly random permutation $π$ of the bins is picked beforehand and where any bin identity that would be given to $Z''$ is first permuted by $π$. Since $\bins(k)$ is symmetric under any permutation of the bins and since bin permutations do not affect whether or not a collision has occurred, the random process as observed by $Z''$ does not change and the competitive ratio is still unaffected. Furthermore, the only relevant information obtained by $Z''$ is whether or not a collision has occured:
    Conditioned on a collision having occurred, no further information on the produced IDs makes a difference.
    Conditioned on \emph{no} collision having occurred, the sequence of distinct bin identities that is given to $Z''$ is stochastically independently of the behavior of the instances of $\A$.
    
    We can therefore construct a version $Z'$ of $Z''$ that only learns if a collision has occurred so far and produces a random sequence of distinct bin identities internally, if needed.
\end{proof}

\begin{proof}[Proof of \Cref{clm:adaptive-2}.]
    The adversary $Z'$ from \Cref{clm:adaptive-1} may use randomization in every step. However, it learns no information during the game (except whether a collision has occurred) and can select the demand sequence $S$ that it follows (as long as no collision has occurred) at the start of the game, possibly at random according to some distribution $Δ$. If a collision does occur, then the course of action that maximizes the competitive ratio is to stop as soon as possible with a final demand profile $D ∈ \D$ minimizing $p_*(D)$. In other words, we may assume that $Z'$ behaves like $\adv(S)$ for $S \sim Δ$. Then,
    \begin{align*}
        c' &= \frac{p_\A(Z')}{\displaystyle\E_{D \sim Z'}[p_*(D)]}\\
        &= \frac{\displaystyle\E_{S \sim Δ}[p_\A(\adv(S))]}{\displaystyle\E_{S \sim Δ}\Big[\displaystyle\E_{D \sim \adv(S)}[p_*(D)]\Big]}\\
        &≤ \max_S \frac{p_\A(\adv(S))}{\displaystyle\E_{D \sim \adv(S)}[p_*(D)]}.
    \end{align*}
    In the last step we used that $\frac{a}{b} ≤ \frac{c}{d}$ implies $\frac{a+c}{b+d} ≤ \frac{c}{d}$ for any $a,b,c,d ∈ ℝ^+$ and, more generally,
    \[
        \frac{\sum_{i ∈ [n]} a_i}{\sum_{i ∈ [n]} b_i}
        ≤ \max_{i ∈ [n]} \frac{a_i}{b_i}
        \qquad
        \text{for $a_i, b_i ∈ ℝ^+$ and $i ∈ [n]$.}
    \]
    The demand sequence $S$ maximizing the quotient yields a semi-adaptive adversary $\adv(S)$ with competitive ratio at least $c'$, as desired.
\end{proof}

\begin{proof}[Proof of \Cref{clm:adaptive-3}.]
    Let $S = (D₀,…,D_k)$ be the $\D$-demand sequence from \Cref{clm:adaptive-2}. Note that $p_\A(\adv(S)) = p_\A(D_k)$, that the probability that $\adv(S)$ reaches the last demand profile without causing a collision is $\Pr_{D \sim \adv(S)}[D = D_k] = 1 - p_\A(D_{k-1})$, and that $p_\A(D_i)$ is increasing in $i$. There are two cases:
    \begin{itemize}
        \item Case 1: $p_\A(D_k) ≤ \frac 12$.
        Using that $\adv(S)$ has competitive ratio at least $c'$ we get
        \begin{align*}
            c' &≤ \frac{p_\A(\adv(S))}{\displaystyle\E_{D \sim \adv(S)}[p_*(D)]}\\
            &≤ \frac{p_\A(D_k)}{\displaystyle\Pr_{D \sim \adv(S)}[D = D_k]·p_*(D_k)}\\
            &≤ \frac{p_\A(D_k)}{(1-p_\A(D_{k-1})) · p_*(D_k)}\\
            &≤ \frac{p_\A(D_k)}{(1-p_\A(D_{k})) · p_*(D_k)}\\
            &≤ \frac{p_\A(D_k)}{\frac 12 · p_*(D_k)}
            = 2·\frac{p_\A(D_k)}{p_*(D_k)}.
        \end{align*}
        Thus, $D_k$ has competitive ratio at least $c'/2$.
        \item Case 2: $p_\A(D_k) ≥ \frac 12$.
        Since $p_\A(D_0) = 0$, there exists an index $i$ such that $p_\A(D_{i-1}) ≤ \frac 12$ and $p_\A(D_i) ≥ \frac 12$. Let $J ∈ [k]$ be the random index of the step in which the first collision occurs, if any, and $J = k$ otherwise. Then,
        \begin{align*}
            c' &≤ \frac{p_\A(\adv(S))}{\displaystyle\E_{D \sim \adv(S)}[p_*(D)]}\\
            &≤ \frac{2·p_\A(D_i)}{\sum_{j = 1}^k \Pr[J = j] \cdot p_*(\tilde{D}_j)}\\
            &≤ \frac{2·p_\A(D_i)}{\displaystyle \Pr[J ≥ i] \cdot p_*(D_i)}\\
            &≤ \frac{2·p_\A(D_i)}{(1-p_\A(D_{i-1}))·p_*(D_i)}\\
            &≤ \frac{2·p_\A(D_i)}{\frac 12·p_*(D_i)}
            = 4· \frac{p_\A(D_i)}{p_*(D_i)}.
        \end{align*}
        Thus, $D_i$ has competitive ratio at least $c'/4$.\qedhere
    \end{itemize}
\end{proof}

\section*{Acknowledgments}

We thank the anonymous reviewers of PODS '23 for their valuable comments that greatly improved the presentation of our paper. We gratefully acknowledge support from NSF grants CNS-2118620 and CCF-2106999, as well as Deutsche Forschungsgemeinschaft (DFG) grant 465963632.

\bibliographystyle{plain}
\bibliography{bibliography}

\end{document}